\newcommand{\abs}[1]{\left|#1\right|}
\newcommand{\field}[1]{\mathbb{#1}}
\newcommand{\Z}{\field{Z}}
\newcommand{\cH}{{\cal H}}
\newcommand{\cA}{{\cal A}}
\newcommand{\cC}{{\cal C}}
\newcommand{\cS}{{\cal S}}
\newcommand{\roundb}[1]{\left( #1 \right)}
\newcommand{\code}{\mathcal{C}}
\newtheorem{theorem}{Theorem}
\newtheorem{lemma}{Lemma}
\begin{document}

\bibliographystyle{IEEEtran}

\title{Error-Correction of Multidimensional Bursts}
\author{Tuvi Etzion,~\IEEEmembership{Fellow,~IEEE} and Eitan Yaakobi,~\IEEEmembership{Student Member,~IEEE}
\thanks{T. Etzion is with the Department of Computer Science,
Technion --- Israel Institute of Technology, Haifa 32000, Israel.
(email: etzion@cs.technion.ac.il).}
\thanks{E. Yaakobi was with the Department of Computer Science,
Technion --- Israel Institute of Technology, Haifa 32000, Israel.
He is now with the department of Electrical and Computer
Engineering, University of California, San Diego, La Jolla, CA
92093 (email: eyaakobi@ucsd.edu). This work is part of his M.Sc.
thesis performed at the Technion.}
\thanks{The material in this paper was presented in part in the 2007 IEEE
International Symposium on Information Theory, Nice, France, June
2007.}
\thanks{This work was supported in part by the United States-Israel
Binational Science Foundation (BSF), Jerusalem, Israel, under
Grant 2006097.} }

\maketitle

\begin{abstract}
In this paper we present several methods and constructions to
generate codes for correction of a multidimensional cluster-error.
The goal is to correct a cluster-error whose shape can be a
box-error, a Lee sphere error, or an error with an arbitrary
shape. Our codes have very low redundancy, close to optimal, and
large range of parameters of arrays and clusters. Our main results
are summarized as follows:
\begin{enumerate}
\item A construction of two-dimensional codes capable to correct a
rectangular-error with considerably more flexible parameters from
previously known constructions. Another advantage of this
construction over previously known constructions is that it is
easily generalized for $D$ dimensions.

\item A novel method based on $D$ colorings of the $D$-dimensional
space for constructing $D$-dimensional codes correcting
$D$-dimensional cluster-error of various shapes. This method is
applied efficiently to correct a $D$-dimensional cluster error
with parameters not covered efficiently by the previous
constructions.

\item A transformation of the $D$-dimensional space into another
$D$-dimensional space in a way that a $D$-dimensional Lee sphere
is transformed into a shape located in a $D$-dimensional box of a
relatively small size. This transformation enables us to use the
previous constructions to correct a $D$-dimensional error whose
shape is a $D$-dimensional Lee sphere.

\item Applying the coloring method to correct more efficiently a
two-dimensional error whose shape is a Lee sphere. The
$D$-dimensional case is also discussed.

\item A construction of one-dimensional codes capable to correct a
burst-error of length $b$ in which the number of erroneous
positions is relatively small compared to $b$. This construction
is generalized for $D$-dimensional codes.

\item Applying the construction for correction of a Lee sphere
error and the construction for correction of a cluster-error with
small number of erroneous positions, to correct a $D$-dimensional
arbitrary cluster-error.
\end{enumerate}
\end{abstract}

\begin{keywords}
burst-error, burst-locator code, cluster-correcting code,
coloring, Lee sphere, multidimensional code.
\end{keywords}

\section{Introduction}
\label{sec:introduction}

In current memory devices for advanced storage systems the
information is stored in two or more dimensions. In such systems
errors usually take the form of multidimensional bursts. Usually,
a cluster of errors either will be affected by the position in
which the error event occurred or will be of an arbitrary shape.
But, since an arbitrary cluster-error is hard to correct
efficiently it is common to assume some type of cluster-error (as
any arbitrary cluster is located inside a cluster with a certain
shape, e.g., any two-dimensional cluster is located inside a
rectangle). These types of errors can be of specific shapes like
rectangles or Lee spheres. We will consider these types of errors
as well as arbitrary cluster-errors. The main measure to compute
the efficiency of a cluster-error correcting code is its
redundancy. If we want to design a code which corrects one
multidimensional cluster-error with volume $B$ (of an arbitrary or
a specific shape) then the redundancy of the code $r$ satisfies $r
\geq 2B$. This bound, known as the Reiger bound~\cite{Rei60}, is
attained for binary two-dimensional codes, which correct a
rectangular error, constructed recently by Boyarinov~\cite{Boy06}.
If the volume of the array is $N$ then the redundancy of the code
must also satisfy $r \geq \text{log}_2 N+B-1$ (usually $r- \lceil
\text{log}_2 N \rceil \geq B$). The difference $r- \lceil
\text{log}_2 N \rceil$ will be called the {\it excess redundancy}
of the code~\cite{Abd86,AMT} (even so our definition is slightly
different). Abdel-Ghaffar~\cite{Abd86} constructed a binary
two-dimensional code which corrects a burst with a rectangle shape
for which $r = \lceil \text{log}_2 N \rceil +B$. The code has a
few disadvantages: very limited size, complicated construction,
and there is no obvious generalization for higher dimensions.

Our goal is to design codes which are capable to correct a
cluster-error whose shape is a box, a Lee sphere, or an arbitrary
shape. The method should be able to work on two-dimensional codes
and multidimensional codes, and the parameters of the size of the
codewords and the size of the cluster are as flexible as possible.
There will be a price for our flexibility and our ability to
generalize a two-dimensional construction into multidimensional
construction. This price will be in the excess redundancy. While
the two-dimensional codes which correct a rectangle-error of
Abdel-Ghaffar~\cite{Abd86} have optimal excess redundancy, the
excess redundancy of our codes is only close to optimality.
Moreover, the novel methods enable us to correct a cluster whose
shape is a Lee sphere and an arbitrary cluster, with excess
redundancy close to optimal or very low, depending on the exact
parameters. Previous to our methods the way to correct such a
cluster-error was to use a code which corrects a box-error in
which the cluster-error is located, a method for which the excess
redundancy is far from optimality.

The rest of the paper is organized as follows. In
Section~\ref{sec:known} we briefly survey some of the known
constructions which are essential to understand our results. In
Section~\ref{sec:multi} we present a construction for codes which
correct a multidimensional box-error. The construction is a
generalization and a modification of the construction of
Breitbach, Bossert, Zybalov, and Sidorenko~\cite{BBZS} for
correction of bursts of size $b_1 \times b_2$. Better codes are
constructed when the volume of the $D$-dimensional box-error is an
odd integer. These constructions and the constructions which
follow use auxiliary codes, called component codes, one code for
each dimension. In Section~\ref{sec:coloring} we present a novel
method for correction of a $D$-dimensional cluster. The
construction uses $D$ colorings of the $D$-dimensional space. The
construction of Section~\ref{sec:multi} is a special case of this
construction. The new construction enables us to handle different
burst patterns. In Section~\ref{sec:coloring} we use this method
to handle $D$-dimensional box-error, where the volume of the box
is an even integer. In Section~\ref{sec:Lee} we discuss how to
correct a $D$-dimensional cluster-error whose shape is a Lee
sphere. Two types of constructions are used. The first one uses a
transformation of the $D$-dimensional space into another
$D$-dimensional space in a way that each Lee sphere is transformed
into a shape located inside a reasonably small $D$-dimensional
box, so that we can use the constructions of the previous
sections. The transformation is especially efficient for
two-dimensions. The second construction uses colorings as in
Section~\ref{sec:coloring}. For two-dimensional array the
colorings that we use result in codes with excess redundancy close
to optimality (where our measure for optimality is the lower bound
on the excess redundancy), which improves on the construction
obtained by the two-dimensional transformation. The generalization
for multidimensional Lee sphere errors usually does not make the
same improvement. This is also discussed in Section~\ref{sec:Lee}.
In Section~\ref{sec:limitweight} we show how we handle bursts of
size $b$, where the number of erroneous positions is limited.
First, we present a construction for one-dimensional codes and
afterwards we generalize it into $D$-dimensional codes. In
Section~\ref{sec:arbitrary} we combine the constructions of codes
which are capable to correct Lee sphere error and the construction
capable to correct a burst with a limited number of erroneous
positions for a construction of codes capable to correct arbitrary
bursts. In Section~\ref{sec:parity-check} we describe codes with
the same or slightly better parameters than the parameters of the
codes from the previous sections by using parity-check matrices.
Finally, a conclusion and a list of problems for further research
are given in Section~\ref{sec:conclusion}.

\section{Known Constructions}
\label{sec:known}

Five constructions are important to understand our construction
and their comparison with previous results.

\begin{itemize}
\item
Abdel-Ghaffar, McEliece, Odlyzko, and van Tilborg~\cite{AMOT}
construction of optimum binary cyclic burst-correcting codes.

\item
Abdel-Ghaffar~\cite{Abd88} construction of optimum cyclic
burst-correcting codes over GF($q$).

\item Abdel-Ghaffar construction~\cite{Abd88} of two-dimensional
codes which correct rectangular-error of size $b_1 \times b_2$
with excess redundancy $b_1 b_2$.

\item Breitbach, Bossert, Zybalov, and Sidorenko
construction~\cite{BBZS} of two-dimensional codes for correction
of a ($b_1 \times b_2$)-rectangular-error by using vertical and
horizontal component codes. \item Abdel-Ghaffar, McEliece, and van
Tilborg construction~\cite{AMT} of two-dimensional burst
identification codes, which are used to identify the shape of an
error and together with burst location codes are used for
correction of a two-dimensional cluster.

\end{itemize}

Most of two-dimensional codes known in the literature are designed
to correct a single cluster-error of size $b_1 \times
b_2$~\cite{Abd86,AMT,Boy06,BBZS,Ima73} (only in some recent
papers~\cite{BBV,EtVa,ScEt} it is assumed that the cluster-error
can have an arbitrary shape). Two of these methods are important
in our discussion. Abdel-Ghaffar~\cite{Abd86} gave a construction
of such $n_1 \times n_2$ code with excess redundancy $b_1 b_2$.
One disadvantage of his method is that $n_2$ must be considerably
larger than $n_1$ (with a possible exception when $b_2 \leq 2$,
subject to a list of restrictive conditions), and the existence of
the code depends on series of restricted conditions. The main goal
of his construction was to show that for any given integers $b_1$
and $b_2$ there exists a cyclic $(b_1 \times
b_2)$-cluster-correcting code of some size $n_1 \times n_2$ having
optimal excess redundancy. Therefore, the size of the array was
not a factor in his construction. His construction is a
generalization of the optimum cyclic one-dimensional codes which
correct a single cyclic burst of length $b$~\cite{Abd88,AMOT}.
Over GF($q$) such code has length $n$, redundancy $r$, and it can
correct a single cyclic burst of length $b \geq 1$, where
$n=\frac{q^{r-b+1}-1}{q-1}$. The existence of such codes was
obtained by the following necessary and sufficient conditions.
\begin{theorem}
\label{thm:necessary} If a polynomial $g(x)$ generates an optimum
$b$-burst-correcting code over GF($q$), then it can be factored as
$g(x)=e(x)p(x)$, where $e(x)$ and $p(x)$ satisfy the conditions:
\begin{enumerate}
\item $e(x)$ is a square-free polynomial of degree $b-1$ which is
not divisible by $x$ such that $h_e$ and $m_e$ are relatively
primes to $q-1$, where $h_e$ and $m_e$ are the period of $e(x)$
and the degree of the splitting field of $e(x)$, respectively.

\item $p(x)$ is an irreducible polynomial of degree $m \geq b+1$
and period $\frac{q^m-1}{q-1}$ such that $m$ and $q-1$ are
relatively primes and $m \equiv 0~(\text{mod}~m_e)$.

\end{enumerate}

\end{theorem}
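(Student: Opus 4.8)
The plan is to recast the burst-correcting property as an injectivity statement about syndromes modulo $g(x)$ and then read off the factorization from the multiplicative and additive structure of the roots of $g$. Since $g(x)\mid x^n-1$, reduction modulo $g$ identifies the syndrome of a cyclic burst at position $i$ with pattern $B(x)$, $\deg B<b$, as $x^iB(x)\bmod g(x)$; moreover $x\nmid g$ is immediate because $x^n-1$ has nonzero constant term. The code corrects every burst of length $\le b$ exactly when the map $(i,B)\mapsto x^iB(x)\bmod g$ is injective on such pairs, equivalently no relation $x^kB_1(x)\equiv B_2(x)\pmod{g(x)}$ holds with $0<k<n$, $\deg B_1,\deg B_2<b$, and $(k,B_1,B_2)$ nontrivial. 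The easy bookkeeping comes first: with $r=\deg g$ and $m=r-b+1$ the optimum length is $n=(q^m-1)/(q-1)$, the Reiger bound $r\ge 2b$ is exactly $m\ge b+1$, and the claimed degrees $\deg e=b-1$, $\deg p=m$ must sum to $r$.

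Next I would isolate the \emph{main} factor. For an irreducible factor $p(x)$ of $g$ of degree $m$ with root $\alpha\in GF(q^m)$, the $p$-component of the syndrome is $\alpha^iB(\alpha)$, where $B(\alpha)$ ranges over the GF$(q)$-subspace $V=\spn_{GF(q)}\{1,\alpha,\dots,\alpha^{b-1}\}$. A syndrome collision forces $\alpha^kV\cap V\ne\{0\}$ for some shift $k$, so the maximality of $n$ pins $\alpha$ to have multiplicative order exactly $n=(q^m-1)/(q-1)$, i.e. $p(x)$ has period $(q^m-1)/(q-1)$; since $p$ is irreducible of degree $m\ge b+1>b-1$ it is automatically coprime to the complementary factor $e(x)=g(x)/p(x)$ of degree $b-1$, so the Chinese Remainder Theorem splits the syndrome as an $e$-part and a $p$-part.

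I would then extract the arithmetic conditions, the cleanest being the single-symbol case: a value-$1$ error at position $k$ and a value-$c$ error at position $0$, $c\in GF(q)^*$, have equal syndrome precisely when $\alpha^k=c$, so correction forbids $\alpha^k\in GF(q)^*$ for all $0<k<n$. Because $n=(q^m-1)/(q-1)\equiv m\pmod{q-1}$ gives $\gcd(n,q-1)=\gcd(m,q-1)$, this is equivalent to $\gcd(m,q-1)=1$. The same ``scalars are bursts'' principle applied to the $e$-part of the syndrome forces the period $h_e$ and the splitting-field degree $m_e$ of $e$ to be relatively prime to $q-1$, while square-freeness of $e$ is needed so that its $b-1$ syndrome coordinates remain independent and genuinely separate the bursts that share a $p$-syndrome; finally $m\equiv 0\pmod{m_e}$ is required so that all roots of $e$ lie in $GF(q^m)$ and are compatible with the order-$n$ shift action of $\alpha$ inside a single field.

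The hard part will be the last step: showing that all of these conditions are \emph{forced} by optimality rather than merely sufficient. Each must be shown to fail correction or waste a coset when violated, which I expect to handle by an exact count of the syndromes $\{x^iB(x)\bmod g\}$ matched against $q^r-1$, so that optimality becomes the statement that this count is tight. The square-freeness of $e$ and the precise separation role of the $e$-part (that it resolves exactly those bursts left ambiguous by $\alpha^iB(\alpha)$) are the most delicate: I anticipate the main obstacle is verifying that the $e$-syndrome distinguishes every pair of distinct bursts whose $p$-syndromes coincide, and that any repeated factor or wrong field-degree in $e$ produces either a genuine length-$\le b$ collision or an unused coset, contradicting maximality of $n$.
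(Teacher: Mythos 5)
The first thing to note is that the paper contains no proof of Theorem~\ref{thm:necessary}: it is quoted as known background, with the necessity arguments living in Abdel-Ghaffar, McEliece, Odlyzko, and van Tilborg~\cite{AMOT} and Abdel-Ghaffar~\cite{Abd88}. So your attempt must stand on its own, and as written it has two gaps that are fatal rather than cosmetic. The central one is circularity: your main step opens with ``for an irreducible factor $p(x)$ of $g$ of degree $m$ with root $\alpha$,'' but the existence of an irreducible factor of degree exactly $m=r-b+1$ (equivalently, that the complementary factor $e(x)=g(x)/p(x)$ has degree exactly $b-1$) is precisely the factorization the theorem asserts; nothing in your sketch derives it from the burst-correcting property, and without it the CRT splitting of syndromes into an ``$e$-part'' and a ``$p$-part'' --- on which everything else in your argument rests --- is unavailable. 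The second is the single-error argument: if $\alpha^k=c$ for a nonzero $c\in\text{GF}(q)$, then only the $p$-components of the two syndromes agree, while by your own CRT decomposition the full syndromes also contain $e$-components ($c\bmod e(x)$ versus $x^k\bmod e(x)$), which may well differ. Hence no decoding failure follows from $\alpha^k\in\text{GF}(q)^*$, and neither the period condition on $p(x)$ nor $\gcd(m,q-1)=1$ is forced by this collision.

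The remaining conditions --- square-freeness of $e(x)$, coprimality of $h_e$ and $m_e$ with $q-1$, and $m\equiv 0\ (\text{mod}~m_e)$ --- are justified only by sufficiency-style heuristics (``needed so that the syndrome coordinates remain independent''), and your final paragraph explicitly defers showing they are \emph{forced} by optimality to future work; since the theorem \emph{is} that necessity claim, this defers the theorem itself. Ironically, the one condition with a one-line necessity proof is the one you treat as delicate: the code is cyclic of length $n=(q^m-1)/(q-1)$, so $g(x)\mid x^n-1$, and since $\gcd(n,q)=1$ the polynomial $x^n-1$ is square-free with nonzero constant term, whence $g$ and in particular $e$ are square-free and not divisible by $x$. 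Similarly, your claim that ``maximality of $n$ pins $\alpha$ to have multiplicative order exactly $n$'' is an assertion, not an argument: if the period of $p$ were a proper divisor $n'$ of $n$, bursts at positions $i$ and $i+n'$ would share a $p$-syndrome yet could in principle still be separated by the $e$-part, so ruling this out requires the exact syndrome-counting argument (matching correctable patterns against $q^r$ cosets) that appears in \cite{AMOT} and \cite{Abd88} and that your sketch only gestures at. As it stands the proposal is a reasonable outline of where such a proof would have to go, but it does not constitute one.
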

A monic polynomial over GF($q$) which satisfies condition 1) of
Theorem~\ref{thm:necessary} will be called a {\it $b$-polynomial}.
\begin{theorem}
\label{thm:sufficient} Let $e(x)$ be a $b$-polynomial over
GF($q$). Then, for all sufficiently large $m$ relatively prime to
$q-1$ such that $m \equiv 0~(\text{mod}~m_e)$, where $m_e$ is the
degree of the splitting field of $e(x)$, there exists an
irreducible polynomial $p(x)$ of degree $m$ such that $e(x)p(x)$
generates an optimum $b$-burst correcting code of length
$\frac{q^m-1}{q-1}$.
\end{theorem}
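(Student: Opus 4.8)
The plan is to prove Theorem~\ref{thm:sufficient}, the existence (sufficiency) companion to Theorem~\ref{thm:necessary}. We are given a fixed $b$-polynomial $e(x)$ over GF($q$), and we must produce, for all sufficiently large $m$ that are relatively prime to $q-1$ and divisible by $m_e$, an irreducible polynomial $p(x)$ of degree $m$ such that $g(x)=e(x)p(x)$ generates an optimum $b$-burst-correcting code of length $\frac{q^m-1}{q-1}$. The necessary conditions of Theorem~\ref{thm:necessary} tell us exactly what $p(x)$ should look like: it should be irreducible of degree $m\ge b+1$ with period $\frac{q^m-1}{q-1}$ and $\gcd(m,q-1)=1$. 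So the real content is to show that such a $p(x)$ exists and that the product $e(x)p(x)$ indeed yields an optimum code of the claimed length.

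Let me describe the two main components. First I would establish the existence of an irreducible polynomial of degree $m$ whose period is exactly $\frac{q^m-1}{q-1}$. The period of an irreducible $p(x)$ with a root $\alpha$ is the multiplicative order of $\alpha$ in $\gf{q}{m}^\ast$; thus I want an $\alpha$ of order $\frac{q^m-1}{q-1}$. Such elements exist because $\frac{q^m-1}{q-1}$ divides $q^m-1$, and the cyclic group $\gf{q}{m}^\ast$ of order $q^m-1$ contains elements of every order dividing the group order. I must further check that the minimal polynomial of such an $\alpha$ has full degree $m$ (i.e., $\alpha$ is not contained in a proper subfield); here the condition $\gcd(m,q-1)=1$ is what I expect to do the work, ensuring that the order $\frac{q^m-1}{q-1}$ does not divide $q^d-1$ for any proper divisor $d\mid m$, so $\alpha$ generates $\gf{q}{m}$ over GF($q$). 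This is the step I expect to be the main obstacle: pinning down precisely why $\gcd(m,q-1)=1$ (together with $m$ large) forces the subfield-avoidance, and counting to guarantee at least one suitable $\alpha$ survives for all large admissible $m$.

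Next I would verify that $g(x)=e(x)p(x)$ generates an optimum $b$-burst-correcting code of length $\frac{q^m-1}{q-1}$. The length of the cyclic code is the period of $g(x)$, which is $\operatorname{lcm}(h_e, \frac{q^m-1}{q-1})$ where $h_e$ is the period of $e(x)$; since $h_e$ is relatively prime to $q-1$ and $m\equiv 0\ (\mathrm{mod}\ m_e)$, I would show $h_e$ divides $\frac{q^m-1}{q-1}$, so the lcm collapses to $\frac{q^m-1}{q-1}$, matching the target length. The condition $m\equiv 0\ (\mathrm{mod}\ m_e)$ is exactly what guarantees the splitting field of $e(x)$ embeds in $\gf{q}{m}$, making the roots of $e(x)$ available and the period divisibility work out. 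Then optimality follows by checking that the redundancy $r=\deg g = m+b-1$ together with length $n=\frac{q^{r-b+1}-1}{q-1}=\frac{q^m-1}{q-1}$ meets the one-dimensional bound stated in the excerpt, so the code is burst-correcting with no wasted redundancy; the burst-correcting capability itself I would argue from the structure of the parity-check conditions, namely that distinct correctable burst patterns yield distinct syndromes precisely because $e(x)$ is square-free of degree $b-1$ and $p(x)$ is a large-degree irreducible coprime to $e(x)$.

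Finally, the phrase ``for all sufficiently large $m$'' signals that the existence argument in the first component is asymptotic: for any fixed admissible residue class and coprimality constraint, I must show the count of degree-$m$ irreducibles with the prescribed period eventually becomes positive. I would make this quantitative by a counting argument—the number of elements of order $\frac{q^m-1}{q-1}$ in $\gf{q}{m}^\ast$ is $\varphi\!\left(\frac{q^m-1}{q-1}\right)$, and the number of those lying in proper subfields is bounded by $\sum_{d\mid m,\,d<m} q^d$, which is dominated by $q^{m/2}\cdot m$ and hence negligible compared to $\varphi\!\left(\frac{q^m-1}{q-1}\right)$ once $m$ is large. This comparison, showing the ``good'' elements outnumber the ``bad'' subfield elements for large $m$, is the technical heart that legitimizes the ``sufficiently large $m$'' hypothesis and completes the existence of $p(x)$.
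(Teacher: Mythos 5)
This theorem is not proved in the paper at all: it is quoted verbatim in Section~\ref{sec:known} as a known result of Abdel-Ghaffar, McEliece, Odlyzko and van Tilborg~\cite{AMOT} (and its $q$-ary extension~\cite{Abd88}), so there is no in-paper argument to compare against. Judged on its own merits, your proposal has a genuine gap at the step that constitutes essentially the entire content of the theorem.

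The routine parts of your sketch are fine: the length computation ($h_e\mid q^{m_e}-1\mid q^m-1$ together with $\gcd(h_e,q-1)=1$ gives $h_e\mid\frac{q^m-1}{q-1}$, so the period of $e(x)p(x)$ is $\frac{q^m-1}{q-1}$), and the redundancy bookkeeping $r=m+b-1$. But you have placed the asymptotic work in the wrong component and waved away the hard one. The existence of an irreducible $p(x)$ of degree $m$ and period $\frac{q^m-1}{q-1}$ needs no ``sufficiently large $m$'' and no subfield-counting: any element of order $\frac{q^m-1}{q-1}=q^{m-1}+\cdots+1>q^{m-1}-1\geq q^d-1$ cannot have order dividing $q^d-1$ for any $d<m$, so it automatically lies in no proper subfield and its minimal polynomial has degree exactly $m$. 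The genuinely hard claim is that $e(x)p(x)$ corrects all cyclic bursts of length $b$, i.e., that no two distinct bursts $x^ia(x)$ and $x^jc(x)$ with $\deg a,\deg c<b$ are congruent modulo $g(x)$ in $\mathrm{GF}(q)[x]/(x^n-1)$. You assert this follows ``because $e(x)$ is square-free of degree $b-1$ and $p(x)$ is a large-degree irreducible coprime to $e(x)$.'' That cannot be right: if those properties sufficed, the theorem would hold for \emph{every} irreducible $p(x)$ of the prescribed degree and period and for every admissible $m$, making both the existential quantifier over $p(x)$ and the ``sufficiently large $m$'' hypothesis vacuous. In fact not every such $p(x)$ works. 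The actual proof in~\cite{AMOT} translates each potential collision into a condition of the form $\beta^{i-j}=c(\beta)/a(\beta)$ on a root $\beta$ of $p(x)$, bounds the number of elements of order $\frac{q^m-1}{q-1}$ that satisfy at least one such ``bad'' condition, and shows this count is eventually smaller than $\varphi\!\left(\frac{q^m-1}{q-1}\right)/m$ times the number of conjugacy classes available — that sieve over bad roots is where ``sufficiently large $m$'' is consumed, and it is entirely absent from your sketch. The counting you do perform (good elements versus proper-subfield elements) addresses a non-issue and does not substitute for it.
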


\noindent {\bf Remark:} If the polynomial $p(x)$ in
Theorems~\ref{thm:necessary} and~\ref{thm:sufficient} is binary
then $p(x)$ in the Theorems is a primitive polynomial.

The second method is due to Breitbach, Bossert, Zybalov, and
Sidorenko~\cite{BBZS}, who gave three constructions of binary
two-dimensional codes of size $n_1 \times n_2$ which correct a
rectangular-error of size $b_1 \times b_2$. Their goal in
presenting these constructions was not to obtain low excess
redundancy, which is one of the goals in our constructions, but to
present new constructions of codes with relatively large array
size and redundancy close to the Reiger's bound. We will use ideas
from one of the constructions which will be called Construction
BBZS.


A codeword $\{ c_{ij} \}$ of the construction has size $n_1\times
n_2 = 2^{b_2}\times 2^{b_1}$ with $4b_1b_2$ redundancy bits
located in positions $\{(i,j)\ : \ 0\leq i \leq 2b_1-1, n_2-b_2
\leq j \leq n_2-1 \} \cup \{(i,j)\ : \ n_1-b_1 \leq i \leq n_1-1,
0\leq j \leq 2b_2-1 \}$ (see Fig.~\ref{constructionBBZS}). These
bits are set initially to be zeroes. Two temporary {\it component
codes} are being used, a vertical code and an horizontal code (see
Fig.~\ref{constructionBBZS}). We will describe the construction of
the vertical code. We note an earlier construction~\cite{RoSe}
which use similar component codes.

For each row $i=2b_1,\ldots,n_1-1$, $b_2$ parity check bits are
generated. $p_{i \ell}$, $\ell=0,1,\ldots,b_2-1$ is computed as
\begin{equation}\label{eq:parity}
    p_{i \ell} = \sum_{j=\ell,\ell+b_2,\ell+2b_2,\ldots,\\ j<n_2}c_{ij}.
\end{equation}
The parity bits $p_{i \ell}$, $\ell=0,1,\ldots,b_2-1$ generate
afterward a symbol
$\underline{p}_i=(p_{i0},p_{i1},\ldots,p_{ib_2-1})$ from the
extension field GF($2^{b_2}$). The symbols $\underline{p}_i$,
$i=2b_1,\ldots,n_1-1$ are considered as the information symbols of
a Reed-Solomon (RS) code of length $n_1$, dimension $n_1 - 2b_1$,
and minimum distance $d = 2b_1 + 1$. By the encoding procedure of
the RS code we obtain $2b_1$ redundancy symbols $\underline{p}_i$,
$i=0,\ldots,2b_1-1$, and the $2b_1 b_2$ upper right corner
redundancy bits of the array are computed in a way that
(\ref{eq:parity}) will hold for $i=0,1,\ldots,2b_1-1$ and
$\ell=0,1,\ldots,b_2-1$. The encoding procedure of the horizontal
code is done in the same manner, where all the $4b_1 b_2$
redundancy bits of the array are assumed to be zeroes (it is
possible to encode also with their new computed values, but we
want to follow the construction exactly as in~\cite{BBZS}).

In the decoding procedure each row generates $b_2$ parity bits
according to (\ref{eq:parity}) such that a word of length $n_1$
over GF($2^{b_2}$) is received (the redundancy bits of the
horizontal code are assumed to be zeroes). Assuming that the error
occurred in the array can be confined inside a rectangle of size
$b_1\times b_2$. The generated word, of the vertical code, has at
most $b_1$ erroneous symbols, which can be corrected by the
decoding procedure of the RS code. The same process is implemented
for the horizontal code. The positions of the erroneous elements
in the vertical codeword induce the rows in which errors occurred
in the array. The positions of the erroneous elements in the
horizontal codeword induce the columns in which errors occurred in
the array. Hence, we locate the positions of the $b_1 \times b_2$
cluster-error in the array. The shape of the cluster, up to $b_2$
horizontal cyclic shifts, is found by the vertical code. The shape
of the cluster up to $b_1$ vertical cyclic shifts, is found by the
horizontal code. As we know the location of the cluster, we can
use one of the component codes to identify the exact shape of the
error and to correct it.\\
{\bf Remark:} Note that the vertical code cannot find errors
inside the lower left corner redundancy bits. The horizontal code
cannot find errors inside the upper right corner redundancy bits.
But, these facts does not affect the decoding procedure, i.e., the
vertical code is able to know the erroneous rows even if some the
erroneous bits are the lower left redundancy bits.

\begin{figure}[htbp]
\begin{center}
\setlength{\unitlength}{13bp}
\begin{picture}(-1,13)(8,-2.5)
\linethickness{0.2mm} \resizebox{!}{45mm}{
\put(0,0){\line(1,0){16}} \put(0,16){\line(1,0){16}}
\put(0,0){\line(0,1){16}} \put(16,0){\line(0,1){16}}
\linethickness{0.01mm}

\put(17,0){\line(0,1){16}} \put(21,0){\line(0,1){16}}
\put(17,0){\line(1,0){4}} \put(17,16){\line(1,0){4}}
\put(17,10){\line(1,0){4}}

\put(0,-1){\line(1,0){16}} \put(0,-4){\line(1,0){16}}
\put(0,-1){\line(0,-1){3}} \put(16,-1){\line(0,-1){3}}
\put(8,-1){\line(0,-1){3}} 

\put(0,7){\line(1,0){16}} \put(0,8){\line(1,0){16}}
\put(17,7){\line(1,0){4}} \put(17,8){\line(1,0){4}}
\put(-0.4,7.25){$i$}

\put(17.1,7.35){\small{$p_{i0}\cdots p_{ib_2-1}$}}
\put(21.3,7.25){\small{$\underline{p}_i$}}

\put(0,0){\line(1,0){8}} \put(0,3){\line(1,0){8}}
\put(0,0){\line(0,1){3}} \put(8,0){\line(0,1){3}}

\put(2,12){information symbols} \put(2,11){of a codeword}

\put(1.5,2){\small{redundancy symbols}} \put(1.5,1){\small{for
horizontal code}}

\put(12,10){\line(1,0){4}} \put(12,16){\line(1,0){4}}
\put(12,10){\line(0,1){6}} \put(16,10){\line(0,1){6}}

\put(12.5,14){\small{redundancy}} \put(12.5,13){\small{symbols}}
\put(12.5,12){\small{for vertical}} \put(12.5,11){\small{code}}

\put(17.2,14){\small{redundancy}} \put(17.2,13){\small{symbols}}
\put(17.2,12){\small{of RS code}}

\put(17.2,5){\small{information}} \put(17.2,4){\small{symbols}}
\put(17.2,3){\small{of RS code}}

\put(1.5,-2){\small{redundancy symbols}} \put(1.5,-3){\small{of RS
code}}

\put(9,-2){\small{information symbols}} \put(9,-3){\small{of RS
code}}

}

\end{picture}

\caption{Construction BBZS}%

\label{constructionBBZS}

\end{center}
\end{figure}
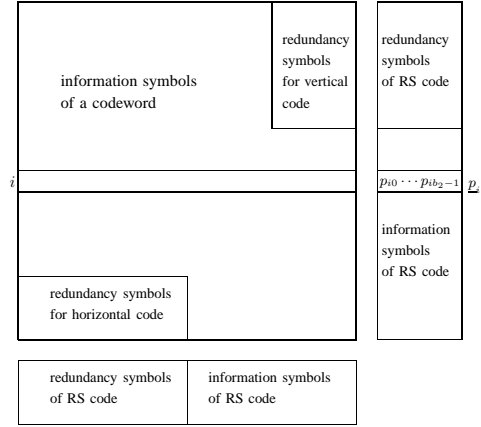

There is no obvious generalization to the construction of
Abdel-Ghaffar~\cite{Abd86} for multidimensional codes, while
immediate generalizations of Construction BBZS cannot support good
redundancy or excess redundancy. One simple way to generalize this
construction is to use the optimum burst-correcting codes
of~\cite{Abd88,AMOT} instead of the RS codes. The vertical
component code over GF($2^{b_2}$) has length $\frac{(2^{b_2})^{r_1
-b_1+1}-1}{2^{b_2}-1}$, redundancy $r_1$, and it can correct a
burst of length $b_1$. The horizontal component code over
GF($2^{b_1}$) has length $\frac{(2^{b_1})^{r_2
-b_21+1}-1}{2^{b_1}-1}$, redundancy $r_2$, and it can correct a
burst of length $b_2$. Instead of $4b_1 b_2$ redundancy bits we
will use $r_1 b_2$ redundancy bits for the vertical code and $b_1
r_2$ redundancy bits for the horizontal code. The excess
redundancy of this construction is $2b_1 b_2-1$ and the excess
redundancy of its generalization for $D$ dimensions is $DB-1$,
where $B$ is the volume of the $D$-dimensional box error.

Further improvements of this construction are presented in the
next section. Henceforth we assume that if a $D$-dimensional code
is discussed then $D$ is a constant. Furthermore, we assume $b_i
>1$ for $1 \leq i \leq D$; this assumption can be made since if
for some $j$, $b_j=1$ then the cluster can be corrected as a
$(D-1)$-dimensional cluster in a $D$-dimensional array.

\section{Construction for Multidimensional Arrays}
\label{sec:multi}

In this section we present our first idea for construction of
multidimensional codes capable to correct a box-error of size $b_1
\times b_2 \times \cdots \times b_D$. First, we will present the
two-dimensional version of the construction. We combine
Construction BBZS with the constructions of Abdel-Ghaffar et
al.~\cite{AMOT} and Abdel-Ghaffar~\cite{Abd88} to obtain codes
with variety of parameters. The redundancy of the construction is
kept relatively small as our horizontal code will find only the
location of the error and not its shape. This idea is the first
key of all our constructions. The second idea to reduce the
redundancy is to use a binary horizontal code instead of a code
over GF($2^{b_1}$). Finally, the structure of the construction
makes it possible to generalize it to any dimension. The
generalization is relatively quite simple, with low redundancy,
and can be applied on a large range of parameters. One
disadvantage is that the construction is defined for a box-error
whose volume is an odd integer. To apply the construction on a
box-error whose volume is an even integer, we have to increase the
box-error artificially such that its volume will be an odd integer
and the real box error will be located inside the artificial box
error. This will cost us extra unnecessary redundancy. In the next
section we will solve this problem by giving a novel construction
for correction of a box-error whose volume is an even integer.

\subsection{Two-dimensional codes}

The vertical component code of  Construction BBZS finds the rows
in which the burst occurred and the shape of the cluster up to a
cyclic permutation of the columns. Hence, the work done by the
horizontal code to find the shape of the cluster is redundant.
Therefore, we want to find an horizontal component code that will
determine only the first column of the cluster. More explicitly, a
burst $e=(e_0,e_1,\ldots,e_{b_2-1})$, where $e_i\in$
GF($2^{b_1}$), for $0\leq i \leq b_2-1$ found by the vertical
code, can start at any column $0\leq i \leq n_2-b_2$ (See
Fig.~\ref{fig:cluster-error}). However, if the first column of the
cluster is $i$, then the cluster occurred is
$e'=(e_{i_0},e_{i_0+1},\ldots,e_{i_0+b_2-1})$ where $i_0\equiv
i(\text{mod}~b_2)$, and indices are taken modulo $b_2$.

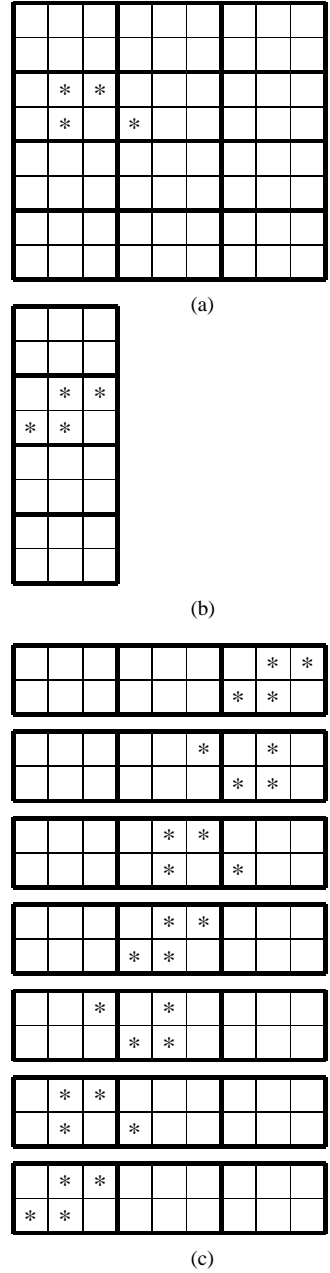
\begin{figure}
\begin{center}
\subfigure[]{ \setlength{\unitlength}{0.46cm}
\begin{picture}(-5.5,8)(8,0)
\linethickness{0.075mm} \multiput(0,0)(1,0){10} {\line(0,1){8}}
\multiput(0,0)(0,1){9} {\line(1,0){9}}%
\linethickness{0.5mm} \multiput(0,0)(0,2){5} {\line(1,0){9}}
\multiput(0,0)(3,0){4} {\line(0,1){8}} %
\put(1.3,5.15){*} \put(2.3,5.15){*} \put(1.3,4.15){*}
\put(3.3,4.15){*}
\end{picture}
\label{error1} }

\subfigure[]{ \setlength{\unitlength}{0.46cm}
\begin{picture}(-5.5,7)(8,0) \linethickness{0.075mm}
\multiput(0,0)(1,0){4} {\line(0,1){8}}
\multiput(0,0)(0,1){9} {\line(1,0){3}}%
\linethickness{0.5mm} \multiput(0,0)(0,2){5} {\line(1,0){3}}
\multiput(0,0)(3,0){2} {\line(0,1){8}}%
\put(1.3,5.15){*} \put(2.3,5.15){*} \put(1.3,4.15){*}
\put(0.3,4.15){*}
\end{picture}
\label{error2}}

\subfigure[]{ \setlength{\unitlength}{0.46cm}
\begin{picture}(-5.5,17)(8,0)
\linethickness{0.075mm} \multiput(0,0)(1,0){10} {\line(0,1){2}}
\multiput(0,0)(0,1){3} {\line(1,0){9}}%
\linethickness{0.5mm} \multiput(0,0)(0,2){2} {\line(1,0){9}}
\multiput(0,0)(3,0){4} {\line(0,1){2}}%
\put(1.3,1.15){*} \put(2.3,1.15){*} \put(1.3,0.15){*}
\put(0.3,0.15){*} %
\linethickness{0.075mm} \multiput(0,2.5)(1,0){10} {\line(0,1){2}}
\multiput(0,2.5)(0,1){3} {\line(1,0){9}}%
\linethickness{0.5mm} \multiput(0,2.5)(0,2){2} {\line(1,0){9}}
\multiput(0,2.5)(3,0){4} {\line(0,1){2}}%
\put(3.3,2.65){*} \put(2.3,3.65){*} \put(1.3,3.65){*}
\put(1.3,2.65){*}%
\linethickness{0.075mm} \multiput(0,5)(1,0){10} {\line(0,1){2}}
\multiput(0,5)(0,1){3} {\line(1,0){9}}%
\linethickness{0.5mm} \multiput(0,5)(0,2){2} {\line(1,0){9}}
\multiput(0,5)(3,0){4} {\line(0,1){2}}%
\put(3.3,5.15){*} \put(2.3,6.15){*} \put(4.3,6.15){*}
\put(4.3,5.15){*}%
\linethickness{0.075mm} \multiput(0,7.5)(1,0){10}  {\line(0,1){2}}
\multiput(0,7.5)(0,1){3} {\line(1,0){9}}%
\linethickness{0.5mm} \multiput(0,7.5)(0,2){2} {\line(1,0){9}}
\multiput(0,7.5)(3,0){4} {\line(0,1){2}}%
\put(4.3,8.65){*} \put(5.3,8.65){*} \put(4.3,7.65){*}
\put(3.3,7.65){*}%
\linethickness{0.075mm} \multiput(0,10)(1,0){10} {\line(0,1){2}}
\multiput(0,10)(0,1){3} {\line(1,0){9}}%
\linethickness{0.5mm} \multiput(0,10)(0,2){2} {\line(1,0){9}}
\multiput(0,10)(3,0){4} {\line(0,1){2}}%
\put(6.3,10.15){*} \put(5.3,11.15){*} \put(4.3,11.15){*}
\put(4.3,10.15){*}%
\linethickness{0.075mm} \multiput(0,12.5)(1,0){10} {\line(0,1){2}}
\multiput(0,12.5)(0,1){3} {\line(1,0){9}}%
\linethickness{0.5mm} \multiput(0,12.5)(0,2){2} {\line(1,0){9}}
\multiput(0,12.5)(3,0){4} {\line(0,1){2}}%
\put(6.3,12.65){*} \put(5.3,13.65){*} \put(7.3,13.65){*}
\put(7.3,12.65){*}%
\linethickness{0.075mm} \multiput(0,15)(1,0){10} {\line(0,1){2}}
\multiput(0,15)(0,1){3} {\line(1,0){9}}%
\linethickness{0.5mm} \multiput(0,15)(0,2){2} {\line(1,0){9}}
\multiput(0,15)(3,0){4} {\line(0,1){2}}%
\put(7.3,16.15){*} \put(8.3,16.15){*} \put(7.3,15.15){*}
\put(6.3,15.15){*}
\end{picture}
\label{error3} }

\caption{Figure \ref{error1} describes a cluster-error in a
($2\times 3$)-burst-correcting code. The cluster found by the
vertical code is demonstrated in figure \ref{error2}, and figure
\ref{error3} shows the possible clusters considered by the
horizontal code.}

\label{fig:cluster-error}
\end{center}
\end{figure}

Our new construction, in which the horizontal code only locates
the first column of the cluster, is based on two lemmas. The first
one is proved here only for the binary case.

\begin{lemma}
\label{lem:b-polynomial_gf_2} If $e_2(x)= 1+x+x^2+\cdots +
x^{b-1}$ and $b$ is an odd integer then $e_2(x)$ is a
$b$-polynomial over GF(2).
\end{lemma}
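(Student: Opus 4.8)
The plan is to verify directly the three requirements in condition~1) of Theorem~\ref{thm:necessary} for the case $q=2$. The decisive simplification is that when $q=2$ we have $q-1=1$, so the requirements that the period $h_e$ and the splitting-field degree $m_e$ be relatively prime to $q-1$ hold automatically (every positive integer is coprime to $1$). Thus I only need to check that $e_2(x)$ is monic of degree $b-1$, is not divisible by $x$, and is square-free.

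The first two properties are immediate: $e_2(x)=1+x+\cdots+x^{b-1}$ is monic of degree $b-1$, and its constant term is $1$, so $x\nmid e_2(x)$. The heart of the argument is square-freeness, and here I would use the identity $e_2(x)=\frac{x^b-1}{x-1}$ over GF(2). It then suffices to show that $x^b-1$ is square-free, since any divisor of a square-free polynomial is again square-free (if $p^2$ divided $e_2$ for some irreducible $p$, then $p^2$ would divide $x^b-1$ as well).

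To prove that $x^b-1$ is square-free over GF(2) I would apply the formal-derivative criterion: a polynomial $f$ with $f'\neq 0$ is square-free if and only if $\gcd(f,f')=1$. Here $f(x)=x^b-1$ has derivative $f'(x)=b\,x^{b-1}$; because $b$ is odd we have $b\equiv 1\pmod 2$, so $f'(x)=x^{b-1}\neq 0$. Since the constant term of $x^b-1$ is $1$, we have $x\nmid x^b-1$, and therefore $\gcd(x^b-1,\,x^{b-1})=1$. Hence $x^b-1$ is square-free, and consequently so is its divisor $e_2(x)$, completing the verification of condition~1).

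The main, and essentially only, obstacle is recognizing that the arithmetic conditions on $h_e$ and $m_e$ collapse entirely in the binary case; once that observation is made, the remaining work is the standard characteristic-based square-freeness of $x^b-1$, which is exactly the point at which the hypothesis that $b$ is odd (equivalently $\gcd(b,2)=1$) is used. No genuine difficulty remains beyond these routine checks.
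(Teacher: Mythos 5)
Your proof is correct and follows essentially the same route as the paper's: both establish that $x\nmid e_2(x)$ and deduce square-freeness of $e_2(x)$ from that of $x^b-1$ via the derivative criterion, where the oddness of $b$ ensures the derivative is $x^{b-1}$ and $\gcd(x^b-1,x^{b-1})=1$. Your additional remark that the coprimality conditions on $h_e$ and $m_e$ are vacuous when $q=2$ is a point the paper leaves implicit, but it is the same argument.
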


\begin{proof}
Clearly, $e_2(x)$ is not divisible by $x$. The derivative of
$x^b-1$ over GF(2) is $x^{b-1}$, and since
g.c.d.$(x^b-1,x^{b-1})=1$, it follows that $x^b-1$ is a
square-free polynomial and hence $e_2(x)$ is also square-free.
Therefore, by Theorem~\ref{thm:necessary}, $e_2(x)$ is a
$b$-polynomial over GF(2).
\end{proof}

There is an alternative more general version of
Lemma~\ref{lem:b-polynomial_gf_2}.

\begin{lemma}
\label{lem:b-polynomial_gf_q} Let $e_2(x)$ be the polynomial
$e_2(x)= 1+x+x^2+\cdots + x^{b_2-1}$ over GF($2^{b_1}$), where
$b_1$ and $b_2$ are positive integers. Assume that the following
conditions hold:
\begin{enumerate}
    \item $\gcd(b_2,2)=1$,
    \item $\gcd(b_2,2^{b_1}-1)=1$,
    \item $\gcd(\phi (b_2),2^{b_1}-1)=1$.
\end{enumerate}
Then, $e_2(x)$ is a $b_2$-polynomial over GF($2^{b_1}$).
\end{lemma}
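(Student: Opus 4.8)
My plan is to verify directly the three requirements of condition 1) of Theorem~\ref{thm:necessary} with $q=2^{b_1}$, showing that each of the three hypotheses supplies exactly one of them. Two of the requirements are essentially bookkeeping: $e_2(x)$ has degree $b_2-1$ and $e_2(0)=1\neq 0$, so it is monic of the correct degree and not divisible by $x$. The substance lies in establishing square-freeness, in computing the period $h_{e_2}$, and in computing the splitting-field degree $m_{e_2}$, and then matching these to hypotheses 1), 2), and 3) respectively.

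For square-freeness I would work in characteristic $2$ and use $e_2(x)=(x^{b_2}-1)/(x-1)$, exactly as in the binary case of Lemma~\ref{lem:b-polynomial_gf_2}. By hypothesis 1) the integer $b_2$ is odd, hence nonzero in GF($2^{b_1}$), so the formal derivative of $x^{b_2}-1$ is the nonzero polynomial $b_2 x^{b_2-1}$. Since $x^{b_2}-1$ has nonzero constant term it is coprime to $x^{b_2-1}$, so $\gcd(x^{b_2}-1,(x^{b_2}-1)')=1$ and $x^{b_2}-1$ is square-free; therefore its divisor $e_2(x)$ is square-free as well.

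For the period and the splitting field I would exploit that hypothesis 1) also gives $\gcd(b_2,q)=1$, so the $b_2$-th roots of unity are distinct and form a cyclic group of order $b_2$, whose nontrivial elements are precisely the roots of $e_2(x)$. Because $b_2>1$, a generator of this group (a primitive $b_2$-th root of unity) has order $b_2$ and is itself a root, so the period $h_{e_2}$, being the least common multiple of the orders of the roots, equals $b_2$; hypothesis 2) then yields $\gcd(h_{e_2},q-1)=\gcd(b_2,2^{b_1}-1)=1$. The splitting field of $e_2(x)$ over GF($q$) coincides with that of $x^{b_2}-1$ (the extra root $1$ already lies in GF($q$)), which is GF($q^m$) for $m$ the multiplicative order of $q$ modulo $b_2$; hence $m_{e_2}=\mathrm{ord}_{b_2}(q)$.

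The step I expect to be the crux is connecting $m_{e_2}$ to hypothesis 3), since that hypothesis constrains $\phi(b_2)$ rather than $m_{e_2}$ itself. The resolution is the observation that $q$ is a unit modulo $b_2$ (again by hypothesis 1)), so by Euler's theorem its order divides the group order, giving $m_{e_2}=\mathrm{ord}_{b_2}(q)\mid \phi(b_2)$. Coprimality to $2^{b_1}-1$ is inherited by divisors, so $\gcd(\phi(b_2),2^{b_1}-1)=1$ forces $\gcd(m_{e_2},q-1)=1$. Having verified that $e_2(x)$ is square-free of degree $b_2-1$, not divisible by $x$, with both $h_{e_2}$ and $m_{e_2}$ coprime to $q-1$, all of condition 1) of Theorem~\ref{thm:necessary} holds, and $e_2(x)$ is a $b_2$-polynomial over GF($2^{b_1}$).
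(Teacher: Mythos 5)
The paper itself omits the proof of this lemma (deferring to~\cite{Yaa07}), so there is no in-paper argument to compare against; the closest reference point is the proof of Lemma~\ref{lem:b-polynomial_gf_2}, whose square-freeness argument via the derivative of $x^{b_2}-1$ you reproduce verbatim for odd $b_2$ in characteristic $2$. Your proof is correct and complete: the identifications $h_{e_2}=b_2$ (a primitive $b_2$-th root of unity is a root of $e_2$ once $b_2>1$, an assumption the paper makes globally) and $m_{e_2}=\mathrm{ord}_{b_2}(q)\mid\phi(b_2)$ correctly match hypotheses 2) and 3) to the two coprimality requirements of condition 1) of Theorem~\ref{thm:necessary}, so this supplies exactly the verification the paper chose to omit.
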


We omit the proof of the lemma (see~\cite{Yaa07}) as the
construction which uses the lemma, and is described in this
subsection, has inferior redundancy than the one described in the
next subsection and uses Lemma~\ref{lem:b-polynomial_gf_2}. We
will compare these redundancies in the sequel. The code over
GF($2^{b_1})$ is described since it is a bridging step to
understand the one over GF(2).

By Theorem~\ref{thm:sufficient}, for the $b_2$-polynomial $e_2(x)=
1+x+x^2+\cdots + x^{b_2-1}$, over GF($2^{b_1}$), there exists an
irreducible polynomial $p_2(x)$ of degree $m_2=r_2-b_2+1$ such
that $e_2(x)p_2(x)$ generates an optimum $b_2$-burst-correcting
code $\cC^*$ of length $n_2=\frac{(2^{b_1})^{m_2}-1}{2^{b_1}-1}$
and redundancy $r_2$. Let
$$\cC^*=\{f(x)\in
\text{GF}(2^{b_1})[x]  :  e_2(x)p_2(x)|f(x), \ \deg f(x) < n_2\},
$$
and let
$$
\code_2=\{f(x)\in \text{GF}(2^{b_1})[x] : p_2(x)|f(x), \ \deg f(x)
< n_2\}.
$$
The code $\code_2$ is also of length $n_2$ and has $m_2$
redundancy symbols over GF($2^{b_1}$). We will show now that
$\code_2$ can serve as the horizontal component code, i.e., given
that the cluster occurred up to a cyclic permutation is
$e=(e_0,e_1,\ldots,e_{b_2-1})$, where $e_i\in \text{GF}(2^{b_1})$,
for $0\leq i \leq b_2-1$, it will be possible to determine the
first column $i$, $0\leq i \leq n_2-b_2$, of the cluster with
$\cC_2$.

\begin{lemma}
\label{lem:locator} Let $e=(e_0,e_1,\ldots,e_{b_2-1})$ be a given
cluster, up to a cyclic permutation, which occurred in a
transmitted codeword and found by the vertical component code.
Then, the horizontal component code $\code_2$ can determine the
first column of the given burst.
\end{lemma}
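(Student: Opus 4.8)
The plan is to work with the syndrome of the received word relative to the component code $\code_2$. Since $\code_2$ is defined by the divisibility $p_2(x)\mid f(x)$, this syndrome is the residue $E(x)\bmod p_2(x)$, equivalently the evaluation $E(\alpha)$ of the error polynomial $E(x)$ at a root $\alpha$ of the irreducible polynomial $p_2(x)$; recall that $\alpha$ has order equal to the period $n_2$ of $p_2(x)$, and that $\cC^*\subseteq\code_2$ because $p_2(x)\mid e_2(x)p_2(x)$. The shape $e=(e_0,\ldots,e_{b_2-1})$ is already known from the vertical code, and if the first column of the cluster were $i$, then the error would be the vector $E_i(x)=\sum_{j=i}^{i+b_2-1}e_{j\bmod b_2}\,x^{j}$, with zeros elsewhere. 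Thus it suffices to show that two distinct admissible first columns $i\neq i'$ (with $0\le i,i'\le n_2-b_2$) produce distinct syndromes $E_i(\alpha)\neq E_{i'}(\alpha)$; the decoder then recovers $i$ by testing the $b_2$ cyclic shifts of $e$ and, for each, reading off the exponent $i$ from $\alpha^{i}$ by the injectivity of $t\mapsto\alpha^{t}$ on $\{0,\ldots,n_2-1\}$, which holds since $\mathrm{ord}(\alpha)=n_2$.

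The key step, where the hypothesis that the shape is already known pays off, is to observe that the residue of $E_i(x)$ modulo $e_2(x)$ does not depend on the location $i$. Indeed $e_2(x)\mid x^{b_2}-1$, so in the quotient ring $\mathrm{GF}(2^{b_1})[x]/(e_2(x))$ we have $x^{b_2}=1$ and hence $x^{j}\equiv x^{j\bmod b_2}$. Reducing $E_i(x)$ modulo $e_2(x)$ and using that a window of $b_2$ consecutive integers hits every residue class modulo $b_2$ exactly once, each term becomes $e_\ell x^{\ell}$ for a distinct $\ell\in\{0,\ldots,b_2-1\}$, so that $E_i(x)\equiv\hat e(x)\pmod{e_2(x)}$, where $\hat e(x)=\sum_{\ell=0}^{b_2-1}e_\ell x^{\ell}$. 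This residue is identical for every admissible $i$.

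With this in hand I would finish by the Chinese Remainder Theorem together with the burst-correcting property of $\cC^*$. First, $\gcd(e_2(x),p_2(x))=1$: since $p_2(x)$ is irreducible of degree $m_2=r_2-b_2+1\ge b_2+1>\deg e_2(x)$, it cannot divide $e_2(x)$. Now suppose $E_i(\alpha)=E_{i'}(\alpha)$, that is $E_i(x)\equiv E_{i'}(x)\pmod{p_2(x)}$; by the previous paragraph we also have $E_i(x)\equiv E_{i'}(x)\pmod{e_2(x)}$, so by coprimality $E_i(x)\equiv E_{i'}(x)\pmod{e_2(x)p_2(x)}$, i.e.\ $E_i(x)-E_{i'}(x)\in\cC^*$. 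But $E_i(x)$ and $E_{i'}(x)$ are each bursts of length at most $b_2$ (and, since $i,i'\le n_2-b_2$, they do not wrap around), and a $b_2$-burst-correcting code admits no nonzero codeword that is the difference of two such bursts; hence $E_i(x)=E_{i'}(x)$ and therefore $i=i'$. Consequently the $\code_2$-syndrome determines the error vector, and together with the known shape it determines the first column.

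I expect the main obstacle to be precisely the location-independence of the $e_2(x)$-residue established in the middle paragraph: it is the formal expression of the intuition that, once the shape has been supplied by the vertical code, the factor $e_2(x)$ carries no positional information and only the factor $p_2(x)$ is needed to locate the burst. The remaining ingredients---the coprimality of $e_2(x)$ and $p_2(x)$, and the passage from equal syndromes to equal error vectors via the burst-correcting property of $\cC^*$---are routine bookkeeping.
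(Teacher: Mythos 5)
Your proof is correct and follows essentially the same route as the paper's: both arguments reduce to showing that the difference of two placements of the known shape is divisible by $e_2(x)$ (you do this by observing that every $E_i(x)$ has the same residue $\hat e(x)$ modulo $x^{b_2}-1$, where the paper instead factors $x^{b_2}-1$ out of $h_i(x)-h_j(x)$ by explicit manipulation), and then both combine this with divisibility by the irreducible $p_2(x)$ of degree exceeding $\deg e_2(x)$ to conclude that the difference lies in $\cC^*$, contradicting its $b_2$-burst-correcting property. Your modular-reduction phrasing of the key divisibility step is a clean repackaging of the same idea rather than a genuinely different argument.
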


\begin{proof}
We have to prove that if the burst $e$, or a cyclic shift of $e$,
occurred in two different codewords $f_1(x),f_2(x)$ then two
different words will be generated. Since $\code_2$ is a linear
code it is sufficient to prove that there is no codeword which is
equal to the difference of two clusters which are cyclic shifts of
the cluster $e$. Assume the first column of the cluster $e$ is
$i$, i.e., the cluster is
$e'=(e_{i_0},e_{i_0+1},\ldots,e_{i_0+b_2-1})$ where $i_0\equiv
i(\text{mod}~b_2)$ and indices are taken modulo $b_2$. The
polynomial representing this cluster is
\begin{eqnarray}
h_i(x) & = & x^{\ell_i b_2}(e_{i_0}x^{i_0} + e_{i_0+1}x^{i_0+1} + \cdots + e_{b_2-1}x^{b_2-1} + \nonumber\\
& & e_0x^{b_2} + e_1x^{b_2+1} + \cdots + e_{i_0-1}x^{b_2+i_0-1})
\nonumber,
\end{eqnarray}
where $\ell_i=\lfloor \frac{i}{b_2} \rfloor$. We can write
$h_i(x)$ as
\begin{eqnarray}
\lefteqn{h_i(x)= x^{\ell_i b_2}\big(e_0 + e_1x+\cdots+e_{b_2-1}x^{b_2-1} + e_0(x^{b_2}-1) + }\nonumber\\
& & e_1(x^{b_2+1}-x) + \cdots + e_{i_0-1}(x^{b_2+i_0-1}-x^{i_0-1})\big) = \nonumber \\
& & x^{\ell_i b_2}\big(e_0 + e_1x + \cdots + e_{b_2-1}x^{b_2-1} + \nonumber \\
& & (x^{b_2}-1)(e_0 + e_1x + \cdots + e_{i_0-1}x^{i_0-1})\big) = \nonumber \\
& & x^{\ell_i b_2}\roundb{e_0 + e_1x + \cdots + e_{b_2-1}x^{b_2-1}} \nonumber \\
& & + x^{\ell_i b_2}(x^{b_2}-1)\roundb{e_0 + e_1x + \cdots +
e_{i_0-1}x^{i_0-1}}. \nonumber
\end{eqnarray}
Assume the contrary, that the difference between two clusters
which are cyclic shifts of the cluster $e$ is a codeword. Assume
that these two clusters start at columns $i$ and $j$. Hence, the
polynomial $h_i(x)-h_j(x)$ is the codeword
\begin{eqnarray}
\lefteqn{h_i(x)-h_j(x)=x^{\ell_i b_2}\roundb{e_0 + e_1x + \cdots +
e_{b_2-1}x^{b_2-1}} + }\nonumber \\
& & x^{\ell_i b_2}(x^{b_2}-1)\roundb{e_0 + e_1x + \cdots + e_{i_0-1}x^{i_0-1}} -\nonumber \\
& & x^{\ell_j b_2}\roundb{e_0 + e_1x + \cdots +
e_{b_2-1}x^{b_2-1}} - \nonumber \\
& & x^{\ell_j b_2}(x^{b_2}-1)\roundb{e_0 + e_1x + \cdots +
e_{j_0-1}x^{j_0-1}}. \nonumber
\end{eqnarray}
$h_i(x)-h_j(x)$ can be written as
\begin{eqnarray}
\lefteqn{h_i(x)-h_j(x) = (x^{\ell_i b_2} - x^{\ell_j b_2})}\nonumber \\
& & \big(e_0 + e_1x + \cdots +
e_{b_2-1}x^{b_2-1}\big) + \nonumber \\
& & (x^{b_2}-1)\big(x^{\ell_i b_2}\roundb{e_0 + e_1x + \cdots +e_{i_0-1}x^{i_0-1}} \nonumber \\
& & - x^{l_j b_2}\roundb{e_0 + e_1x + \cdots + e_{j_0-1}x^{j_0-1}}\big) = \nonumber \\
& & x^{\ell_j b_2}(x^{(\ell_i-\ell_j)b_2}-1) \roundb{e_0 + e_1x +
\cdots + e_{b_2-1}x^{b_2-1}} + \nonumber \\
& & (x^{b_2}-1)\big(x^{\ell_i b_2}\roundb{e_0 + e_1x + \cdots +
e_{i_0-1}x^{i_0-1}} - \nonumber \\
& & x^{\ell_j b_2}\roundb{e_0 + e_1x + \cdots +
e_{j_0-1}x^{j_0-1}}\big). \nonumber
\end{eqnarray}
This last presentation of $h_i(x)-h_j(x)$ implies that it is
dividable by $x^{b_2}-1$ and hence also by
$e_2(x)=\frac{x^{b_2}-1}{x-1}$. Since $h_i(x)-h_j(x)$ is a
codeword in $\code_2$ it follows that $p_2(x)|h_i(x)-h_j(x)$.
Since also $e_2(x)|h_i(x)-h_j(x)$, $p_2(x)$ is an irreducible
polynomial, and its degree is greater than $b_2$ it follows that
$e_2(x)p_2(x)|h_i(x)-h_j(x)$. Therefore, $h_i(x)-h_j(x)$ is also a
codeword of $\code^*$, a contradiction since $\code^*$ can correct
any burst of length $b_2$.
\end{proof}

A code $\cC$ that can find the first column of a burst
$e=(e_0,e_1, \ldots , e_{b-1})$ given up to a cyclic shift will be
called a {\it $b$-burst-locator code}. Thus, by
Lemma~\ref{lem:locator}, $\cC_2$ is a $b_2$-burst-locator code.
Based on the constructions of~\cite{AMOT,BBZS} and
Lemma~\ref{lem:locator} we can construct an $n_1 \times n_2$
two-dimensional $(b_1\times b_2)$-cluster-correcting code with
small excess redundancy.

Let $\cC_1$ be an optimum $b_1$-burst-correcting code, over
GF($2^{b_2}$), of length
$n_1=\frac{(2^{b_2})^{r_1-b_1+1}-1}{2^{b_2}-1}$ and redundancy
$r_1$. Let $\cC_2$ be a $b_2$-burst-locator code, over
GF($2^{b_1}$), of length $n_2=\frac{(2^{b_1})^{m_2}-1}{2^{b_1}-1}$
and redundancy $m_2=r_2 -b_2 +1$. We can give a construction in
which each codeword of size $n_1 \times n_2$ has $r_1b_2 + m_2 b_1
+1 = \lceil \text{log}_2 (n_1n_2) \rceil + b_1b_2 + b_1 $
redundancy bits. The redundancy bits are partitioned into three
subsets:
\begin{itemize}
\item $r_1 b_2$ redundancy bits are located in the upper right
corner of the array and are computed from the complete vertical
codeword as done in Construction BBZS. \item $m_2 b_1$ redundancy
bits which are computed from the complete horizontal codeword as
done in Construction BBZS. These redundancy bits are spread in the
array in a way that they will fulfill the following requirement.
If a redundancy bit is erroneous it will be possible to determine
its row (note, that if the vertical code finds only one row where
errors occurred, there are $b_1$ different sets of $b_1$ rows in
which the burst occurred). Hence, in $2b_1 -1$ consecutive rows
there can be at most one redundancy bit. This requirement implies
also that in any ($b_1 \times b_2$)-cluster we have at most one
redundancy bit. \item One redundancy bit which is a parity of all
redundancy bits of the second subset. Its role is to determine
whether this bit or a redundancy bit from the second subset is
erroneous.
\end{itemize}

The encoding is done similarly to Construction BBZS with two
exceptions. When we compute the elements of the horizontal
component code the $r_1 b_2$ redundancy bits of the first subset
are not assumed to be zeroes as in Construction BBZS, but have the
values which were computed by the previous steps of the encoding
procedure. The second exception is the extra computation of the
redundancy bit of the third subset, which is taken as an even
parity bit of all the redundancy bits of the second subset.

The decoding is done similarly to Construction BBZS with the
following exceptions. In Construction BBZS, if redundancy bits are
erroneous then they will be recovered by the corresponding
component code. The redundancy bits in the right upper corner are
recovered by the vertical component code. They will be recovered
by this code also in our construction. The redundancy bits in the
left lower corner, of Construction BBZS, are recovered by the
horizontal component code in Construction BBZS. Since, in our
construction we don't use a burst-correcting code as an horizontal
component code we cannot use this code to recover the related
redundancy bits of the second subset. Each $b_1 \times b_2$
sub-codeword can contain at most one redundancy bit from the
second or third subset. By summing all these redundancy bits we
will know if one of them is erroneous. Also, these redundancy bits
are spaced in a way that if we know in which rows errors occurred
then we will know which one of these bits is in error. Once we
will find this erroneous bit we will know the shape of the burst
and the horizontal burst-locator code will find the column in
which the cluster started. If only a redundancy bit from the
second or third subset is in error then the vertical code will not
find erroneous bits. In this case the sum of these bits is odd. If
a bit from the second subset is in error then the horizontal
burst-locator code will correct this error since this code is
generated by a primitive polynomial and hence it can also correct
a single error. Otherwise, the horizontal code will not find an
error, which implies that the redundancy bit of the third subset
is erroneous.

\subsection{Binary burst-locator code}
\label{subsec:binary}

The redundancy of the construction is improved if we use as the
horizontal burst-locator code a binary $(b_1b_2)$-burst-locator
code $\cC$ of length $2^m-1$, where $m=r-b_1b_2+1$, $b_1 b_2$ odd
and $e_2 (x)= 1+x+ \cdots + x^{b_1b_2-1}$. This is done simply by
taking the $b_1$ parity symbols which are computed for each column
as $b_1$ consecutive symbols in $\cC_2$ instead of an element in
GF($2^{b_1}$).

Each codeword $\{ c_{ij} \}$ of size $n_1 \times n_2$ has $r_1b_2
+ m +1$ redundancy bits in the following positions:

\begin{itemize}
\item $\{ (i,j) \ ~:~ \ 0\leq i\leq r_1-1 , n_2-b_2\leq j\leq
n_2-1\}$. These redundancy bits are computed from the vertical
component code.

\item $\{ (i(2b_1-1)+j(2b_1-1)b_1,j) ~:~ \ 0\leq i\leq b_1-1 ,~ 0
\leq j ,~ (j+1)b_1 +i+1 \leq m \}$. These redundancy bits are
computed from the horizontal component code.

\item $\{ (n_1-1,n_2-1)  \}$. This redundancy bit is an even
parity bits for the redundancy bits of the second subset.

\end{itemize}

\noindent {\bf Encoding:}

All the redundancy bits in a codeword are set initially to be
zeroes. The vertical component code and the first set of
redundancy bits are computed as in Construction BBZS, i.e., for
each row $i=r_1,\ldots,n_1-1$, $b_2$ parity check bits are
generated. $p_{i \ell}$, $\ell=0,1,\ldots,b_2-1$ is computed as
\begin{equation}\label{eq:parity1}
    p_{i \ell} = \sum_{j=\ell,\ell+b_2,\ell +2b_2,\ldots,\\ j<n_2}c_{ij}.
\end{equation}
The parity bits $p_{i \ell}$, $\ell=0,1,\ldots,b_2-1$ generate
afterward a symbol
$\underline{p}_i=(p_{i0},p_{i1},\ldots,p_{ib_2-1})$ from the
extension field GF($2^{b_2}$). The symbols $\underline{p}_i$,
$i=r_1,\ldots,n_1-1$ are considered as the information symbols of
the code $\cC_1$. By the encoding procedure for $\cC_1$ we obtain
$r_1$ redundancy symbols $\underline{p}_i$, $i=0,\ldots,r_1-1$,
and the $r_1 b_2$ upper right corner redundancy bits of the array
are computed in a way that (\ref{eq:parity1}) holds for
$i=0,1,\ldots,r_1-1$ and $\ell=0,1,\ldots,b_2-1$. We now turn to
the encoding procedure of the horizontal component code. During
this process the $r_1 b_2$ redundancy bits of the first subset
will have the values which were just computed (as said before this
is different from Construction BBZS, in which they were assumed to
be zeroes). The second subset of redundancy bits spans over
$\lceil \frac{m}{b_1} \rceil$ consecutive columns. In each column,
with a possible exception of the last one, there are $b_1$
redundancy bits in $b_1$ positions (rows) which cover all the
$b_1$ distinct residues modulo $b_1$. We compute $n_2 - m$
information symbols of the horizontal component codeword as in
Construction BBZS. The remaining $m$ symbols are the redundancy
symbols of the horizontal $(b_1b_2)$-burst-locator code and they
are computed from the $n_2 - m$ information symbols. The only
redundancy bit of the third subset is the binary sum of the
computed redundancy bits from the second subset.

\noindent {\bf Decoding:}

The decoding is done similarly to the one of Construction BBZS.
First, each row generates $b_2$ parity bits by using
(\ref{eq:parity1}) (this includes also the first $r_1$ rows, but
the $m+1$ redundancy bits of the second and the third subsets are
assumed to be zeroes). These $b_2$ bits are considered as a symbol
in GF($2^{b_2}$) and hence a word of length $n_1$ over
GF($2^{b_2}$) is generated. Now, we use the decoding procedure of
the vertical $b_1$-burst-correcting code to correct a burst of
length $b_1$. If such a burst occurred it "almost" determines the
rows in which errors occurred and also the shape of the cluster up
to horizontal cyclic permutation. We say "almost" since the
vertical code does not find erroneous redundancy bits from the
second and the third subsets. These bits are spaced in a way that
at most one such bit is in error. We sum these $m+1$ bits and if
the result is not zero then one of these bits is erroneous. If
this is the case then from rows of the cluster-error, discovered
by the vertical code, we will know the exact row of this bit. If
the vertical code didn't find any burst and a redundancy bit from
either the second subset or the third subset is erroneous then we
have exactly a single error in the array. Now, since the
$(b_1b_2)$-burst-locator code is also a single-error-correcting
code (binary Hamming code) this single error can be corrected. If
there are more errors in the cluster then we continue by either
correcting the redundancy bit of the last two subsets (and the
corresponding bit in the horizontal codeword) or adding this
erroneous redundancy bit to the shape of the burst. In either
cases the horizontal $(b_1b_2)$-burst-locator code will discover
the first column in which the cluster occurred, and hence the
pattern discovered by the vertical component code enables us
to correct the errors.\\
{\bf Remark 1:} The parity bits of the second set can be chosen in
other ways as long as they form a set of redundancy symbols for
the burst-locator code, e.g., they don't have to be in consecutive
columns. Such choices can result in other array sizes.\\
{\bf Remark 2:} A natural question is to ask why not to use also a
binary vertical component code? The answer is that we can. The
main advantage will be that we will have more flexibility in the
parameters of our two-dimensional array. The disadvantage is that
the excess redundancy will be increased by $\lceil \text{log}_2
b_2 \rceil$.

The consequence of this construction is the following theorem (the
computational part of the proof will be given in the next
subsection).
\begin{theorem}
The given construction produces a $(b_1 \times b_2
)$-cluster-correcting code of size $n_1 \times n_2 =
{\frac{(2^{b_2})^{r_1-b_1+1}-1}{2^{b_2}-1}} \times {\lfloor
\frac{2^{r-b_1b_2+1}-1}{b_1} \rfloor}$ with redundancy
$r_1b_2+r-b_1b_2+2 \leq \lceil \text{log}_2 (n_1n_2) \rceil +
b_1b_2 + \lceil \text{log}_2 b_1
\rceil$.\\
The construction can be applied whenever $b_1b_2$ is odd integer,
and there exists

\begin{itemize}
\item An optimum $b_1$-burst-correcting code, over GF($2^{b_2}$),
with redundancy $r_1$ and length
$\frac{(2^{b_2})^{r_1-b_1+1}-1}{2^{b_2}-1}$.

\item A binary $(b_1b_2)$-burst-locator code with redundancy $r$
and length $2^{r-b_1b_2+1}-1$.
\end{itemize}
\end{theorem}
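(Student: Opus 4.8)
The plan is to split the theorem into a correctness statement (the construction really does correct every $(b_1\times b_2)$-cluster) and a parameter computation (the size formulas and the redundancy bound), the latter being the computational heart that the text defers. First I would check that all ingredients exist. The vertical optimum $b_1$-burst-correcting code over GF($2^{b_2}$) is assumed. Since $b_1b_2$ is odd, Lemma~\ref{lem:b-polynomial_gf_2} shows that $e_2(x)=1+x+\cdots+x^{b_1b_2-1}$ is a $(b_1b_2)$-polynomial over GF(2), so Theorem~\ref{thm:sufficient} provides an irreducible $p(x)$ of degree $m=r-b_1b_2+1$ such that $e_2(x)p(x)$ generates an optimum $(b_1b_2)$-burst-correcting code; the code $\cC$ generated by $p(x)$ alone is then the binary $(b_1b_2)$-burst-locator code, and being generated by a binary irreducible polynomial it is primitive, hence also single-error-correcting.

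For correctness I would argue that the decoding map is injective on $(b_1\times b_2)$-clusters, following the decoding description. The vertical code returns the erroneous rows and the cluster pattern up to a horizontal cyclic shift. Reading the $b_1$ column-parities of each column as $b_1$ consecutive coordinates of $\cC$ turns the width-$b_2$ cluster into a single burst of length $b_1b_2$ in the horizontal codeword, whose starting position is pinned down by the burst-locator property (the binary analogue of Lemma~\ref{lem:locator}), giving the starting column; combined with the shape from the vertical code this reconstructs the error. Finally, the spacing rule (at most one redundancy bit of the second/third subset in any $2b_1-1$ consecutive rows, hence in any cluster) together with the third parity bit guarantees that an error hitting a redundancy bit invisible to the vertical code is detected, located by the row information, and corrected using the single-error-correcting property of $\cC$.

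For the parameters, $n_1$ is simply the length of the vertical code, $n_1=\frac{(2^{b_2})^{r_1-b_1+1}-1}{2^{b_2}-1}$. For $n_2$, each column contributes $b_1$ bits placed as $b_1$ consecutive coordinates of $\cC$, and $\cC$ has length $2^m-1=2^{r-b_1b_2+1}-1$; packing $b_1$ bits per column gives $n_2=\lfloor\frac{2^{r-b_1b_2+1}-1}{b_1}\rfloor$. The redundancy is the sum of the three subsets: $r_1b_2$ bits from the vertical code, $m$ bits from $\cC$, and one parity bit, i.e. $r_1b_2+m+1=r_1b_2+r-b_1b_2+2$.

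The main work is the inequality. Writing $m=r-b_1b_2+1$ and $P=b_2(r_1-b_1)$ (so that $n_1=\frac{2^{P+b_2}-1}{2^{b_2}-1}$), the claim rearranges to $\lceil\log_2(n_1n_2)\rceil\ge P+m+1-\lceil\log_2 b_1\rceil$, which—every quantity except the logarithm being an integer—is equivalent to the clean inequality
\[
n_1 n_2\cdot 2^{\lceil\log_2 b_1\rceil}>2^{P+m}.
\]
I would prove this by substituting $n_1=2^P+\frac{2^P-1}{2^{b_2}-1}$ and bounding $2^{\lceil\log_2 b_1\rceil}\ge b_1$ and $n_2\ge\frac{2^m-b_1}{b_1}$ (the latter from $\lfloor x\rfloor>x-1$), which reduces the goal to the concrete inequality $(2^P-1)(2^m-b_1)>2^P b_1(2^{b_2}-1)$. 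I expect this to be the one delicate point: the bound is tight—equality in the excess redundancy occurs when $b_1$ is a power of two—so the naive estimates $n_1>2^P$ and $n_2b_1>2^m-b_1$ lose exactly one unit and one is forced to retain the second-order term $\frac{2^P-1}{2^{b_2}-1}$ of $n_1$. Keeping it, the inequality reduces to $m\ge b_2+\lceil\log_2 b_1\rceil+2$, which follows from the structural bound $m\ge b_1b_2+1$ (the degree condition on $p(x)$ in Theorem~\ref{thm:necessary}) together with $(b_1-1)(b_2-1)\ge1$, valid since $b_1,b_2>1$.
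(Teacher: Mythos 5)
Your proposal is correct, and it actually supplies the one piece the paper omits: the text states that ``the computational part of the proof will be given in the next subsection,'' but the computation given there (the proof of Theorem~\ref{thm:redundancyD}) uses the cruder estimates $n_1\geq 2^{P}$ and $2^{m}b_1>2(D-1)B$, and when specialized to $D=2$ it only yields the redundancy bound with an extra additive constant, not the constant-free bound $\lceil\log_2(n_1n_2)\rceil+b_1b_2+\lceil\log_2 b_1\rceil$ claimed in this theorem. Your key observation --- that the first-order bounds $n_1>2^{P}$ and $n_2b_1\geq 2^{m}-b_1$ fall just short, so one must retain the second-order term $n_1-2^{P}=\frac{2^{P}-1}{2^{b_2}-1}$ --- is exactly right, and your reduction to $(2^{P}-1)(2^{m}-b_1)>2^{P}b_1(2^{b_2}-1)$ and thence to $m\geq b_2+\lceil\log_2 b_1\rceil+2$ (which holds since $m\geq b_1b_2+1$ by the degree condition of Theorem~\ref{thm:necessary} and $b_1,b_2>1$) checks out; so does the integrality step converting the ceiling inequality into $n_1n_2\cdot 2^{\lceil\log_2 b_1\rceil}>2^{P+m}$. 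The correctness half of your argument follows the paper's own decoding description (vertical code gives rows and pattern up to cyclic shift, Lemma~\ref{lem:locator} in its binary form gives the starting column, the spacing of the second-subset bits plus the parity bit and the Hamming property of $\cC_2$ handle errors in the hidden redundancy positions), so nothing new there. Two cosmetic quibbles: your parenthetical about equality ``when $b_1$ is a power of two'' is vacuous here since $b_1b_2$ odd forces $b_1$ odd; and the step from the concrete inequality to $m\geq b_2+\lceil\log_2 b_1\rceil+2$ silently uses $2^{P}-1\geq 2^{P-1}$ and $2^{m}-b_1\geq 2^{m-1}$, which are both valid but worth stating.
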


\subsection{Multidimensional arrays}

As said earlier, one of the advantages of our construction is that
it can be generalized in relatively simple way to $D$ dimensions,
while the excess redundancy remains relatively small. Assume we
want to construct a $D$-dimensional code of size $n_1 \times n_2
\times \cdots \times n_D$ which corrects a box-error of size $b_1
\times b_2 \times \cdots \times b_D$. Let $B= \prod_{i=1}^D b_i$,
where $B$ is an odd integer. For the first dimension we use a
component code of length $n_1$ over GF$(2^{\frac{B}{b_1}})$ which
corrects a burst error of size $b_1$. In each of the other $D-1$
dimensions we use a burst-locator code which locates the position
of the burst and its cyclic permutation in the corresponding
direction. In dimension $i$, $2 \leq i \leq D$, we use a
$b_i$-burst-locator code of length $n_i$ over
GF$(2^{\frac{B}{b_i}})$. The code of the first dimension finds the
position of the error in the first dimension and the shape of the
error, with a possible cyclic shift in each of the other $D-1$
dimensions. The code in dimension $i$, $2 \leq i \leq D$, finds
the location of the position where the cluster starts in dimension
$i$. After each code discovers the position where the cluster
starts in its dimension (note, that this can be done in parallel),
we have the corresponding cyclic shift in each dimension of the
box-error found by the first code. Hence, we can now form the
actual burst which occurred and correct it. As before, we can use
in dimension $i$, $2 \leq i \leq D$, a binary $B$-burst-locator
code. For the first dimension we can choose consecutive redundancy
bits as the redundancy bits of the first subset. For each other
dimension we will have to choose positions for the redundancy
bits, which will fulfill the requirements for the redundancy bits
of the second subset (only one redundancy bit will be needed as a
parity bit for all the redundancy bits of this form for the
burst-locator codes of all dimensions).

\begin{theorem}
\label{thm:redundancyD} Assume $b_1,b_2,\ldots,b_D$ are odd
integers, $B=\prod_{i=1}^D b_i$ and the following codes exist:

\begin{itemize}
\item
An optimum $b_1$-burst-correcting code, over
GF($2^{\frac{B}{b_1}}$), with redundancy $r_1$ and length
$\frac{(2^{\frac{B}{b_1}})^{r_1-b_1+1}-1}{2^{\frac{B}{b_1}}-1}$.

\item For each $i$, $2 \leq i \leq D$, a binary $B$-burst-locator
code with redundancy $r_i$ and length $2^{m_i}-1=2^{r_i-B+1}-1$.
\end{itemize}
Then, there exists a $(b_1\times b_2\times \cdots \times
b_D)$-burst correcting code of size $n_1\times n_2 \times \cdots
\times n_D =
\frac{(2^{\frac{B}{b_1}})^{r_1-b_1+1}-1}{2^{\frac{B}{b_1}}-1}
\times \left\lfloor\frac{2^{m_2}-1}{b_2}\right\rfloor \times
\cdots \times \left\lfloor\frac{2^{m_D}-1}{b_D}\right\rfloor$ and
redundancy $r_1\frac{B}{b_1} + \sum_{j=2}^Dm_j+1\leq
\left\lceil\log_2(n_1\cdots n_D)\right\rceil + B +
\left\lceil\log_2(b_1B^{D-2})\right\rceil +1$.
\end{theorem}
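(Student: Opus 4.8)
The plan is to prove the theorem by verifying that the $D$-dimensional construction described above --- one $b_1$-burst-correcting code along dimension~$1$ together with one binary $B$-burst-locator code along each of the remaining dimensions --- indeed corrects every $(b_1\times\cdots\times b_D)$-box-error, and then to carry out the size and redundancy bookkeeping. The whole argument runs parallel to the two-dimensional construction; the role of $B$ being odd is exactly to guarantee, via Lemma~\ref{lem:b-polynomial_gf_2}, that $e_2(x)=1+x+\cdots+x^{B-1}$ is a $B$-polynomial over GF(2), so that the binary $B$-burst-locator codes exist (Theorem~\ref{thm:sufficient}) and enjoy the locating property of Lemma~\ref{lem:locator}.

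First I would pin down the two folding maps. For dimension~$1$, each hyperplane $x_1=\text{const}$ is folded over the remaining $D-1$ coordinates, each modulo the corresponding $b_j$, producing $B/b_1$ parity bits which are read as one symbol of GF($2^{B/b_1}$); the resulting $n_1$ symbols form a codeword of $\cC_1$. For dimension~$i\ge 2$, the array is folded over all coordinates except $x_i$, producing $B/b_i$ bits for each value of $x_i$, which are taken as $B/b_i$ consecutive bits of the binary locator codeword. The key geometric observation is then immediate: a box-error of volume $B$ meets exactly $b_1$ consecutive dimension-$1$ hyperplanes, so $\cC_1$ sees a burst of length $b_1$ and recovers the dimension-$1$ location together with the full box content up to an unknown cyclic shift in each of dimensions $2,\dots,D$; and in dimension~$i$ the box meets $b_i$ consecutive values of $x_i$, each contributing $B/b_i$ bits, so the locator codeword sees a burst of length exactly $B$. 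Applying the divisibility argument of Lemma~\ref{lem:locator} in dimension~$i$, each locator code returns the starting coordinate of the box in its dimension, and these $D-1$ coordinates fix precisely the cyclic shifts that $\cC_1$ left undetermined; assembling them yields the true error, which we subtract off.

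The delicate step, and the one I expect to be the main obstacle, is the treatment of the locator-code redundancy bits. During dimension-$1$ decoding these positions are forced to zero, so $\cC_1$ is blind to errors in them; I therefore must place the $\sum_{i\ge2} m_i$ locator-redundancy bits (exactly as in the two-dimensional case, but now simultaneously for all $D-1$ codes) so that every $b_1\times\cdots\times b_D$ box contains at most one of them, and add a single global parity bit over all of them. One then argues as in the two-dimensional decoding: if $\cC_1$ finds a burst, the box location it reports, together with the at-most-one property and the spacing, identifies the unique possibly-erroneous locator bit, whose value the parity check supplies; if $\cC_1$ finds no burst, the error is a lone redundancy or parity bit, corrected because each binary locator code is generated by a primitive polynomial and hence is also a single-error-correcting Hamming code. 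Verifying that the placement can meet the spacing constraint for all $D-1$ codes at once, and that the single parity bit always suffices, is the part that needs the most care.

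Finally I would do the arithmetic. The redundancy is $r_1\,B/b_1+\sum_{i=2}^D m_i+1$ by construction, the dimension-$1$ length is $n_1=\frac{(2^{B/b_1})^{r_1-b_1+1}-1}{2^{B/b_1}-1}$, and each locator length gives $n_i=\lfloor (2^{m_i}-1)b_i/B\rfloor$. Bounding $\log_2 N$ from below through $n_1\ge 2^{(B/b_1)(r_1-b_1)}$ and $n_i\ge (2^{m_i}-1)b_i/(2B)$, the dimension-$1$ contribution to the excess redundancy collapses to exactly $B$ because $r_1\,B/b_1-(B/b_1)(r_1-b_1)=B$, while the locator contributions combine using $\sum_{i=2}^D\log_2 b_i=\log_2(B/b_1)$ to produce the term $\log_2(b_1 B^{D-2})$; the floor and ceiling terms have to be tracked carefully so that the final estimate lands on $\lceil\log_2(b_1B^{D-2})\rceil+1$ rather than a looser additive constant.
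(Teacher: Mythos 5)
Your proposal follows essentially the same route as the paper: the paper's own proof simply appeals to the construction described immediately before the theorem (the dimension-one burst-correcting code over GF($2^{B/b_1}$) plus $D-1$ binary $B$-burst-locator codes, with the spaced locator-redundancy bits and one global parity bit) and then performs the redundancy arithmetic, exactly as you outline. The only small divergence is in the bookkeeping: the paper bounds the product $\prod_{j=2}^D n_j$ collectively, using $2^{m_j}b_j>2(D-1)B$ to lose only a single factor of $2$ overall, whereas your per-factor bound $n_i\ge (2^{m_i}-1)b_i/(2B)$ would lose a factor of $2^{D-1}$ and hence an extra $D-2$ in the additive constant --- precisely the ``careful tracking'' of constants you flag as still needed.
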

\begin{proof}
The existence of the code is implied by the proceeding description
and we only have to compute its redundancy. For each $j$, $2\leq j
\leq D$,
$n_j=\left\lfloor\frac{2^{m_j}-1}{\frac{B}{b_j}}\right\rfloor\geq
\frac{2^{m_j}-\frac{B}{b_j}}{\frac{B}{b_j}}$. Therefore,
$n_1n_2\cdots n_D \geq
\frac{(2^{\frac{B}{b_1}})^{r_1-b_1+1}}{2^{\frac{B}{b_1}}}
\prod_{j=2}^D\frac{2^{m_j}-\frac{B}{b_j}}{\frac{B}{b_j}}$. Now,
taking into account that for each $j$, $2 \leq j \leq D$, $m_j >
B$, and w.l.o.g. we can assume that for each $i$, $1 \leq i \leq
D$, $b_i > 1$, we have that for each $j$, $2 \leq j \leq D$,
$2^{m_j}b_j > 2(D-1)B$. It follows that $\log_2 (n_1n_2\cdots n_D)
\geq \frac{Br_1}{b_1} -B + (\sum_{j=2}^D m_j -1) -\sum_{i=2}^D
\log_2 \frac{B}{b_i}=\frac{Br_1}{b_1} -B + \sum_{j=2}^D m_j -
\log_2 (B^{D-2}b_1) -1$. Hence we have that the redundancy of the
code is $r_1\frac{B}{b_1} + \sum_{j=2}^Dm_j+1\leq
\left\lceil\log_2(n_1\cdots n_D)\right\rceil + B +
\left\lceil\log_2(b_1B^{D-2})\right\rceil +1$.
\end{proof}

When $B$ is even we have to modify our method in order to obtain
similar results. The modifications include binary component codes
in all dimensions. Each one of the $D-1$ burst-locator codes
locates the position of a cyclic burst of size $B+1$. This
modification is described in the next section.

\section{Colorings for Error-Correction}
\label{sec:coloring}

The constructions presented in Section~\ref{sec:multi} are best
applied when the volume of the box error is an odd integer. The
reason is that Lemma~\ref{lem:b-polynomial_gf_2} is true only when
$b$ is an odd integer. Hence, if the volume of the box-error is an
even integer then the construction of Section~\ref{sec:multi} has
to be used in a slightly different way. We have to apply the
construction for correcting a box-error which has odd volume and
contains the "real" box-error. The price will be an increase in
the excess redundancy. In this section we offer a novel method
which will be useful to correct a box-error with even volume and
also for correcting other types of cluster-errors. The excess
redundancy will be similar to the one of the constructions in
Section~\ref{sec:multi}.

\subsection{The coloring method}

The constructions with binary component codes use $D$ components
codes from which the first one is a burst-correcting code and the
other $D-1$ codes are burst-locator codes. Position $k$ in
component code $s$ is the binary summation of certain positions in
the array, which were defined with correspondence to some modulo
value related to $s$, $k$, and the $D$ indices which define the
position in the array. We generalize this idea to handle more
complicated cluster-errors to a method which will be called the
{\it coloring method}. A codeword is a $D$-dimensional array $\cA$
(not necessarily a $D$-dimensional box). We want to correct any
cluster-error with a given shape whose volume is $B$.

Again, we use $D$ binary component codes to correct the
cluster-error. The first code is a $(B+\delta_1)$-burst-correcting
code, $\delta_1 \geq 0$. The $s$-th component code, $2 \leq s \leq
D$, is a $(B+\delta_s)$-burst-locator code, $\delta_s \geq
\delta_1$. We further use $D$ different colorings of the
$D$-dimensional array. To each position of $\cA$ we assign a color
for each one of the $D$ colorings. Each coloring will be
associated with a different binary component code. For a given
coloring and the corresponding component code, position $k$ in the
component code is the binary sum of all bits which are colored
with color $k$. As we want to correct a cluster-error of a certain
shape in the array we want that the colorings will satisfy a few
properties:

\begin{itemize}
\item ({\bf p.1}) For the $s$-th coloring, for each $s$, $1 \leq s
\leq D$, the colors inside a burst of the given shape are distinct
integers and the difference between the largest integer and the
smallest one is at most $B +\delta_s -1$.

\item ({\bf p.2}) Given $D$ colorings, a color $\nu_s$ for the
$s$-th coloring, $1 \leq s \leq D$. There is at most one position
in the array which is colored with the colors $(\nu_1, \nu_2 ,
\ldots, \nu_D )$.

\item ({\bf p.3}) Any two positions which are colored with the
same color by the first coloring, have colors which differ by a
multiple of $B + \delta_s$ by the $s$-th coloring, for each $s$,
$2 \leq s \leq D$.
\end{itemize}

Finally, we have to choose redundancy bits in the array, in a
similar way to the method used in Section~\ref{sec:multi}.

\begin{theorem}
\label{thm:coloring} Assume that there exists a
$(B+\delta_1)$-burst-correcting code of length $n_1$ and for each
$s$, $2 \leq s \leq D$, there exists a
$(B+\delta_s)$-burst-locator code of length $n_s$. Assume further
that there exist $D$ colorings which satisfy properties (p.1),
(p.2), and (p.3), such that the $s$-th coloring assigns colors
between 1 to $n_s$ to the $D$-dimensional array. Then the coloring
method implies the existence of a
$B$-cluster-correcting code for a $D$-dimensional array $\cA$ and
a cluster with a given shape and volume $B$.
\end{theorem}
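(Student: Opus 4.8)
The plan is to show that the coloring method gives a valid encoding/decoding scheme by mirroring the two-dimensional argument of Section~\ref{sec:multi}, now driven abstractly by properties (p.1)--(p.3). Let me think about what each property is buying us and how the decoding proceeds.

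First, encoding. We have $D$ component codes and $D$ colorings. For each coloring $s$, position $k$ of the $s$-th component word is the binary sum of all array bits colored $k$ by the $s$-th coloring. Property (p.1) guarantees that the bits of a burst of the given shape land in distinct colors spanning a range of at most $B+\delta_s-1$ under the $s$-th coloring, so the error they induce in component code $s$ is confined to a cyclic burst of length $B+\delta_s$. This is exactly what the first (burst-correcting) and the other $D-1$ (burst-locator) codes are designed to handle. As in Section~\ref{sec:multi} we set aside a family of redundancy bits whose values are fixed so that each component word is a genuine codeword of its component code, plus a single overall parity bit; property (p.2) is what lets us place and recover these redundancy bits, since it says a position is uniquely determined by its $D$-tuple of colors.

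The decoding argument is where the real content lies, and it splits according to which coloring does what. The first coloring feeds a $(B+\delta_1)$-burst-correcting code, so by (p.1) it recovers the actual error pattern restricted to the colors, i.e.\ the shape of the cluster together with, for each other dimension $s$, an ambiguity amounting to a cyclic shift. The role of colorings $2,\dots,D$ is to resolve these ambiguities one coordinate at a time. Here property (p.3) is the key: any two positions sharing a first-coloring color differ, under the $s$-th coloring, by a multiple of $B+\delta_s$. This is the exact abstract analogue of the algebraic identity in the proof of Lemma~\ref{lem:locator}, where the difference of two cyclic shifts of the same burst turned out to be divisible by $x^{b_2}-1$. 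It guarantees that the cluster the first code found, viewed through coloring $s$, appears as a single cyclic burst up to a shift that is a multiple of $B+\delta_s$, so the $s$-th burst-\emph{locator} code can pin down exactly where that burst starts in the $s$-th direction. Running all $D-1$ locator codes (in parallel) then fixes every coordinate of the starting position, and combined with the shape from the first code we reconstruct the true cluster and correct it.

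The main obstacle I expect is making the interaction between the first coloring and each locator coloring rigorous, namely proving a clean abstract counterpart of Lemma~\ref{lem:locator}: that no nonzero ``difference of two shifted clusters'' can be invisible to the $s$-th locator code. Concretely, I would argue that if two admissible configurations produced the same syndrome in component code $s$, their difference would have to be supported on colors differing by multiples of $B+\delta_s$ (by (p.3)) and yet confined to a burst of length $B+\delta_s$ (by (p.1)), which forces it to be a single block that the burst-locator code distinguishes; otherwise we contradict the burst-locator property. The secondary subtlety is the bookkeeping of the redundancy bits: one must verify, exactly as in the construction of Section~\ref{sec:multi}, that the redundancy bits can be spaced so that at most one lies in any single cluster and that the extra parity bit together with (p.2) suffices to locate and correct an erroneous redundancy bit that the first code fails to see. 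These two points aside, the remaining steps are the routine translation of the BBZS-style encoding and decoding into the language of the three coloring properties.
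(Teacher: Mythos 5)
Your proposal is correct and follows essentially the same route as the paper's own (very terse) proof: property (p.1) confines the induced error in each component word to a burst of length $B+\delta_s$, property (p.3) makes the candidate bursts in each locator code cyclic shifts of one another so the $(B+\delta_s)$-burst-locator code can fix the start, and property (p.2) guarantees the $D$-tuple of colors determines a unique array position. The paper dismisses the two "obstacles" you flag as straightforward from the description of the method (the redundancy-bit bookkeeping is handled as in Section~\ref{sec:multi} and the locator argument is exactly Lemma~\ref{lem:locator} in abstract form), so your treatment is, if anything, more careful than the published one.
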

\begin{proof}
The proof is straightforward from the description. We just note,
that by property (p.2) there is no ambiguity in the erroneous
positions. By property (p.1), for each $s$, $1 \leq s \leq D$, the
erroneous positions affect at most $B+\delta_s$ consecutive
positions in the codeword of the $s$-th component code. Finally,
since two positions in the array which are assigned the same color
by the first coloring, have been assigned by the $s$-th coloring
colors which differ by a multiple of $B+\delta_s$ (see property
(p.3)), it follows that the possible bursts in the $s$-th
($B+\delta_s$)-burst-locator code are cyclic shifts of a burst
with length $B+\delta_s$.
\end{proof}

It will be more convenient if each coloring is a linear function
of the coordinate indices, i.e., given a position $(i_1,i_2,
\ldots , i_D )$ its color for the $s$-th coloring will be defined
by
\begin{align*}
\sum_{j=1}^D \alpha_j^s i_j
\end{align*}
where $\alpha_j^s$ is a constant integer which depends on the
coloring $s$ and the shape of the $D$-dimensional cluster. Such a
coloring will be called a {\it linear coloring}. With a linear
coloring we associate a {\it coloring matrix} $A_D$, where $( A_D
)_{s,j} = \alpha_j^s$. It is easy to verify that property (p.2),
is fulfilled for a linear coloring if and only if the coloring
matrix is an invertible matrix.

One can verify that the coloring method is a generalization of the
method described in Section~\ref{sec:multi}. To observe this we
should assign a color to position $(i_1,i_2, \ldots , i_D )$ by
the $s$-th coloring, $1 \leq s \leq D$, in a slightly different
way than the assignment in the next subsection. If position $k$ in
component code $s$ is the binary sum of a certain set $\cS$ of
positions in the array, then all positions of $\cS$ are colored
with color $k$ by the $s$-th coloring. We leave the exact
definitions of the colorings as an exercise for the reader.

\subsection{Multidimensional box-errors}

To demonstrate how the coloring method works we will first show
how it is used to correct multidimensional box-errors, where the
volume of the box error is an even integer.

Assume we want to construct an $n_1 \times n_2 \times \cdots
\times n_D$ $D$-dimensional ($b_1 \times b_2 \times \cdots \times
b_D$)-cluster-correcting code, where $B=\prod_{i=1}^D b_i$ is an
even integer. We will use $D$ binary component codes. One
component code will be able to correct a burst of length $B$ and
$D-1$ component codes will be able to locate the position of a
burst, whose length is $B+1$, given by a cyclic shift. Let

$$
B_j = b_j B_{j-1}, ~ \text{where}~ 1 \leq j \leq D,~ \text{and} ~
B_0 =1 ~.
$$

For each entry $(i_1 , i_2 , \ldots , i_D )$ in the array we
assign $D$ colors. The $s$-th color, $1\leq s\leq D$ is defined by
$$a_{i_1i_2\cdots i_D}^s
=\sum_{j=1}^{s-1}-B_{j-1}\frac{B_D}{B_{s-1}}i_j+
\sum_{j=s}^D\frac{B_{j-1}}{B_{s-1}}i_j.$$ Each coloring
corresponds to one component code. Codeword of component code $s$,
$\cC_s$, $1\le s\leq D$ is defined according to coloring $s$.
Position $k$ in the codeword is the sum modulo 2 of the values in
positions colored with color $k$ by the $s$-th coloring. We will
prove that these $D$ codes satisfy properties (p.1) through (p.3).

\begin{lemma}
\label{lem:p1_part1} If the $D$-dimensional code has a box error
of size $b_1 \times b_2 \times \cdots \times b_D$ then each one of
the $D$ component codewords has a burst whose length is at most
$B$.
\end{lemma}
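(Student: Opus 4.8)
We have a $D$-dimensional box error of dimensions $b_1 \times b_2 \times \cdots \times b_D$, and for each coloring $s$ we want to show that the set of colors appearing inside the box forms a set whose max minus min is at most $B-1$ (so the burst spans at most $B$ consecutive positions). The $s$-th color of position $(i_1,\ldots,i_D)$ is
$$a^s_{i_1\cdots i_D} = \sum_{j=1}^{s-1}\left(-B_{j-1}\frac{B_D}{B_{s-1}}\right)i_j + \sum_{j=s}^{D}\frac{B_{j-1}}{B_{s-1}}i_j,$$
with $B_j = b_j B_{j-1}$, $B_0 = 1$, and $B = B_D$. A box error occupies positions where each $i_j$ ranges over $b_j$ consecutive integers; by translation invariance of the affine color map (shifting each $i_j$ by a constant shifts all colors by the same constant), I can assume the box is $\{(i_1,\ldots,i_D): 0 \le i_j \le b_j - 1\}$, which does not change the max-minus-min spread.

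**The plan.** I want to compute the spread of $a^s$ over the box. Since $a^s$ is linear in the $i_j$ with integer coefficients, its maximum and minimum over the box are attained at corners, and the spread is
$$\sum_{j=1}^{D} |{\text{coeff}_j}| \cdot (b_j - 1),$$
where $\text{coeff}_j$ is the coefficient of $i_j$. The first step is to read off these coefficients: for $j < s$ the coefficient is $-B_{j-1}B_D/B_{s-1}$ (negative), and for $j \ge s$ it is $B_{j-1}/B_{s-1}$ (positive). So I must show
$$\sum_{j=1}^{s-1} B_{j-1}\frac{B_D}{B_{s-1}}(b_j-1) + \sum_{j=s}^{D}\frac{B_{j-1}}{B_{s-1}}(b_j-1) \le B - 1.$$

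**The computation.** The key identity is the telescoping $B_{j-1}(b_j - 1) = B_j - B_{j-1}$, which follows directly from $B_j = b_j B_{j-1}$. Applying this to the second sum gives $\sum_{j=s}^{D}(B_j - B_{j-1})/B_{s-1} = (B_D - B_{s-1})/B_{s-1}$, a clean telescope. For the first sum, factor out $B_D/B_{s-1}$ and telescope $\sum_{j=1}^{s-1}(B_j - B_{j-1}) = B_{s-1} - B_0 = B_{s-1} - 1$, yielding $\frac{B_D}{B_{s-1}}(B_{s-1}-1)$. Adding the two pieces, the total spread is
$$\frac{B_D(B_{s-1}-1)}{B_{s-1}} + \frac{B_D - B_{s-1}}{B_{s-1}} = \frac{B_D B_{s-1} - B_D + B_D - B_{s-1}}{B_{s-1}} = \frac{B_{s-1}(B_D - 1)}{B_{s-1}} = B_D - 1 = B - 1.$$
So the spread is exactly $B-1$, giving a burst of length at most $B$, as claimed.

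**Where the care is needed.** The arithmetic above is routine once the two telescopes are set up, so the only genuine obstacle is justifying that the spread equals the $\ell_1$-norm of the coefficient vector weighted by $(b_j-1)$ — i.e., that maximum and minimum are attained at opposite corners. This is immediate because $a^s$ is a separable linear form: each variable $i_j$ contributes independently, so its contribution is maximized (resp. minimized) by choosing $i_j$ at the appropriate endpoint of its range according to the sign of its coefficient, and these choices are made independently across coordinates. I would state this explicitly and then carry out the two telescoping sums. One should also confirm the coefficients are integers (guaranteeing the colorings are well-defined integer-valued), which follows since $B_{s-1} \mid B_{j-1}$ for $j \ge s$ and $B_{s-1} \mid B_D$ — both consequences of the multiplicative definition $B_j = b_j B_{j-1}$.
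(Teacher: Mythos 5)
Your proof is correct and follows essentially the same route as the paper: identify the opposite corners of the box where the linear color form attains its minimum and maximum (according to the signs of the coefficients), then telescope $\sum B_{j-1}(b_j-1)$ over $j<s$ and $j\geq s$ to obtain a spread of exactly $B-1$. The only difference is that you make explicit the corner-attainment and integrality justifications that the paper leaves implicit.
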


\begin{proof}
A cluster occurred in the array is located inside a
multidimensional box of the form
$\{(i_1^*+i_1,i_2^*+i_2,\ldots,i_D^*+i_D) \ : \ 0\leq i_j\leq
b_j-1, 1\leq j\leq D\}$, for a fixed position $(
i_1^*,i_2^*,\ldots,i_D^* )$. The smallest color, $\ell_{\code_s}$,
of an erroneous position in $\cC_s$, $1\leq s\leq D$ is
$$\ell_{\code_s} =\sum_{j=1}^{s-1}-B_{j-1}\frac{B_D}{B_{s-1}}(i_j^*+b_j-1)+
\sum_{j=s}^D\frac{B_{j-1}}{B_{s-1}}i_j^*,$$ which is the color of
position $( i_1^* +b_1 -1,\ldots, i_{s-1}^* + b_{s-1} -1 , i_s^* ,
\ldots , i_D^* )$ and the largest color, $h_{\code_s}$, of an
erroneous position in $\cC_s$, $1\leq s\leq D$ is
$$h_{\code_s} =\sum_{j=1}^{s-1}-B_{j-1}\frac{B_D}{B_{s-1}}i_j^*+
\sum_{j=s}^D\frac{B_{j-1}}{B_{s-1}}(i_j^*+b_j-1),$$ which is the
color of position $( i_1^*,\ldots, i_{s-1}^* , i_s^* +b_s -1 ,
\ldots , i_D^* +b_D -1 )$. Now, we compute the difference
$h_{\cC_s} - \ell_{\cC_s}$.
\begin{align*}
h_{\code_s}-\ell_{\code_s}
=\sum_{j=1}^{s-1}B_{j-1}\frac{B_D}{B_{s-1}}(b_j-1)+
\sum_{j=s}^D\frac{B_{j-1}}{B_{s-1}}(b_j-1) \end{align*}
\begin{align*}
=\frac{B_D}{B_{s-1}}\Big(\sum_{j=1}^{s-1}B_{j-1}(b_j-1)\Big) +
\frac{1}{B_{s-1}}\Big(\sum_{j=s}^DB_{j-1}(b_j-1)\Big) \end{align*}
\begin{align*} = \frac{B_D}{B_{s-1}}\Big(B_{s-1}-B_0\Big) +
\frac{1}{B_{s-1}}\Big(B_D-B_{s-1}\Big) \end{align*} \begin{align*}
= B_D - \frac{B_D}{B_{s-1}}+\frac{B_D}{B_{s-1}}-1 = B-1
\end{align*}
Therefore, the length of a burst in each component code is at most
$B$.
\end{proof}

\begin{lemma}
\label{lem:p1_part2} For each one of the $D$ colorings, the $B$
colors in each $D$-dimensional box of size $b_1\times b_2\times
\cdots \times b_D$ in the array are all distinct.
\end{lemma}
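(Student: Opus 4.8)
The plan is to show that the coloring $a^s$ is injective when restricted to any box of dimensions $b_1 \times b_2 \times \cdots \times b_D$. Combined with Lemma~\ref{lem:p1_part1}, which already establishes that the colors in such a box span a range of exactly $B-1$, injectivity of $a^s$ on the box forces the $B$ colors to be precisely the $B$ consecutive integers filling that range, giving property (p.1). So the heart of the matter is injectivity.

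First I would fix the coloring index $s$ and consider two positions $(i_1,\ldots,i_D)$ and $(i_1',\ldots,i_D')$ inside the box, writing $d_j = i_j - i_j'$ so that $-(b_j-1) \leq d_j \leq b_j-1$ for each $j$. The color difference is
\begin{align*}
a^s_{i_1\cdots i_D} - a^s_{i_1'\cdots i_D'} = \sum_{j=1}^{s-1} -B_{j-1}\frac{B_D}{B_{s-1}} d_j + \sum_{j=s}^{D} \frac{B_{j-1}}{B_{s-1}} d_j,
\end{align*}
and I must show this vanishes only when every $d_j = 0$. The key structural fact is the mixed-radix nature of the weights $B_{j-1} = \prod_{k<j} b_k$: the second sum $\sum_{j=s}^D B_{j-1} d_j$ is a signed mixed-radix expansion in the bases $b_s, b_{s+1}, \ldots, b_D$, and the first sum carries an extra factor of $B_D$, cleanly separating the two groups of coordinates by scale.

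The main step is a carry/positional argument. Multiplying the color difference by $B_{s-1}$, I would analyze $\sum_{j=s}^D B_{j-1} d_j$ first: since $|d_j| \leq b_j - 1$, the standard telescoping bound $\sum_{j=s}^D B_{j-1}(b_j-1) = B_D - B_{s-1} < B_D$ shows this quantity lies strictly between $-B_D$ and $B_D$, so the $B_D$-multiple contribution from the first sum cannot be cancelled by the second sum unless the first sum is itself zero. Applying the uniqueness of mixed-radix representation (again via the telescoping bound, now showing that a nonzero leading $d_j$ cannot be compensated by lower-order terms) to each group in turn forces all $d_j = 0$. This is the step I expect to be the main obstacle: one must argue the mixed-radix uniqueness carefully for \emph{signed} digits rather than the usual nonnegative digits, using that the partial sums of $B_{j-1}(b_j-1)$ stay below the next weight $B_j = b_j B_{j-1}$, i.e.\ $B_{j-1} + B_{j-1}(b_j-1) = B_j$.

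Having established injectivity, I would close the proof by invoking Lemma~\ref{lem:p1_part1}: the box contributes exactly $B$ colors, these are distinct by injectivity, and they lie in an integer interval of length $B-1$ by the range computation already performed; hence they are exactly the consecutive integers $\ell_{\code_s}, \ell_{\code_s}+1, \ldots, \ell_{\code_s}+B-1$, which is the assertion of the lemma. Alternatively, and perhaps more cleanly, I could skip the explicit range recomputation and argue directly that $B$ distinct integers whose maximum minus minimum equals $B-1$ must form a consecutive block, which is an elementary counting fact.
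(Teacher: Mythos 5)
Your proposal is correct and follows essentially the same route as the paper: both proofs reduce to the uniqueness of the signed mixed-radix representation $\sum_j B_{j-1}d_j$ with digits $|d_j|\leq b_j-1$, handling the two coordinate groups $j<s$ and $j\geq s$ separately. The only cosmetic difference is that the paper extracts each $d_j$ in turn via divisibility by $b_j$ (each remaining difference is exhibited as a multiple of the next base and bounded by $b_j-1$), whereas you first separate the groups by the $B_D$ scale factor and then use the telescoping magnitude bound $\sum B_{j-1}(b_j-1)=B_D-B_{s-1}$ on the leading nonzero digit; both mechanisms are sound and equivalent in substance.
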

\begin{proof}
Assume the contrary, that there exist two different positions
$(i_1,\ldots,i_D),(t_1,\ldots,t_D)$ located inside a box of size
$b_1\times b_2\times \cdots \times b_D$ in the array whose $s$-th
color is identical. Therefore, by definition,
\begin{align*}
\sum_{j=1}^{s-1}-B_{j-1}\frac{B_D}{B_{s-1}}i_j+
\sum_{j=s}^D\frac{B_{j-1}}{B_{s-1}}i_j
\end{align*}
\begin{align}
\label{eq:eq1} =\sum_{j=1}^{s-1}-B_{j-1}\frac{B_D}{B_{s-1}}t_j+
\sum_{j=s}^D\frac{B_{j-1}}{B_{s-1}}t_j
\end{align}
which implies
\begin{align*}
t_s-i_s =\sum_{j=1}^{s-1}-B_{j-1}\frac{B_D}{B_{s-1}}(i_j-t_j)+
\sum_{j=s+1}^D\frac{B_{j-1}}{B_{s-1}}(i_j-t_j)
\end{align*}
\begin{align}
\label{eq:eq2}
=b_s\Big(\sum_{j=1}^{s-1}-B_{j-1}\frac{B_D}{B_{s}}(i_j-t_j)+
\sum_{j=s+1}^D\frac{B_{j-1}}{B_{s}}(i_j-t_j)\Big)
\end{align}
Since these two positions are located inside a box of size $b_1
\times \cdots \times b_D$ it follows that $0\leq |t_s-i_s|\leq
b_s-1$ and hence $t_s-i_s=0$. We continue with (\ref{eq:eq2}) and
by induction we prove similarly that $t_k-i_k=0$ for each $k$,
$s+1 \leq k \leq D$. Therefore by (\ref{eq:eq1}) we have
\begin{align*}
\sum_{j=1}^{s-1}-B_{j-1}\frac{B_D}{B_{s-1}}(i_j-t_j) = 0
\end{align*}
which implies
\begin{align*}
\sum_{j=1}^{s-1}-B_{j-1}(i_j-t_j) = 0~.
\end{align*}
Now, we will show by induction that for each $j$, $1\leq j \leq
s-1$, $t_j-i_j=0$. For $j=1$, we have
\begin{align*}
t_1-i_1=\sum_{j=2}^{s-1}B_{j-1}(i_j-t_j)=b_1\big(\sum_{j=2}^{s-1}\frac{B_{j-1}}{b_1}(i_j-t_j)
\big)
\end{align*}
and since $0\leq |t_1-i_1|\leq b_1-1$ we have $t_1-i_1=0$. We
continue in similar way and obtain for each $j$, $1\leq j \leq
s-1$, $t_j-i_j=0$. Therefore, for each $1\leq s \leq D$, we have
$t_s=i_s$. Thus, for each one of the $D$ colorings, the $B$ colors
in each $D$-dimensional box of size $b_1\times b_2\times \cdots
\times b_D$ in the array are all distinct.
\end{proof}

\begin{lemma}
\label{lem:multipleB+1} Any two positions which are colored with
the same color by the first coloring, have colors which differ by
a multiple of $B+1$ by the the $s$-th coloring, for each $s$,
$2\leq s\leq D$.
\end{lemma}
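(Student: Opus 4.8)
The plan is to work with the coordinate differences $d_j = i_j - t_j$ and to exploit the fact that the hypothesis is precisely a single linear relation among the $B_{j-1}$. Writing the two positions as $(i_1,\ldots,i_D)$ and $(t_1,\ldots,t_D)$, the assumption that they receive the same color under the first coloring (for which $B_{s-1}=B_0=1$ and the negative sum is empty) reads
\begin{align*}
\sum_{j=1}^D B_{j-1} d_j = 0.
\end{align*}
I would fix $s$ with $2 \le s \le D$ and denote by $\Delta_s$ the difference of the two $s$-th colors; the goal is to show that $B+1 = B_D+1$ divides $\Delta_s$.

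First I would clear the denominator $B_{s-1}$ appearing in the definition of the $s$-th coloring and compute $B_{s-1}\Delta_s$ directly. Setting $S = \sum_{j=1}^{s-1} B_{j-1} d_j$, the defining formula gives
\begin{align*}
B_{s-1}\Delta_s = -B_D \sum_{j=1}^{s-1} B_{j-1} d_j + \sum_{j=s}^D B_{j-1} d_j = -B_D S + \sum_{j=s}^D B_{j-1} d_j.
\end{align*}
The hypothesis lets me replace the tail sum $\sum_{j=s}^D B_{j-1} d_j$ by $-S$, which collapses the whole expression to the clean identity $B_{s-1}\Delta_s = -(B_D+1)S$.

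The remaining point, and the only genuinely arithmetic step, is to pass from this identity to divisibility of $\Delta_s$ itself by $B_D+1$. Here I would first observe that every color $a^s$ is an integer: in the tail sum each quotient $B_{j-1}/B_{s-1} = \prod_{i=s}^{j-1} b_i$ is an integer, and in the head sum each coefficient $B_{j-1} B_D / B_{s-1}$ is an integer because $B_{s-1}$ divides $B_D$. Consequently $\Delta_s \in \Z$, so $B_{s-1}$ divides $(B_D+1)S$. The key observation is that $\gcd(B_{s-1}, B_D+1) = 1$, since $B_{s-1}$ divides $B_D$ and hence $B_D+1 \equiv 1 \pmod{B_{s-1}}$. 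Therefore $B_{s-1}$ must divide $S$, and writing $S = B_{s-1} t$ for an integer $t$ yields $\Delta_s = -(B_D+1) t$, a multiple of $B+1$, as required.

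I expect the main obstacle to be exactly this last coprimality argument: the relation $B_{s-1}\Delta_s = -(B_D+1)S$ by itself only shows that $\Delta_s$ is a multiple of $(B_D+1)/\gcd(B_{s-1}, B_D+1)$ over the rationals, and one must combine both the integrality of the colors and the divisibility $B_{s-1} \mid B_D$ to upgrade this to genuine divisibility of the integer $\Delta_s$ by the full factor $B+1$.
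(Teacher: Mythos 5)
Your proof is correct, but it finishes differently from the paper's. The paper does not work with the difference $\Delta_s$ and then divide back out by $B_{s-1}$; instead it uses the first-coloring relation to eliminate $i_1$ (writing $i_1 = k - \sum_{j=2}^{D} B_{j-1} i_j$) and substitutes directly into the formula for $a^s_{i_1\cdots i_D}$, which after regrouping yields the closed form
\begin{align*}
a^s_{i_1\cdots i_D} \;=\; -\frac{B_D}{B_{s-1}}\,k \;+\; (B+1)\sum_{j=s}^{D}\frac{B_{j-1}}{B_{s-1}}\,i_j ,
\end{align*}
where every $\frac{B_{j-1}}{B_{s-1}}$ with $j\ge s$ is manifestly an integer. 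The lemma then follows with no arithmetic beyond this identity, and as a bonus the explicit residue $-\frac{B_D}{B_{s-1}}k \pmod{B+1}$ is exactly what the decoding procedure later uses to align the burst-locator codeword of $\cC_s$ with the error location $k$ found by $\cC_1$. Your route — clearing the denominator to get $B_{s-1}\Delta_s = -(B_D+1)S$ and then invoking $\gcd(B_{s-1},B_D+1)=1$ together with the integrality of the colors — is a valid alternative, and all three ingredients (integrality of both sums, $B_{s-1}\mid B_D$, and the resulting coprimality) are justified correctly; what it buys is a symmetric treatment of the two positions, at the cost of an extra divisibility argument that the paper's substitution avoids entirely and of not producing the residue formula needed downstream.
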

\begin{proof}
Assuming the $k$-th position in the codeword of $\code_1$ is
erroneous. This error results from an array error in position
$(i_1,i_2,\ldots,i_D)$ such that
$a_{i_1,i_2,\ldots,i_D}^1=\sum_{j=1}^DB_{j-1}i_j=k$, and hence
$i_1=k-\sum_{j=2}^DB_{j-1}i_j$. The possible error locations in
$\code_s$, $2\leq s \leq D$ are of the form
\begin{eqnarray}
& & a_{i_1i_2\cdots i_D}^s
=\sum_{j=1}^{s-1}-B_{j-1}\frac{B_D}{B_{s-1}}i_j+
\sum_{j=s}^D\frac{B_{j-1}}{B_{s-1}}i_j= \nonumber \\
& &
-\frac{B_D}{B_{s-1}}(k-\sum_{j=2}^DB_{j-1}i_j)+\sum_{j=2}^{s-1}-B_{j-1}\frac{B_D}{B_{s-1}}i_j+
\nonumber \\
& & \sum_{j=s}^D\frac{B_{j-1}}{B_{s-1}}i_j = \nonumber \\
& & -\frac{B_D}{B_{s-1}}k
+\sum_{j=2}^DB_{j-1}\frac{B_D}{B_{s-1}}i_j-\sum_{j=2}^{s-1}B_{j-1}\frac{B_D}{B_{s-1}}i_j+\nonumber
\\
& & \sum_{j=s}^D\frac{B_{j-1}}{B_{s-1}}i_j = \nonumber \\
& & -\frac{B_D}{B_{s-1}}k +
\sum_{j=s}^D\big(B_{j-1}\frac{B_D}{B_{s-1}}+\frac{B_{j-1}}{B_{s-1}}\big)i_j
= \nonumber \\
& & -\frac{B_D}{B_{s-1}}k +
(B+1)\sum_{j=s}^D\frac{B_{j-1}}{B_{s-1}}i_j. \nonumber
\end{eqnarray}
$\frac{B_D}{B_{s-1}}k$ is a constant and therefore, two positions
which have the same color by the first coloring, have colors which
differ in a multiple of $B+1$ by the $s$-th coloring, $2 \leq s
\leq D$.
\end{proof}

\begin{lemma}
\label{lem:exact_position} Given a position $c_s$ in the $s$-th
component code, $1 \leq s \leq D$, the set of equations
\begin{equation}
\label{equation set} c_s=a_{i_1i_2\cdots i_D}^s, 1\leq s \leq D
\end{equation} has exactly one
solution for $(i_1 , \ldots , i_D )$.
\end{lemma}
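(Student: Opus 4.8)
The plan is to recognize that system~(\ref{equation set}) is \emph{linear} in the unknowns $i_1,\ldots,i_D$: writing $\alpha_j^s=(A_D)_{s,j}$ for the coefficient of $i_j$ in $a_{i_1i_2\cdots i_D}^s$, the equations read $A_D\,(i_1,\ldots,i_D)^{T}=(c_1,\ldots,c_D)^{T}$. As already observed right after the definition of the coloring matrix, such a square system has exactly one solution for every right-hand side if and only if $A_D$ is invertible. Thus the whole lemma reduces to showing $\det A_D\neq 0$, and I would in fact compute $\det A_D$ explicitly so as to make the invertibility manifest.

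To compute the determinant I would first strip away the denominators and the $B$-factors. From the definition $\alpha_j^s=-B_{j-1}\frac{B_D}{B_{s-1}}$ for $j<s$ and $\alpha_j^s=\frac{B_{j-1}}{B_{s-1}}$ for $j\geq s$, I would factor $\frac{1}{B_{s-1}}$ out of each row $s$ and then $B_{j-1}$ out of each column $j$. Since $\prod_{s=1}^{D}\frac{1}{B_{s-1}}\cdot\prod_{j=1}^{D}B_{j-1}=1$, these two batches of scalars cancel exactly, so $\det A_D$ equals the determinant of the integer matrix $M$ with $M_{s,j}=1$ for $j\geq s$ and $M_{s,j}=-B_D=-B$ for $j<s$ (recall $B_D=B$). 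Verifying this cancellation and that the residual matrix is precisely $M$ is the main piece of bookkeeping.

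Finally I would evaluate $\det M$ by a short row reduction. Subtracting row $i-1$ from row $i$ for $i=D,D-1,\ldots,2$ (processed from the bottom, so that each subtracted row is still the original) annihilates every entry except a single $-(B+1)$ in column $i-1$ of each row $i\geq 2$, while the first row stays $(1,1,\ldots,1)$. Expanding along the last column—where only the top entry survives—leaves a diagonal $(D-1)\times(D-1)$ minor with all diagonal entries $-(B+1)$, so after tracking the signs one gets $\det M=(B+1)^{D-1}$. Hence $\det A_D=(B+1)^{D-1}$, which is nonzero since $B\geq 1$; therefore $A_D$ is invertible and~(\ref{equation set}) has the unique solution $(i_1,\ldots,i_D)=A_D^{-1}(c_1,\ldots,c_D)^{T}$. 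I expect the only delicate points to be the sign bookkeeping in the cofactor expansion and the factor cancellation in the previous step; the rest is routine linear algebra.
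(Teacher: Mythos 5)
Your proof is correct, and it reaches the same conclusion $\det A_D=(1+B)^{D-1}$ as the paper, but by a genuinely different route. The paper argues by induction on $D$: it expands the determinant along the first row, shows that all cofactors except the first and last vanish because the corresponding minors contain two proportional rows, and then recognizes the two surviving minors $(A_D)_{11}$ and a sign/scale-adjusted $(A_D)_{1D}$ as coloring matrices of $(D-1)$-dimensional box-errors with the same volume $B$ (obtained by merging two of the $b_i$'s), so the induction hypothesis applies to each. Your argument instead strips the structure away at the outset: factoring $\tfrac{1}{B_{s-1}}$ from row $s$ and $B_{j-1}$ from column $j$ produces exactly cancelling scalars and leaves the integer matrix with entries $1$ for $j\geq s$ and $-B$ for $j<s$, whose determinant falls out of a single pass of row subtractions. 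I checked the bookkeeping: the scaled entries, the cancellation $\prod_s B_{s-1}^{-1}\prod_j B_{j-1}=1$, the row reduction leaving $-(B+1)$ in position $(i,i-1)$ for $i\geq 2$, and the sign $(-1)^{1+D}(-1)^{D-1}=1$ in the final cofactor expansion are all right. Your version is shorter and avoids the somewhat delicate step of verifying that the minors really are coloring matrices of merged box-errors; the paper's version highlights the self-similar structure of the coloring matrices, which is of some independent interest but is not needed for the lemma. Both proofs, like the paper's, establish uniqueness from invertibility over $\mathbb{Q}$ and take existence of a solution for granted from the setup, which is fine here.
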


\begin{proof}
We will prove that the coloring matrix is invertible by proving
that its determinant is not equal to {\it zero}. The $(s,j)$ entry
of the coloring matrix $A_D$ is given by
\begin{displaymath}
(A_D)_{s,j} = \left\{ \begin{array}{ll}
-B_{j-1}\frac{B_D}{B_{s-1}} & \textrm{for $j<s$}\\
\frac{B_{j-1}}{B_{s-1}} & \textrm{for $j\geq s$}
\end{array} \right. .
\end{displaymath}
We will prove by induction on $D$, $D \geq 2$, that $|A_D|=(1+B)^{D-1}$,
where $B$ is the volume of the box-error.\\
For the basis of the induction, $A_2=\left(%
\begin{array}{cc}
  1 & b_1 \\
  -b_2 & 1 \\
\end{array}%
\right)$ and hence $|A_2|=1+b_1b_2 =1+b_1 b_2$, where the
two-dimensional
cluster has size $b_1 \times b_2$.\\
For the induction hypothesis we assume that the determinant of
every coloring matrix
of size $(D-1)\times (D-1)$ is given by $|A_{D-1}|=(1+B' )^{D-2}$, where $B'$
is the volume of the corresponding $(D-1)$-dimensional box-error.\\
For the induction step let $A_D$ be a coloring matrix of size
$D\times D$. The determinant of $A_D$ is given by
$$|A_D|=\sum_{j=1}^D(-1)^{j+1}B_{j-1}|A_D
[{1,j}]|,$$ where $A_D [{1,j}]$ is the matrix obtained from $A_D$
by deleting row 1 and column $j$. For $2\leq s\leq D-1$ the $s$-th
row of $A_D [ {1,s}]$ is given by:
$$\frac{1}{B_{s-1}}(-B_DB_{0},-B_DB_{1},\ldots,-B_DB_{s-2},B_{s},\ldots,B_{D-1}),$$
and the $(s+1)$-th row is given by:
$$\frac{1}{B_{s}}(-B_DB_{0},-B_DB_{1},\ldots,-B_DB_{s-2},B_{s},\ldots,B_{D-1}).$$
These two rows are linearly dependent and therefore,
$|(A_D)_{1s}|=0$ for $2 \leq s \leq D-1$ and we have
\begin{align}
\label{eq:detA_D}
|A_D|=B_{0}|(A_D)_{11}|+(-1)^{D+1}B_{D-1}|(A_D)_{1D}|.
\end{align}
The matrix $(A_D)_{11}$ is also a coloring matrix with respect to
coloring related to the box-error $b_2 \times b_3 \times \cdots
\times b_{D-1}\times (b_{D}b_1)$ and according to the induction
assumption its determinant is given by $(A_D)_{11}=(1+B)^{D-2}$.
Let $(A_D')_{1D}$ be the matrix constructed from $(A_D)_{1D}$ by
dividing each element of the last row of $(A_D)_{1D}$ by $-b_D$
and shifting this row cyclically to be the first row. Clearly,
$|(A_D)_{1D}|=-b_D (-1)^{D-2} | (A_D')_{1D}|$. The matrix
$(A_D')_{1D}$ is also a coloring matrix with respect to coloring
related to the error box $b_1 \times b_2 \times \cdots \times
b_{D-2} \times (b_{D-1}b_D)$. Hence, $|(A_D)_{1D}|=-b_D
(-1)^{D-2}(1+B)^{D-2}$ and from (\ref{eq:detA_D}) we have
$|A_D|=(1+B)^{D-2}+(-1)^{D+1}B_{D-1}(-b_D)(-1)^{D-2}(1+B)^{D-2} =
(1+B)^{D-1}$. Thus, $A_D$ is invertible and (\ref{equation set})
has a unique solution.
\end{proof}

The encoding procedure is quite straight forward. First we have to
choose three sets [or $D+1$ sets] of positions for the redundancy
bits as in the previous constructions. Position $k$ in the
codeword of the $s$-th component code is the binary sum of all
positions in the array colored with $k$ by the $s$-th coloring. In
the decoding procedure, the first component code provides a list
of erroneous positions. If $k$ is the erroneous location of the
first code, the difference between equivalent errors in $\cC_s$,
$2\leq s\leq D$ is a multiple of $B+1$ and by
Lemma~\ref{lem:multipleB+1} the value $-\frac{B_D}{B_{s-1}}k$ is
the residue modulo $B+1$ of the error location in the codeword of
$\cC_s$. Therefore, for the codeword of $\cC_s$ the burst is known
up to a cyclic permutation of length $B+1$. $\cC_s$ is a
burst-locator code which can locate where such burst started.
Hence, each burst-locator code locates the position in which the
burst has started with respect to the corresponding dimension.
Therefore, the location of the erroneous position $k$ in the first
code is known in each of the $D-1$ burst-locator codes. Hence, we
can partition the positions of the errors in the $D$ component
codewords into $D$-tuples. Each such tuple is of the form
$(c_1,c_2,\ldots,c_D)$, where $c_s$ is the error location in the
codeword of $\cC_s$, $1\leq s \leq D$. By
Lemma~\ref{lem:exact_position} such a $D$-tuple corresponds to
exactly one position in the $D$-dimensional array, which is an
erroneous position.

To summarize, by Lemmas~\ref{lem:p1_part1} and~\ref{lem:p1_part2}
the $D$ colorings satisfy (p.1), by Lemma~\ref{lem:multipleB+1}
they satisfy (p.2), and by Lemma~\ref{lem:exact_position} they
satisfy (p.3). Hence, by Theorem~\ref{thm:coloring} the code is a
$(b_1 \times b_2 \times \cdots \times b_D)$-cluster-correcting
code. The redundancy of the code is slightly larger than the one
for odd $B$ (see Theorem~\ref{thm:redundancyD}). We omit the
tedious proof.


\section{Lee Spheres Cluster Errors}
\label{sec:Lee}

An error event at a position $(i_1 , i_2 , \ldots , i_D)$ can
affect other positions around it. If we assume that the error
event can be spread up to radius $R$, then any position $(t_1 ,
t_2 , \ldots , t_D)$ such that $\sum_{\ell=1}^D \abs{t_\ell
-i_\ell} \leq R$ might be erroneous. The set of positions $\{ (t_1
, t_2 , \ldots , t_D) ~:~ \sum_{\ell=1}^D \abs{t_\ell -i_\ell}
\leq R \}$ forms a $D$-dimensional Lee sphere with radius
$R$~\cite{BBV,GoWe}. Another important observation is that any
arbitrary cluster-error of size $b$ is located inside a Lee sphere
with radius $\lfloor \frac{b}{2} \rfloor$. We wish to supply a
method to correct all errors which might occur in such a given Lee
sphere. The size of such sphere is $\sum_{j=0}^{\text{min}\{D,R\}}
2^j {D \choose j}{R \choose j}$~\cite{GoWe} and each position can
be erroneous. We can think about three different methods to
perform the task.

\begin{itemize}
\item We can use any method which is able to correct all errors
occurred in a $\underbrace{(2R+1)\times \cdots \times
(2R+1)}_{D\textrm{ times}}$ $D$-dimensional cube since a Lee
sphere with radius $R$ is located inside such a cube. The main
disadvantage of this method is that the code can correct much more
errors than the ones needed in its goal. For example, the size of
the $D$-dimensional cube is $(2R+1)^D$, while the size of the
$D$-dimensional Lee sphere with radius $R$ is $\frac{(2R)^D}{D!} +
O(R^{D-1})$ when $D$ is fixed and $R \longrightarrow \infty$.

\item We can transform the space into another space in a way that
a Lee sphere will be transformed to a shape located inside another
error-shape with a small volume which we know how to correct
efficiently.

\item We can find a coloring and perform correction of errors as
explained in Section~\ref{sec:coloring}.

\end{itemize}
In this section we will explain how we apply the last two methods.

\subsection{Space transformation}

The idea we are going to use is to transform the space into
another space in such a way that a Lee sphere in one space will be
transformed into a shape located inside a box in the second space.
After that we will be able to use the encoding and decoding
introduced in Section~\ref{sec:multi}. We start with the
two-dimensional transformation.

\begin{lemma}
\label{Lee_Loc_Rec} Let $M,M^*$ be infinite two-dimensional arrays
and let $T$ be the transformation from $M$ into $M^*$ defined by
$T\roundb{i_1,i_2} = \roundb{\lceil\frac{i_1+i_2}{2}\rceil,
i_2-i_1}$. Then a Lee sphere with radius $R$ in the array $M$ is
located after the transformation $T$ inside a rectangle of size
$(R+1)\times (2R+1)$ in $M^*$.
\end{lemma}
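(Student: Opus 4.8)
The plan is to verify directly that the transformation $T$ maps a Lee sphere of radius $R$ into the claimed rectangle by tracking the image coordinates. A Lee sphere of radius $R$ centered at the origin (we may assume the center is the origin by translation invariance of the argument) is the set $\{(i_1,i_2) : |i_1|+|i_2| \leq R\}$. Writing $(j_1,j_2) = T(i_1,i_2) = \roundb{\lceil\frac{i_1+i_2}{2}\rceil, i_2-i_1}$, I would compute the range of each new coordinate over all points of the sphere. The key observation is that the Lee metric constraint $|i_1|+|i_2| \leq R$ controls both $i_1+i_2$ and $i_2-i_1$ simultaneously, since for any reals $|i_1+i_2| \leq |i_1|+|i_2|$ and $|i_2-i_1| \leq |i_1|+|i_2|$.

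\textbf{Second coordinate.} The value $j_2 = i_2 - i_1$ satisfies $|j_2| = |i_2 - i_1| \leq |i_1| + |i_2| \leq R$, so $j_2$ ranges within $\{-R, -R+1, \ldots, R\}$, which spans exactly $2R+1$ consecutive integer values. Both extremes $j_2 = \pm R$ are attained (e.g.\ at $(\mp R, 0)$ and $(0, \pm R)$ respectively), so this coordinate occupies a window of width exactly $2R+1$.

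\textbf{First coordinate.} For $j_1 = \lceil \frac{i_1+i_2}{2}\rceil$, I would first bound $s := i_1 + i_2$ via $|s| \leq |i_1| + |i_2| \leq R$, giving $-R \leq s \leq R$, hence $\lceil \frac{s}{2}\rceil$ ranges over $\lceil \frac{-R}{2}\rceil = -\lfloor\frac{R}{2}\rfloor$ up to $\lceil\frac{R}{2}\rceil$. The number of integers in this interval is $\lceil\frac{R}{2}\rceil + \lfloor\frac{R}{2}\rfloor + 1 = R+1$, which is exactly the claimed height $R+1$ of the rectangle. So $j_1$ occupies a window of width $R+1$.

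\textbf{The main obstacle} is the ceiling function: it is not immediate that every intermediate integer value of $j_1$ is actually realized and, more importantly, that the two coordinate ranges do not merely bound the image but correctly give a rectangle of the stated dimensions. Since the lemma only asserts that the image is \emph{located inside} a rectangle of size $(R+1)\times(2R+1)$, I do not need surjectivity onto the rectangle — I only need the two one-sided bounds above, namely that $j_1$ lies in an interval of $R+1$ integers and $j_2$ in an interval of $2R+1$ integers. Thus the argument reduces to the two elementary estimates $|i_1 + i_2| \leq R$ and $|i_2 - i_1| \leq R$ together with careful bookkeeping of the ceiling on the halved sum. I would close by noting that since these bounds depend only on the Lee-distance-$R$ constraint and are translation-covariant (a center $(a_1,a_2)$ shifts the image by the fixed vector $T$ applied to the center, up to the ceiling's parity behavior, which at worst shifts the $j_1$-window by one without changing its width), the same rectangle dimensions hold for a sphere centered anywhere.
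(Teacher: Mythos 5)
Your proof is correct and follows essentially the same route as the paper's: both rest on the two estimates $\abs{i_1+i_2-(a_1+a_2)}\leq R$ and $\abs{(i_2-i_1)-(a_2-a_1)}\leq R$ for a sphere centered at $(a_1,a_2)$, with the ceiling on the halved sum collapsing the first coordinate to a window of $R+1$ consecutive integers; the paper simply writes the general center directly and exhibits the corner $\parenv{\ceilenv{\frac{a_1+a_2-R}{2}},\,a_2-a_1-R}$ of the bounding rectangle rather than reducing to the origin. Your closing remark about the ceiling's parity at worst shifting the $j_1$-window by one without changing its width is the right way to justify the translation reduction, so no gap remains.
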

\begin{proof}
A Lee sphere of radius $R$ with center at $\roundb{i_i,i_2}$ in
the array $M$ includes the set of positions $B_R\roundb{i_1,i_2} =
\{ \roundb{t_1,t_2} \ ~:~ \ |t_1-i_1| + |t_2-i_2| \leq R \} = \{
\roundb{i_1 + R_\ell ,i_2 + R_j} \ ~:~ \ |R_\ell| + |R_j| \leq R
\}. $ $B_R\roundb{i_1,i_2}$ is transformed by $T$ into the set of
positions $B_R^*\roundb{i_1,i_2} = \left\{
\roundb{\left\lceil\frac{i_1+R_\ell+i_2+R_j}{2}\right\rceil,i_2+R_j-i_1-R_\ell}
\  : \ |R_\ell| + |R_j| \leq R \right\}$ of $M^*$. Denote,
$i_1^*=\left\lceil\frac{i_1+i_2-R}{2}\right\rceil, \ i_2^* =
i_2-i_1-R$, and we have that $B_R^*\roundb{i_1,i_2}$ is located
inside the rectangle $\left\{ \roundb{i_1^*+ t_1,i_2^*+t_2} \ ~:~
\ 0\leq \ t_1 \leq R, 0\leq \ t_2 \leq 2R \right\}$.
\end{proof}

The transformation $T$ transforms a parallelogram into a rectangle
(see Fig.~\ref{parallelogram} and~\ref{rectangle}) and hence we
will need some adjustment in our encoding and decoding procedures
if we want to correct Lee sphere clusters in a rectangular array
rather than a parallelogram.

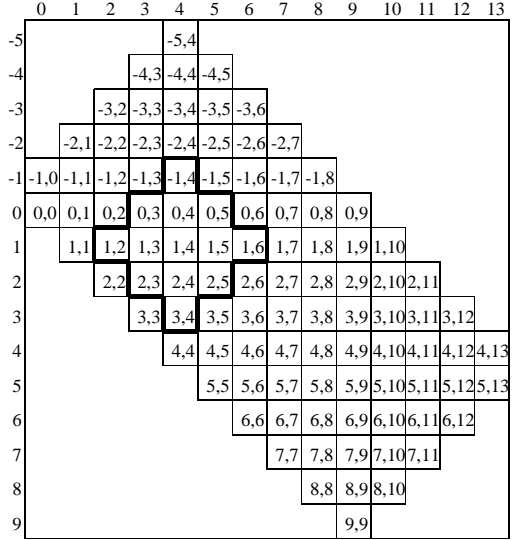
\begin{figure}[htbp]
\begin{center}
\setlength{\unitlength}{0.46cm}
\begin{picture}(-2,15)(8,0)

 \linethickness{0.075mm} \scriptsize{
\put(0,0){\line(0,1){15}} \put(1,12){\line(0,-1){4}}
\put(2,13){\line(0,-1){6}} \put(3,14){\line(0,-1){8}}
\put(4,15){\line(0,-1){10}} \put(5,15){\line(0,-1){11}}
\put(6,14){\line(0,-1){11}} \put(7,13){\line(0,-1){11}}
\put(8,12){\line(0,-1){11}} \put(9,11){\line(0,-1){11}}
\put(10,10){\line(0,-1){10}} \put(11,9){\line(0,-1){8}}
\put(12,8){\line(0,-1){6}} \put(13,7){\line(0,-1){4}}
\put(14,0){\line(0,1){15}}

\put(0,15){\line(1,0){14}}      \put(3,14){\line(1,0){3}}
\put(2,13){\line(1,0){5}}      \put(1,12){\line(1,0){7}}
\put(0,11){\line(1,0){9}}     \put(0,10){\line(1,0){10}}
\put(0,9){\line(1,0){11}}     \put(1,8){\line(1,0){11}}
\put(2,7){\line(1,0){11}}     \put(3,6){\line(1,0){11}}
\put(4,5){\line(1,0){10}}    \put(5,4){\line(1,0){9}}
\put(6,3){\line(1,0){7}}      \put(7,2){\line(1,0){5}}
\put(8,1){\line(1,0){3}}    \put(0,0){\line(1,0){14}}

\put(0.25,9.25){0,0} \put(0.115,10.25){-1,0}
\put(1.115,10.25){-1,1} \put(1.115,11.25){-2,1}
\put(2.115,11.25){-2,2} \put(2.115,12.25){-3,2}
\put(3.115,12.25){-3,3} \put(3.115,13.25){-4,3}
\put(4.115,13.25){-4,4} \put(4.115,14.25){-5,4}

\put(1.25,8.25){1,1} \put(1.25,9.25){0,1} \put(2.25,9.25){0,2}
\put(2.115,10.25){-1,2} \put(3.115,10.25){-1,3}
\put(3.115,11.25){-2,3} \put(4.115,11.25){-2,4}
\put(4.115,12.25){-3,4} \put(5.115,12.25){-3,5}
\put(5.115,13.25){-4,5}

\put(2.25,7.25){2,2} \put(2.25,8.25){1,2} \put(3.25,8.25){1,3}
\put(3.25,9.25){0,3} \put(4.25,9.25){0,4} \put(4.115,10.25){-1,4}
\put(5.115,10.25){-1,5} \put(5.115,11.25){-2,5}
\put(6.115,11.25){-2,6} \put(6.115,12.25){-3,6}

\put(3.25,6.25){3,3} \put(3.25,7.25){2,3} \put(4.25,7.25){2,4}
\put(4.25,8.25){1,4} \put(5.25,8.25){1,5} \put(5.25,9.25){0,5}
\put(6.25,9.25){0,6} \put(6.115,10.25){-1,6}
\put(7.115,10.25){-1,7} \put(7.115,11.25){-2,7}

\put(4.25,5.25){4,4} \put(4.25,6.25){3,4} \put(5.25,6.25){3,5}
\put(5.25,7.25){2,5} \put(6.25,7.25){2,6} \put(6.25,8.25){1,6}
\put(7.25,8.25){1,7} \put(7.25,9.25){0,7} \put(8.25,9.25){0,8}
\put(8.115,10.25){-1,8}

\put(5.25,4.25){5,5} \put(5.25,5.25){4,5} \put(6.25,5.25){4,6}
\put(6.25,6.25){3,6} \put(7.25,6.25){3,7} \put(7.25,7.25){2,7}
\put(8.25,7.25){2,8} \put(8.25,8.25){1,8} \put(9.25,8.25){1,9}
\put(9.25,9.25){0,9}

\put(6.25,3.25){6,6} \put(6.25,4.25){5,6} \put(7.25,4.25){5,7}
\put(7.25,5.25){4,7} \put(8.25,5.25){4,8} \put(8.25,6.25){3,8}
\put(9.25,6.25){3,9} \put(9.25,7.25){2,9} \put(10.06,7.25){2,10}
\put(10.06,8.25){1,10}

\put(7.25,2.25){7,7} \put(7.25,3.25){6,7} \put(8.25,3.25){6,8}
\put(8.25,4.25){5,8} \put(9.25,4.25){5,9} \put(9.25,5.25){4,9}
\put(10.06,5.25){4,10} \put(10.06,6.25){3,10}
\put(11.06,6.25){3,11} \put(11.06,7.25){2,11}

\put(8.25,1.25){8,8} \put(8.25,2.25){7,8} \put(9.25,2.25){7,9}
\put(9.25,3.25){6,9} \put(10.06,3.25){6,10} \put(10.06,4.25){5,10}
\put(11.06,4.25){5,11} \put(11.06,5.25){4,11}
\put(12.06,5.25){4,12} \put(12.06,6.25){3,12}

\put(9.25,0.25){9,9} \put(9.25,1.25){8,9} \put(10.06,1.25){8,10}
\put(10.06,2.25){7,10} \put(11.06,2.25){7,11}
\put(11.06,3.25){6,11} \put(12.06,3.25){6,12}
\put(12.06,4.25){5,12} \put(13.06,4.25){5,13}
\put(13.06,5.25){4,13}

\put(-0.35,0.25){9} \put(-0.35,1.25){8} \put(-0.35,2.25){7}
\put(-0.35,3.25){6} \put(-0.35,4.25){5} \put(-0.35,5.25){4}
\put(-0.35,6.25){3} \put(-0.35,7.25){2} \put(-0.35,8.25){1}
\put(-0.35,9.25){0} \put(-0.47,10.25){-1} \put(-0.47,11.25){-2}
\put(-0.47,12.25){-3} \put(-0.47,13.25){-4} \put(-0.47,14.25){-5}

\put(0.35,15.15){0} \put(1.35,15.15){1} \put(2.35,15.15){2}
\put(3.35,15.15){3} \put(4.35,15.15){4} \put(5.35,15.15){5}
\put(6.35,15.15){6} \put(7.35,15.15){7} \put(8.35,15.15){8}
\put(9.35,15.15){9} \put(10.35,15.15){10} \put(11.35,15.15){11}
\put(12.35,15.15){12} \put(13.35,15.15){13} }
\linethickness{0.5mm} \put(2,8){\line(0,1){1}}
\put(2,8){\line(1,0){1}} \put(2,9){\line(1,0){1}}
\put(3,9){\line(0,1){1}} \put(3,10){\line(1,0){1}}
\put(4,10){\line(0,1){1}} \put(4,11){\line(1,0){1}}
\put(5,10){\line(0,1){1}} \put(5,10){\line(1,0){1}}
\put(6,9){\line(0,1){1}} \put(6,9){\line(1,0){1}}
\put(7,8){\line(0,1){1}} \put(6,8){\line(1,0){1}}
\put(6,7){\line(0,1){1}} \put(5,7){\line(1,0){1}}
\put(5,6){\line(0,1){1}} \put(4,6){\line(1,0){1}}
\put(4,6){\line(0,1){1}} \put(3,7){\line(1,0){1}}
\put(3,7){\line(0,1){1}}

\end{picture}

\caption{Positions in the parallelogram}%
\label{parallelogram}

\end{center}
\end{figure}

\begin{figure}[htbp]
\begin{center}
\setlength{\unitlength}{0.46cm}
\begin{picture}(-5.5,9)(8,0)

\linethickness{0.075mm}
\multiput(0,0)(1,0){11}%
{\line(0,1){10}}
\multiput(0,0)(0,1){11}%
{\line(1,0){10}} \scriptsize{ \put(0.25,9.25){0,0}
\put(1.115,9.25){-1,0} \put(2.115,9.25){-1,1}
\put(3.115,9.25){-2,1} \put(4.115,9.25){-2,2}
\put(5.115,9.25){-3,2} \put(6.115,9.25){-3,3}
\put(7.115,9.25){-4,3} \put(8.115,9.25){-4,4}
\put(9.115,9.25){-5,4}

\put(0.25,8.25){1,1} \put(1.25,8.25){0,1} \put(2.25,8.25){0,2}
\put(3.115,8.25){-1,2} \put(4.115,8.25){-1,3}
\put(5.115,8.25){-2,3} \put(6.115,8.25){-2,4}
\put(7.115,8.25){-3,4} \put(8.115,8.25){-3,5}
\put(9.115,8.25){-4,5}

\put(0.25,7.25){2,2} \put(1.25,7.25){1,2} \put(2.25,7.25){1,3}
\put(3.25,7.25){0,3} \put(4.25,7.25){0,4} \put(5.115,7.25){-1,4}
\put(6.115,7.25){-1,5} \put(7.115,7.25){-2,5}
\put(8.115,7.25){-2,6} \put(9.115,7.25){-3,6}

\put(0.25,6.25){3,3} \put(1.25,6.25){2,3} \put(2.25,6.25){2,4}
\put(3.25,6.25){1,4} \put(4.25,6.25){1,5} \put(5.25,6.25){0,5}
\put(6.25,6.25){0,6} \put(7.115,6.25){-1,6} \put(8.115,6.25){-1,7}
\put(9.115,6.25){-2,7}

\put(0.25,5.25){4,4} \put(1.25,5.25){3,4} \put(2.25,5.25){3,5}
\put(3.25,5.25){2,5} \put(4.25,5.25){2,6} \put(5.25,5.25){1,6}
\put(6.25,5.25){1,7} \put(7.25,5.25){0,7} \put(8.25,5.25){0,8}
\put(9.115,5.25){-1,8}

\put(0.25,4.25){5,5} \put(1.25,4.25){4,5} \put(2.25,4.25){4,6}
\put(3.25,4.25){3,6} \put(4.25,4.25){3,7} \put(5.25,4.25){2,7}
\put(6.25,4.25){2,8} \put(7.25,4.25){1,8} \put(8.25,4.25){1,9}
\put(9.25,4.25){0,9}

\put(0.25,3.25){6,6} \put(1.25,3.25){5,6} \put(2.25,3.25){5,7}
\put(3.25,3.25){4,7} \put(4.25,3.25){4,8} \put(5.25,3.25){3,8}
\put(6.25,3.25){3,9} \put(7.25,3.25){2,9} \put(8.06,3.25){2,10}
\put(9.06,3.25){1,10}

\put(0.25,2.25){7,7} \put(1.25,2.25){6,7} \put(2.25,2.25){6,8}
\put(3.25,2.25){5,8} \put(4.25,2.25){5,9} \put(5.25,2.25){4,9}
\put(6.06,2.25){4,10} \put(7.06,2.25){3,10} \put(8.06,2.25){3,11}
\put(9.06,2.25){2,11}

\put(0.25,1.25){8,8} \put(1.25,1.25){7,8} \put(2.25,1.25){7,9}
\put(3.25,1.25){6,9} \put(4.06,1.25){6,10} \put(5.06,1.25){5,10}
\put(6.06,1.25){5,11} \put(7.06,1.25){4,11} \put(8.06,1.25){4,12}
\put(9.06,1.25){3,12}

\put(0.25,0.25){9,9} \put(1.25,0.25){8,9} \put(2.06,0.25){8,10}
\put(3.06,0.25){7,10} \put(4.06,0.25){7,11} \put(5.06,0.25){6,11}
\put(6.06,0.25){6,12} \put(7.06,0.25){5,12} \put(8.06,0.25){5,13}
\put(9.06,0.25){4,13} } \linethickness{0.5mm}
\put(1,8){\line(1,0){5}} \put(1,5){\line(0,1){3}}
\put(6,5){\line(0,1){3}} \put(1,5){\line(1,0){1}}
\put(2,6){\line(1,0){1}} \put(3,5){\line(1,0){1}}
\put(4,6){\line(1,0){1}} \put(5,5){\line(1,0){1}}

\put(2,5){\line(0,1){1}} \put(3,5){\line(0,1){1}}
\put(4,5){\line(0,1){1}} \put(5,5){\line(0,1){1}}

\end{picture}

\caption{Transformation to rectangle}%
\label{rectangle}

\end{center}
\end{figure}


The construction is generalized to $D$ dimensions. The
transformation $T$ will work between two $D$-dimensional arrays.
For each entry $\roundb{i_1,i_2,\ldots,i_D}$ in the
$D$-dimensional array $M$, $T\roundb{i_1,i_2,\ldots,i_D} =
\roundb{i_1^T,i_2^T,\ldots,i_D^T}$, where $ i_1^T =
\left\lceil\frac{i_1+i_2}{2}\right\rceil, \ i_2^T =
\left\lceil\frac{-i_1+i_2+i_3}{2}\right\rceil,\ldots,
i_j^T=\left\lceil\frac{(-1)^{j+1}i_1+(-1)^ji_2+\cdots-i_{j-1}+i_j+i_{j+1}}{2}\right\rceil,$
for $1\leq j \leq D-1$ and $i_D^T=\Sigma_{j=1}^D (-1)^{D-j} i_j$.
We invoke first the two-dimensional transformation $T$ on the
first two coordinates, then on the second and the third
coordinates and so on. Similarly to the proof of
Lemma~\ref{Lee_Loc_Rec} we can prove the following lemma.

\begin{lemma}
\label{Lee_Loc_Rec_Mult_Dim}  Let $M$ be an infinite
$D$-dimensional array. Then a Lee sphere with radius $R$ in the
array $M$ is located after the transformation $T$ inside a
$D$-dimensional box of size $\underbrace{(R+1)\times (R+1)\times
\cdots \times (R+1)}_{D-1\textrm{ times}} \times(2R+1)$ in $M^*$.
\end{lemma}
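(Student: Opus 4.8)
The plan is to follow the hint in the text and realize $T$ as the composition of $D-1$ planar maps, and then reduce the $D$-dimensional span estimate to a one-variable ceiling computation on each transformed coordinate. First I would make the decomposition precise: let $T_j$ denote the two-dimensional transformation of Lemma~\ref{Lee_Loc_Rec} applied to coordinates $j$ and $j+1$, i.e.\ $\roundb{a,b}\mapsto\roundb{\ceilenv{\frac{a+b}{2}},\,b-a}$ on those two slots and the identity elsewhere, and set $T=T_{D-1}\circ\cdots\circ T_2\circ T_1$. A short induction on the step index shows that after applying $T_1,\dots,T_j$ the first $j$ coordinates have reached their final values $i_1^T,\dots,i_j^T$, while coordinate $j+1$ has become the integer alternating sum $\sum_{k=1}^{j+1}(-1)^{j+1-k}i_k$; matching this against the displayed formula for $i_j^T$ confirms that the composition is exactly $T$. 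The one point that must be noted here is that each planar step outputs the \emph{integer} difference $b-a$ into the slot it passes on, so no ceiling is ever nested inside another; consequently every final coordinate has the clean form $i_j^T=\ceilenv{\frac{L_j}{2}}$ for $1\le j\le D-1$ and $i_D^T=L_D$, where each $L_j$ is an integer linear form in $i_1,\dots,i_{\min\{j+1,D\}}$ whose nonzero coefficients are all $\pm1$.

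Next I would parametrize the Lee sphere of radius $R$ by its center $\roundb{i_1,\dots,i_D}$ and a displacement $\roundb{R_1,\dots,R_D}$ with $\sum_{\ell=1}^D\abs{R_\ell}\le R$. Since each $L_j$ is linear, I would split $L_j=C_j+S_j$, where $C_j$ is the constant obtained by evaluating $L_j$ at the center and $S_j$ is the same $\pm1$ combination evaluated on the displacement. Because the coefficients of $L_j$ are all $\pm1$ and supported on indices $k\le\min\{j+1,D\}$, the triangle inequality gives $\abs{S_j}\le\sum_{k}\abs{R_k}\le R$, so every $S_j$ lies in $[-R,R]$.

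With this in hand the span of each output coordinate becomes a one-variable estimate. For the last coordinate, $i_D^T=C_D+S_D$ ranges over at most the $2R+1$ integers of $C_D+[-R,R]$. For $1\le j\le D-1$, monotonicity of the ceiling in $S_j$ shows that $i_j^T=\ceilenv{\frac{C_j+S_j}{2}}$ attains its extremes at $S_j=\pm R$, so it lies between $\ceilenv{\frac{C_j-R}{2}}$ and $\ceilenv{\frac{C_j+R}{2}}$; since these two arguments differ by $2R$ and hence share parity, the two ceilings differ by exactly $R$, giving at most $R+1$ distinct values. Collecting the bounds places the image inside a box of size $\underbrace{(R+1)\times\cdots\times(R+1)}_{D-1}\times(2R+1)$, as claimed. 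I expect the main obstacle to be purely bookkeeping: verifying the sign pattern of the forms $L_j$ produced by the composition, and checking the parity claim that makes the halved span collapse from $2R$ to exactly $R$; the underlying idea is identical to the two-dimensional argument of Lemma~\ref{Lee_Loc_Rec}.
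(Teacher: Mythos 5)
Your proof is correct and follows exactly the route the paper intends: the paper omits the argument, stating only that it proceeds ``similarly to the proof of Lemma~\ref{Lee_Loc_Rec},'' and your write-up is the natural carrying-out of that remark --- decomposing $T$ into planar steps, splitting each linear form into center plus displacement, and bounding the span coordinate-wise. The only detail you add beyond the two-dimensional proof is the explicit parity observation that $\ceilenv{\frac{C_j+R}{2}}-\ceilenv{\frac{C_j-R}{2}}=R$, which the paper leaves implicit; everything checks out.
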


As a consequence of this transformation we can use the
constructions of Sections~\ref{sec:multi} and~\ref{sec:coloring}
for correction of $D$-dimensional box-error. If we assume that our
codewords are $D$-dimensional arrays rather than $D$-dimensional
parallelograms we can use an array located inside the
parallelogram. The redundancy in this case will be slightly
increased.

\subsection{Tiling and coloring}
\label{subsec:tiling}

When the error shape is a two-dimensional Lee sphere with radius
$R$ we can provide a code with a better redundancy than the code
constructed by using the two-dimensional space transformation. We
will use the coloring method of Section~\ref{sec:coloring}.
Indeed, this construction is a good example for efficient uses of
the coloring method. We choose two appropriate colorings $\Psi_1$
and $\Psi_2$. For a given position $(i_1,i_2)$ in the $n_1 \times
n_2$ two-dimensional array let $\Psi_1 (i_1,i_2) = (R+1)i_1+Ri_2$
and $\Psi_2 (i_1,i_2) = -Ri_1+(R+1)i_2$ (see Fig.~\ref{tiling}).

\begin{lemma}
\label{lem:multiple} If $\Psi_1 (i_1,i_2)=\Psi_1 (t_1,t_2)$ then
$\Psi_2 (i_1,i_2)-\Psi_2 (t_1,t_2)$ is a multiple of $2R^2+2R+1$.
\end{lemma}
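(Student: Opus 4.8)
The plan is to reduce the hypothesis to a single linear equation in the coordinate differences and then exploit the coprimality of $R$ and $R+1$. I would set $a = i_1 - t_1$ and $b = i_2 - t_2$, so that the assumption $\Psi_1 (i_1,i_2)=\Psi_1 (t_1,t_2)$ becomes
\begin{align*}
(R+1)a + R b = 0, \qquad \text{i.e.} \qquad (R+1)a = -Rb.
\end{align*}
The quantity to be analyzed is then $\Psi_2 (i_1,i_2)-\Psi_2 (t_1,t_2) = -Ra + (R+1)b$, so everything comes down to describing the integer solutions $(a,b)$ of the above equation.

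The key step is the observation that $\gcd(R,R+1)=1$. From $(R+1)a = -Rb$ it follows that $R+1$ divides $Rb$, and hence $R+1 \mid b$. Writing $b = (R+1)k$ for an integer $k$ and substituting back gives $a = -Rk$. Equivalently, the integer solutions of $(R+1)a + Rb = 0$ form the rank-one lattice generated by the single vector $(-R,\,R+1)$; the coprimality is exactly what guarantees that $(a,b)$ is an integer multiple of this generator rather than a rational point on the line.

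It then remains only to substitute $a=-Rk$ and $b=(R+1)k$ into the difference of the second coloring:
\begin{align*}
\Psi_2 (i_1,i_2)-\Psi_2 (t_1,t_2) &= -Ra + (R+1)b \\
&= -R(-Rk) + (R+1)(R+1)k \\
&= \bigl(R^2 + (R+1)^2\bigr)k = (2R^2+2R+1)k,
\end{align*}
which is a multiple of $2R^2+2R+1$, as required.

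I do not expect a genuine obstacle here; the only point needing care is the coprimality argument forcing $R+1 \mid b$, which is precisely what pins the solutions down to integer multiples of one vector. I would also note, as motivation for the surrounding construction, that $2R^2+2R+1$ is exactly the determinant of the coloring matrix $\bigl(\begin{smallmatrix} R+1 & R \\ -R & R+1 \end{smallmatrix}\bigr)$ and equals the volume of a two-dimensional Lee sphere of radius $R$; this is the arithmetic reason the pair $\Psi_1,\Psi_2$ will satisfy property (p.3) with the relevant burst length equal to $2R^2+2R+1$.
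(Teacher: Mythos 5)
Your proposal is correct and follows essentially the same route as the paper: both reduce the hypothesis to the linear relation $(R+1)(i_1-t_1)+R(i_2-t_2)=0$ and then use a coprimality argument to pin down the coordinate differences (the paper invokes $\gcd(R,2R^2+2R+1)=1$ to conclude $R\mid t_1-i_1$, while you invoke $\gcd(R,R+1)=1$ to parameterize the solutions as multiples of $(-R,R+1)$, a purely cosmetic difference). Your version avoids the paper's intermediate rational expression and is slightly cleaner, but the underlying argument is the same.
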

\begin{proof}
$\Psi_1 (i_1,i_2)=\Psi_1 (t_1,t_2)$ implies that $(R+1)i_1+Ri_2 =
(R+1)t_1 +Rt_2$. Hence, $R(i_2-t_2)=(R+1)(t_1-i_1)$, i.e.,
$(R+1)(i_2-t_2)=\frac{(R+1)^2}{R}(t_1-i_1)$.

Now, $\Psi_2 (i_1,i_2)-\Psi_2 (t_1,t_2) = -Ri_1 + (R+1)i_2
-(-Rt_1+(R+1)t_2)=R(t_1-i_1)+(R+1)(i_2-t_2)=R(t_1-i_1)+
\frac{(R+1)^2}{R}(t_1-i_1)=\frac{2R^2+2R+1}{R}(t_1-i_1)$.

Since $R$ and $2R^2+2R+1$ are relatively primes it follows that
$R$ divides $t_1-i_1$ and hence $\Psi_2 (i_1,i_2)-\Psi_2
(t_1,t_2)$ is a multiple of $2R^2+2R+1$.
\end{proof}

The effect of these colorings of the two-dimensional arrays is
best seen if we consider a tiling of the two-dimensional space
with Lee spheres with radius $R$. Such tiling is well known and
given in~\cite{BBV,GoWe}. In this tiling each Lee sphere belongs
to two diagonal strips. For the first coloring all relative
positions of the Lee spheres in the same diagonal strip have the
same number; and in the other direction they are congruent modulo
$b^*=2R^2+2R+1$ which is the size of a sphere (see
Fig.~\ref{tiling}).

\begin{lemma}
\label{lem:consecutive} For any given Lee sphere with radius $R$
in a two-dimensional array, the color numbers assigned by
$\Psi_i$, $i=1,2$, to the positions of the Lee sphere are
$2R^2+2R+1$ consecutive integers.
\end{lemma}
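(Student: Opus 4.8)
The plan is to fix an arbitrary Lee sphere and reduce immediately to the sphere centered at the origin. Since both $\Psi_1$ and $\Psi_2$ are linear in $(i_1,i_2)$, translating the center of a sphere by $(a,b)$ shifts every assigned color by the constant $\Psi_i(a,b)$. As ``being $b^*$ consecutive integers'' is invariant under a uniform integer shift, it suffices to treat the sphere $S=\{(i_1,i_2) \ : \ |i_1|+|i_2|\leq R\}$, which contains exactly $b^*=2R^2+2R+1$ positions.

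The strategy is then to combine two facts: (i) $\Psi_1$ takes distinct values on the $b^*$ positions of $S$, and (ii) all these values lie in an integer interval of length exactly $b^*-1$. Once both hold, the $b^*$ distinct integer colors are forced to fill the entire interval, hence they are precisely $b^*$ consecutive integers. For (ii), since $R+1>R>0$, the linear functional $\Psi_1(i_1,i_2)=(R+1)i_1+Ri_2$ attains its extrema over $S$ at the corner positions; a short evaluation at $(\pm R,0)$ and $(0,\pm R)$ gives maximum $R^2+R$ at $(R,0)$ and minimum $-(R^2+R)$ at $(-R,0)$, so the colors lie in $[-(R^2+R),R^2+R]$, an interval containing exactly $2R^2+2R+1=b^*$ integers.

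The crux is (i), the injectivity of $\Psi_1$ on $S$. Suppose $\Psi_1(i_1,i_2)=\Psi_1(t_1,t_2)$ for two positions of $S$ and set $u=i_1-t_1$, $v=i_2-t_2$. The triangle inequality together with $|i_1|+|i_2|\leq R$ and $|t_1|+|t_2|\leq R$ yields $|u|+|v|\leq 2R$. The equality of colors reads $(R+1)u=-Rv$; since $\gcd(R,R+1)=1$, this forces $R\mid u$ and $(R+1)\mid v$. The only multiples of $R+1$ with $|v|\leq 2R$ are $0$ and $\pm(R+1)$, and each nonzero choice would force $|u|=R$ and hence $|u|+|v|=2R+1>2R$, a contradiction; thus $v=0$ and then $u=0$, so the two positions coincide. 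I expect this number-theoretic step --- extracting $u=v=0$ from the coprimality of $R$ and $R+1$ against the $\ell_1$-diameter bound $2R$ --- to be the only genuine obstacle, and it is exactly the arithmetic already underlying Lemma~\ref{lem:multiple}. Finally, the identical argument applied to $\Psi_2(i_1,i_2)=-Ri_1+(R+1)i_2$, whose extrema over $S$ occur at $(0,\pm R)$ and again span the interval $[-(R^2+R),R^2+R]$, settles the case $i=2$ and completes the proof.
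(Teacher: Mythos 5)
Your proof is correct and follows essentially the same route as the paper's: bound the spread of $\Psi_i$ over the sphere by $b^*-1$ via the extremal positions, then use $\gcd(R,R+1)=1$ together with the $\ell_1$-diameter bound $2R$ to get injectivity, so the $b^*$ distinct values must fill the whole interval. Your translation to the origin-centered sphere and the explicit enumeration $v\in\{0,\pm(R+1)\}$ (which cleanly handles the degenerate case the paper glosses over when it asserts $R\leq|t_1-i_1|$) are only cosmetic refinements of the same argument.
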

\begin{proof}
Given a Lee sphere centered at the point $(i_1,i_2)$ it is readily
verified that smallest color number assigned by $\Psi_1$ to a
point in the Lee sphere is $\Psi_1(i_1-R,i_2)$ and the largest
color number is $\Psi_1(i_1+R,i_2)$. Now,
$\Psi_1(i_1+R,i_2)-\Psi_1(i_1-R,i_2)=(R+1)(i_1+R)+Ri_2 -
((R+1)(i_1-R)+Ri_2)=2R^2+2R$. Hence, to complete the proof we have
to show that all the color numbers assigned by $\Psi_1$ to the
positions inside a Lee sphere are distinct.

Let $(i_1,i_2)$ and $(t_1,t_2)$ be a pair of points for which
$\Psi_1 (i_1,i_2)=\Psi_1 (t_1,t_2)$. By the proof of
Lemma~\ref{lem:multiple} we have that $R(i_2-t_2)=(R+1)(t_1-i_1)$.
Hence, $i_2-t_2=\frac{R+1}{R}(t_1-i_1)$ and
$t_1-i_1=\frac{R}{R+1}(i_2-t_2)$. Since $R$ and $R+1$ are
relatively primes it follows that $R$ divides $t_1-i_1$ and $R+1$
divides $i_2-t_2$. This implies that $R \leq | t_1-i_1 |$ and $R+1
\leq | i_2 - t_2 |$. Therefore, $2R+1 \leq | t_1-i_1 | + | i_2 -
t_2 |$ and the two points $(i_1,i_2)$ and $(t_1,t_2)$ cannot be
contained in the same Lee sphere.

Thus, all color numbers assigned by $\Psi_1$, to the positions
contained in a Lee sphere are $2R^2+2R+1$ consecutive integers.
The same proof holds for $\Psi_2$.
\end{proof}

\begin{lemma}
\label{lem:invert} The coloring matrix defined by $\Psi_1$ and
$\Psi_2$ is an invertible matrix.
\end{lemma}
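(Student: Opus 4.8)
The plan is to recognize that this is an immediate consequence of the linear structure of the two colorings together with the characterization, already established in the text, that a linear coloring satisfies property (p.2) precisely when its coloring matrix is invertible, i.e. when its determinant is nonzero. So the whole task reduces to reading off the $2\times 2$ coloring matrix from $\Psi_1$ and $\Psi_2$ and computing one determinant.

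First I would read off the matrix. Since $\Psi_1(i_1,i_2)=(R+1)i_1+Ri_2$ and $\Psi_2(i_1,i_2)=-Ri_1+(R+1)i_2$ are linear colorings with coefficients $\alpha_1^1=R+1$, $\alpha_2^1=R$, $\alpha_1^2=-R$, $\alpha_2^2=R+1$, the coloring matrix $A_2$, whose $(s,j)$ entry is $\alpha_j^s$, is
\begin{displaymath}
A_2=\begin{pmatrix} R+1 & R \\ -R & R+1 \end{pmatrix}.
\end{displaymath}
This is the $R=R$ analogue of the $A_2$ that appeared in Lemma~\ref{lem:exact_position}, and I would point that out so the reader sees the same bookkeeping at work.

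Next I would compute the determinant directly:
\begin{displaymath}
|A_2| = (R+1)^2 - (R)(-R) = (R+1)^2 + R^2 = 2R^2+2R+1.
\end{displaymath}
I would then observe that this value is exactly the Lee sphere size $b^{*}=2R^2+2R+1$ that already surfaced in Lemmas~\ref{lem:multiple} and~\ref{lem:consecutive}, which is a reassuring consistency check rather than a coincidence. Since $2R^2+2R+1\geq 1>0$ for every integer $R\geq 1$, the determinant is nonzero, hence $A_2$ is invertible, which is exactly the claim.

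There is no real obstacle here: the statement is a one-line determinant computation, and the only point requiring the slightest care is correctly transcribing the coefficient matrix from the definitions of $\Psi_1,\Psi_2$ (in particular keeping the sign of the $-R$ entry). The conceptual content was already done in the surrounding lemmas — Lemma~\ref{lem:multiple} secures property (p.3) and Lemma~\ref{lem:consecutive} secures property (p.1) — and invertibility of $A_2$ is precisely what supplies property (p.2), so this lemma is the last ingredient needed to invoke Theorem~\ref{thm:coloring} for the Lee sphere coloring.
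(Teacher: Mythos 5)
Your proof is correct and follows essentially the same route as the paper: the paper writes down the same $2\times 2$ coloring matrix and simply declares it ``clearly'' invertible, while you additionally compute the determinant $(R+1)^2+R^2=2R^2+2R+1>0$, which is a welcome (and correct) filling-in of that step.
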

\begin{proof}
The coloring matrix defined by $\Psi_1$ and $\Psi_2$ is
\begin{displaymath}
\left(%
\begin{array}{cc}
  R+1 & R \\
  -R & R+1 \\
\end{array}%
\right)
\end{displaymath}
which is clearly an invertible matrix.
\end{proof}

By Lemmas~\ref{lem:multiple}, \ref{lem:consecutive}
,and~\ref{lem:invert}, we have that (p.1), (p.2), and (p.3), are
satisfied respectively. Hence, by Theorem~\ref{thm:coloring} the
code constructed is capable to correct a Lee-sphere error with
radius $R$ and size $b^*$.

We choose $r+m+1$ appropriate redundancy bits similarly to the
constructions in Section~\ref{sec:multi}. We use two component
codes. One code is a $b^*$-burst-correcting code of length
$2^{r-b^*+1}-1$, where $2^{r-b^*+1}$ is the least power of 2
greater than $(R+1)(n_1 -1)+R(n_2-1)+1$ which is the number of
colors needed to color the array with the first coloring. The
second component code is a $b^*$-burst-locator code of length
$2^m-1$ (such a code exists since $b^*=2R^2+2R+1$ is an odd
integer). We take $2^m$ to be the least power of 2 greater than
$(R+1)(n_2 -1)+R(n_1-1)+1$ which is the number of colors needed to
color the array with the second coloring (for the simplicity of
the computations we will take $2^{r-b^*+1} > (R+1)n_1 + Rn_2 \geq
2^{r-b^*}$ and $2^m > (R+1)n_2 +Rn_1 \geq 2^{m-1}$). The
redundancy of the $n_1 \times n_2$ two-dimensional code is at most
$\lceil \text{log}_2 (n_1n_2) \rceil + b^* + \lceil 2 \text{log}_2
(2R+1) \rceil +2$. If $n_1=n_2=n$ then the redundancy of the
two-dimensional code is at most $\lceil \text{log}_2 n^2 \rceil +
b^* +\lceil 2\text{log}_2(2R+1) \rceil$ and if each codeword is a
rhombus of size $n^2$ then the redundancy is at most $\lceil
\text{log}_2 n^2 \rceil + b^* +\lceil \text{log}_2 b^* \rceil$.

\begin{figure*}
\begin{center}
\mbox{
    \subfigure[]{ \setlength{\unitlength}{0.32cm}
\begin{picture}(25.5,24)(0,0)
\linethickness{0.01mm}
\multiput(4,4)(1,0){17}%
{\line(0,1){16}}
\multiput(4,4)(0,1){17}%
{\line(1,0){16}}

\linethickness{0.6mm}

\put(0,16){\line(0,1){1}}

\put(1,15){\line(0,1){1}} \put(1,8){\line(0,1){1}}
\put(1,17){\line(0,1){1}}

\put(2,7){\line(0,1){1}} \put(2,9){\line(0,1){1}}
\put(2,13){\line(0,1){2}} \put(2,18){\line(0,1){1}}

\put(3,5){\line(0,1){2}} \put(3,10){\line(0,1){1}}
\put(3,12){\line(0,1){1}} \put(3,14){\line(0,1){1}}
\put(3,18){\line(0,1){1}}

\put(4,4){\line(0,1){1}} \put(4,6){\line(0,1){1}}
\put(4,10){\line(0,1){2}} \put(4,15){\line(0,1){1}}
\put(4,17){\line(0,1){1}} \put(4,19){\line(0,1){1}}

\put(5,2){\line(0,1){2}} \put(5,7){\line(0,1){1}}
\put(5,9){\line(0,1){1}} \put(5,11){\line(0,1){1}}
\put(5,15){\line(0,1){2}} \put(5,20){\line(0,1){1}}

\put(6,1){\line(0,1){1}} \put(6,3){\line(0,1){1}}
\put(6,7){\line(0,1){2}} \put(6,12){\line(0,1){1}}
\put(6,14){\line(0,1){1}} \put(6,16){\line(0,1){1}}
\put(6,20){\line(0,1){1}}

\put(7,0){\line(0,1){1}} \put(7,4){\line(0,1){1}}
\put(7,6){\line(0,1){1}} \put(7,8){\line(0,1){1}}
\put(7,12){\line(0,1){2}} \put(7,17){\line(0,1){1}}
\put(7,19){\line(0,1){1}} \put(7,21){\line(0,1){1}}

\put(8,0){\line(0,1){1}} \put(8,4){\line(0,1){2}}
\put(8,9){\line(0,1){1}} \put(8,11){\line(0,1){1}}
\put(8,13){\line(0,1){1}} \put(8,17){\line(0,1){2}}
\put(8,22){\line(0,1){1}}

\put(9,1){\line(0,1){1}} \put(9,3){\line(0,1){1}}
\put(9,5){\line(0,1){1}} \put(9,9){\line(0,1){2}}
\put(9,14){\line(0,1){1}} \put(9,16){\line(0,1){1}}
\put(9,18){\line(0,1){1}} \put(9,22){\line(0,1){1}}

\put(10,2){\line(0,1){1}} \put(10,6){\line(0,1){1}}
\put(10,8){\line(0,1){1}} \put(10,10){\line(0,1){1}}
\put(10,14){\line(0,1){2}} \put(10,19){\line(0,1){1}}
\put(10,21){\line(0,1){1}}

\put(11,2){\line(0,1){1}} \put(11,6){\line(0,1){2}}
\put(11,11){\line(0,1){1}} \put(11,13){\line(0,1){1}}
\put(11,15){\line(0,1){1}} \put(11,19){\line(0,1){2}}

\put(12,3){\line(0,1){1}} \put(12,5){\line(0,1){1}}
\put(12,7){\line(0,1){1}} \put(12,11){\line(0,1){2}}
\put(12,16){\line(0,1){1}} \put(12,18){\line(0,1){1}}
\put(12,20){\line(0,1){1}}

\put(13,3){\line(0,1){2}} \put(13,8){\line(0,1){1}}
\put(13,10){\line(0,1){1}} \put(13,12){\line(0,1){1}}
\put(13,16){\line(0,1){2}} \put(13,21){\line(0,1){1}}

\put(14,2){\line(0,1){1}} \put(14,4){\line(0,1){1}}
\put(14,8){\line(0,1){2}} \put(14,13){\line(0,1){1}}
\put(14,15){\line(0,1){1}} \put(14,17){\line(0,1){1}}
\put(14,21){\line(0,1){1}}

\put(15,1){\line(0,1){1}} \put(15,5){\line(0,1){1}}
\put(15,7){\line(0,1){1}} \put(15,9){\line(0,1){1}}
\put(15,13){\line(0,1){2}} \put(15,18){\line(0,1){1}}
\put(15,20){\line(0,1){1}} \put(15,22){\line(0,1){1}}

\put(16,1){\line(0,1){1}} \put(16,5){\line(0,1){2}}
\put(16,10){\line(0,1){1}} \put(16,12){\line(0,1){1}}
\put(16,14){\line(0,1){1}} \put(16,18){\line(0,1){2}}
\put(16,23){\line(0,1){1}}

\put(17,2){\line(0,1){1}} \put(17,4){\line(0,1){1}}
\put(17,6){\line(0,1){1}} \put(17,10){\line(0,1){2}}
\put(17,15){\line(0,1){1}} \put(17,17){\line(0,1){1}}
\put(17,19){\line(0,1){1}} \put(17,23){\line(0,1){1}}

\put(18,3){\line(0,1){1}} \put(18,7){\line(0,1){1}}
\put(18,9){\line(0,1){1}} \put(18,11){\line(0,1){1}}
\put(18,15){\line(0,1){2}} \put(18,20){\line(0,1){1}}
\put(18,22){\line(0,1){1}}

\put(19,3){\line(0,1){1}} \put(19,7){\line(0,1){2}}
\put(19,12){\line(0,1){1}} \put(19,14){\line(0,1){1}}
\put(19,16){\line(0,1){1}} \put(19,20){\line(0,1){2}}

\put(20,4){\line(0,1){1}} \put(20,6){\line(0,1){1}}
\put(20,8){\line(0,1){1}} \put(20,12){\line(0,1){2}}
\put(20,17){\line(0,1){1}} \put(20,19){\line(0,1){1}}

\put(21,5){\line(0,1){1}} \put(21,9){\line(0,1){1}}
\put(21,11){\line(0,1){1}} \put(21,13){\line(0,1){1}}
\put(21,17){\line(0,1){2}}

\put(22,5){\line(0,1){1}} \put(22,9){\line(0,1){2}}
\put(22,14){\line(0,1){1}} \put(22,16){\line(0,1){1}}

\put(23,6){\line(0,1){1}} \put(23,8){\line(0,1){1}}
\put(23,15){\line(0,1){1}}

\put(24,7){\line(0,1){1}}

\put(7,0){\line(1,0){1}}

\put(6,1){\line(1,0){1}} \put(8,1){\line(1,0){1}}
\put(15,1){\line(1,0){1}}

\put(5,2){\line(1,0){1}} \put(9,2){\line(1,0){2}}
\put(14,2){\line(1,0){1}} \put(16,2){\line(1,0){1}}

\put(5,3){\line(1,0){1}} \put(9,3){\line(1,0){1}}
\put(11,3){\line(1,0){1}} \put(13,3){\line(1,0){1}}
\put(17,3){\line(1,0){2}}

\put(4,4){\line(1,0){1}} \put(6,4){\line(1,0){1}}
\put(8,4){\line(1,0){1}} \put(12,4){\line(1,0){2}}
\put(17,4){\line(1,0){1}} \put(19,4){\line(1,0){1}}

\put(3,5){\line(1,0){1}} \put(7,5){\line(1,0){2}}
\put(12,5){\line(1,0){1}} \put(14,5){\line(1,0){1}}
\put(16,5){\line(1,0){1}} \put(20,5){\line(1,0){2}}

\put(3,6){\line(1,0){1}} \put(7,6){\line(1,0){1}}
\put(9,6){\line(1,0){1}} \put(11,6){\line(1,0){1}}
\put(15,6){\line(1,0){2}} \put(20,6){\line(1,0){1}}
\put(22,6){\line(1,0){1}}

\put(2,7){\line(1,0){1}} \put(4,7){\line(1,0){1}}
\put(6,7){\line(1,0){1}} \put(10,7){\line(1,0){2}}
\put(15,7){\line(1,0){1}} \put(17,7){\line(1,0){1}}
\put(19,7){\line(1,0){1}} \put(23,7){\line(1,0){1}}

\put(1,8){\line(1,0){1}} \put(5,8){\line(1,0){2}}
\put(10,8){\line(1,0){1}} \put(12,8){\line(1,0){1}}
\put(14,8){\line(1,0){1}} \put(18,8){\line(1,0){2}}
\put(23,8){\line(1,0){1}}

\put(1,9){\line(1,0){1}} \put(5,9){\line(1,0){1}}
\put(7,9){\line(1,0){1}} \put(9,9){\line(1,0){1}}
\put(13,9){\line(1,0){2}} \put(18,9){\line(1,0){1}}
\put(20,9){\line(1,0){1}} \put(22,9){\line(1,0){1}}

\put(2,10){\line(1,0){1}} \put(4,10){\line(1,0){1}}
\put(8,10){\line(1,0){2}} \put(13,10){\line(1,0){1}}
\put(15,10){\line(1,0){1}} \put(17,10){\line(1,0){1}}
\put(21,10){\line(1,0){1}}

\put(3,11){\line(1,0){2}} \put(8,11){\line(1,0){1}}
\put(10,11){\line(1,0){1}} \put(12,11){\line(1,0){1}}
\put(16,11){\line(1,0){2}} \put(21,11){\line(1,0){1}}

\put(3,12){\line(1,0){1}} \put(5,12){\line(1,0){1}}
\put(7,12){\line(1,0){1}} \put(11,12){\line(1,0){2}}
\put(16,12){\line(1,0){1}} \put(18,12){\line(1,0){1}}
\put(20,12){\line(1,0){1}}

\put(2,13){\line(1,0){1}} \put(6,13){\line(1,0){2}}
\put(11,13){\line(1,0){1}} \put(13,13){\line(1,0){1}}
\put(15,13){\line(1,0){1}} \put(19,13){\line(1,0){2}}

\put(2,14){\line(1,0){1}} \put(6,14){\line(1,0){1}}
\put(8,14){\line(1,0){1}} \put(10,14){\line(1,0){1}}
\put(14,14){\line(1,0){2}} \put(19,14){\line(1,0){1}}
\put(21,14){\line(1,0){1}}

\put(1,15){\line(1,0){1}} \put(3,15){\line(1,0){1}}
\put(5,15){\line(1,0){1}} \put(9,15){\line(1,0){2}}
\put(14,15){\line(1,0){1}} \put(16,15){\line(1,0){1}}
\put(18,15){\line(1,0){1}} \put(22,15){\line(1,0){1}}

\put(0,16){\line(1,0){1}} \put(4,16){\line(1,0){2}}
\put(9,16){\line(1,0){1}} \put(11,16){\line(1,0){1}}
\put(13,16){\line(1,0){1}} \put(17,16){\line(1,0){2}}
\put(22,16){\line(1,0){1}}

\put(0,17){\line(1,0){1}} \put(4,17){\line(1,0){1}}
\put(6,17){\line(1,0){1}} \put(8,17){\line(1,0){1}}
\put(12,17){\line(1,0){2}} \put(17,17){\line(1,0){1}}
\put(19,17){\line(1,0){1}} \put(21,17){\line(1,0){1}}

\put(1,18){\line(1,0){1}} \put(3,18){\line(1,0){1}}
\put(7,18){\line(1,0){2}} \put(12,18){\line(1,0){1}}
\put(14,18){\line(1,0){1}} \put(16,18){\line(1,0){1}}
\put(20,18){\line(1,0){1}}

\put(2,19){\line(1,0){2}} \put(7,19){\line(1,0){1}}
\put(9,19){\line(1,0){1}} \put(11,19){\line(1,0){1}}
\put(15,19){\line(1,0){2}} \put(20,19){\line(1,0){1}}

\put(4,20){\line(1,0){1}} \put(6,20){\line(1,0){1}}
\put(10,20){\line(1,0){2}} \put(15,20){\line(1,0){1}}
\put(17,20){\line(1,0){1}} \put(19,20){\line(1,0){1}}

\put(5,21){\line(1,0){2}} \put(10,21){\line(1,0){1}}
\put(12,21){\line(1,0){1}} \put(14,21){\line(1,0){1}}
\put(18,21){\line(1,0){1}}

\put(7,22){\line(1,0){1}} \put(9,22){\line(1,0){1}}
\put(13,22){\line(1,0){2}} \put(18,22){\line(1,0){1}}

\put(8,23){\line(1,0){1}} \put(15,23){\line(1,0){1}}
\put(17,23){\line(1,0){1}}

\put(16,24){\line(1,0){1}}

\scriptsize{  \put(4.38,19.23){0} \put(5.38,19.23){2}
\put(4.38,18.23){3} \put(6.38,19.23){4} \put(5.38,18.23){5}
\put(4.38,17.23){6} \put(6.38,18.23){7} \put(5.38,17.23){8}
\put(7.18,18.23){9} \put(6.18,17.23){10} \put(5.18,16.23){11}

\put(7.38,19.23){6} \put(8.38,19.23){8} \put(9.18,19.23){10}
\put(8.18,18.23){11} \put(4.18,16.23){9}

\put(7.18,17.23){12} \put(6.18,16.23){13} \put(5.18,15.23){14}
\put(7.18,16.23){15} \put(6.18,15.23){16} \put(8.18,16.23){17}
\put(7.18,15.23){18} \put(6.18,14.23){19} \put(8.18,15.23){20}
\put(7.18,14.23){21} \put(9.18,15.23){22} \put(8.18,14.23){23}
\put(7.18,13.23){24}

\put(10.18,19.23){12} \put(9.18,18.23){13} \put(8.18,17.23){14}
\put(10.18,18.23){15} \put(9.18,17.23){16} \put(11.18,18.23){17}
\put(10.18,17.23){18} \put(9.18,16.23){19} \put(11.18,17.23){20}
\put(10.18,16.23){21} \put(12.18,17.23){22} \put(11.18,16.23){23}
\put(10.18,15.23){24}

\put(11.18,19.23){14} \put(12.18,19.23){16} \put(13.18,19.23){18}
\put(12.18,18.23){19} \put(14.18,19.23){20} \put(13.18,18.23){21}
\put(15.18,19.23){22} \put(14.18,18.23){23} \put(13.18,17.23){24}

\put(4.18,15.23){12} \put(4.18,14.23){15} \put(5.18,14.23){17}
\put(4.18,13.23){18} \put(5.18,13.23){20} \put(4.18,12.23){21}
\put(6.18,13.23){22} \put(5.18,12.23){23} \put(4.18,11.23){24}

\put(9.18,14.23){25} \put(8.18,13.23){26} \put(7.18,12.23){27}
\put(9.18,13.23){28} \put(8.18,12.23){29} \put(10.18,13.23){30}
\put(9.18,12.23){31} \put(8.18,11.23){32} \put(10.18,12.23){33}
\put(9.18,11.23){34} \put(11.18,12.23){35} \put(10.18,11.23){36}
\put(9.18,10.23){37}

\put(6.18,12.23){25} \put(5.18,11.23){26} \put(4.18,10.23){27}
\put(6.18,11.23){28} \put(5.18,10.23){29} \put(7.18,11.23){30}
\put(6.18,10.23){31} \put(5.18,9.23){32} \put(7.18,10.23){33}
\put(6.18,9.23){34} \put(8.18,10.23){35} \put(7.18,9.23){36}
\put(6.18,8.23){37}

\put(4.18,9.23){30} \put(4.18,8.23){33} \put(4.18,7.23){36}
\put(5.18,8.23){35}

\put(12.18,16.23){25} \put(11.18,15.23){26} \put(10.18,14.23){27}
\put(12.18,15.23){28} \put(11.18,14.23){29} \put(13.18,15.23){30}
\put(12.18,14.23){31} \put(11.18,13.23){32} \put(13.18,14.23){33}
\put(12.18,13.23){34} \put(14.18,14.23){35} \put(13.18,13.23){36}
\put(12.18,12.23){37}

\put(15.18,18.23){25} \put(14.18,17.23){26} \put(13.18,16.23){27}
\put(15.18,17.23){28} \put(14.18,16.23){29} \put(16.18,17.23){30}
\put(15.18,16.23){31} \put(14.18,15.23){32} \put(16.18,16.23){33}
\put(15.18,15.23){34} \put(17.18,16.23){35} \put(16.18,15.23){36}
\put(15.18,14.23){37}

\put(16.18,19.23){24}

\put(17.18,19.23){26} \put(16.18,18.23){27} \put(18.18,19.23){28}
\put(17.18,18.23){29} \put(19.18,19.23){30} \put(18.18,18.23){31}
\put(17.18,17.23){32} \put(19.18,18.23){33} \put(18.18,17.23){34}
\put(19.18,17.23){36} \put(18.18,16.23){37}

\put(11.18,11.23){38} \put(10.18,10.23){39} \put(9.18,9.23){40}
\put(11.18,10.23){41} \put(10.18,9.23){42} \put(12.18,10.23){43}
\put(11.18,9.23){44} \put(10.18,8.23){45} \put(12.18,9.23){46}
\put(11.18,8.23){47} \put(13.18,9.23){48} \put(12.18,8.23){49}
\put(11.18,7.23){50}

\put(8.18,9.23){38} \put(7.18,8.23){39} \put(6.18,7.23){40}
\put(8.18,8.23){41} \put(7.18,7.23){42} \put(9.18,8.23){43}
\put(8.18,7.23){44} \put(7.18,6.23){45} \put(9.18,7.23){46}
\put(8.18,6.23){47} \put(10.18,7.23){48} \put(9.18,6.23){49}
\put(8.18,5.23){50}

\put(5.18,7.23){38} \put(4.18,6.23){39} \put(5.18,6.23){41}
\put(4.18,5.23){42} \put(6.18,6.23){43} \put(5.18,5.23){44}
\put(4.18,4.23){45} \put(6.18,5.23){46} \put(5.18,4.23){47}
\put(7.18,5.23){48} \put(6.18,4.23){49}

\put(14.18,13.23){38} \put(13.18,12.23){39} \put(12.18,11.23){40}
\put(14.18,12.23){41} \put(13.18,11.23){42} \put(15.18,12.23){43}
\put(14.18,11.23){44} \put(13.18,10.23){45} \put(15.18,11.23){46}
\put(14.18,10.23){47} \put(16.18,11.23){48} \put(15.18,10.23){49}
\put(14.18,9.23){50}

\put(17.18,15.23){38} \put(16.18,14.23){39} \put(15.18,13.23){40}
\put(17.18,14.23){41} \put(16.18,13.23){42} \put(18.18,14.23){43}
\put(17.18,13.23){44} \put(16.18,12.23){45} \put(18.18,13.23){46}
\put(17.18,12.23){47} \put(19.18,13.23){48} \put(18.18,12.23){49}
\put(17.18,11.23){50}

\put(19.18,16.23){39} \put(18.18,15.23){40} \put(19.18,15.23){42}
\put(19.18,14.23){45}

\put(13.18,8.23){51} \put(12.18,7.23){52} \put(11.18,6.23){53}
\put(13.18,7.23){54} \put(12.18,6.23){55}  \put(14.18,7.23){56}
\put(13.18,6.23){57} \put(12.18,5.23){58} \put(14.18,6.23){59}
\put(13.18,5.23){60}  \put(15.18,6.23){61} \put(14.18,5.23){62}
\put(13.18,4.23){63}

\put(10.18,6.23){51} \put(9.18,5.23){52} \put(8.18,4.23){53}
\put(10.18,5.23){54} \put(9.18,4.23){55} \put(11.18,5.23){56}
\put(10.18,4.23){57} \put(11.18,4.23){59} \put(12.18,4.23){61}

\put(7.18,4.23){51}

\put(16.18,10.23){51} \put(15.18,9.23){52} \put(14.18,8.23){53}
\put(16.18,9.23){54} \put(15.18,8.23){55}  \put(17.18,9.23){56}
\put(16.18,8.23){57} \put(15.18,7.23){58} \put(17.18,8.23){59}
\put(16.18,7.23){60}  \put(18.18,8.23){61} \put(17.18,7.23){62}
\put(16.18,6.23){63}

\put(19.18,12.23){51} \put(18.18,11.23){52} \put(17.18,10.23){53}
\put(19.18,11.23){54} \put(18.18,10.23){55} \put(19.18,10.23){57}
\put(18.18,9.23){58} \put(19.18,9.23){60} \put(19.18,8.23){63}

\put(15.18,5.23){64} \put(14.18,4.23){65} \put(15.18,4.23){67}
\put(16.18,4.23){69}

\put(18.18,7.23){64} \put(17.18,6.23){65} \put(16.18,5.23){66}
\put(18.18,6.23){67} \put(17.18,5.23){68}  \put(19.18,6.23){69}
\put(18.18,5.23){70} \put(17.18,4.23){71} \put(19.18,5.23){72}
\put(18.18,4.23){73}  \put(19.18,4.23){75}

\put(19.18,7.23){66}}
\end{picture}}

\subfigure[]{ \setlength{\unitlength}{0.32cm}
\begin{picture}(25.5,24)(0,0)
\linethickness{0.01mm}
\multiput(4,4)(1,0){17}%
{\line(0,1){16}}
\multiput(4,4)(0,1){17}%
{\line(1,0){16}} \linethickness{0.6mm} \put(0,16){\line(0,1){1}}
\put(1,15){\line(0,1){1}} \put(1,8){\line(0,1){1}}
\put(1,17){\line(0,1){1}} \put(2,7){\line(0,1){1}}
\put(2,9){\line(0,1){1}} \put(2,13){\line(0,1){2}}
\put(2,18){\line(0,1){1}}

\put(3,5){\line(0,1){2}} \put(3,10){\line(0,1){1}}
\put(3,12){\line(0,1){1}} \put(3,14){\line(0,1){1}}
\put(3,18){\line(0,1){1}}

\put(4,4){\line(0,1){1}} \put(4,6){\line(0,1){1}}
\put(4,10){\line(0,1){2}} \put(4,15){\line(0,1){1}}
\put(4,17){\line(0,1){1}} \put(4,19){\line(0,1){1}}

\put(5,2){\line(0,1){2}} \put(5,7){\line(0,1){1}}
\put(5,9){\line(0,1){1}} \put(5,11){\line(0,1){1}}
\put(5,15){\line(0,1){2}} \put(5,20){\line(0,1){1}}

\put(6,1){\line(0,1){1}} \put(6,3){\line(0,1){1}}
\put(6,7){\line(0,1){2}} \put(6,12){\line(0,1){1}}
\put(6,14){\line(0,1){1}} \put(6,16){\line(0,1){1}}
\put(6,20){\line(0,1){1}}

\put(7,0){\line(0,1){1}} \put(7,4){\line(0,1){1}}
\put(7,6){\line(0,1){1}} \put(7,8){\line(0,1){1}}
\put(7,12){\line(0,1){2}} \put(7,17){\line(0,1){1}}
\put(7,19){\line(0,1){1}} \put(7,21){\line(0,1){1}}

\put(8,0){\line(0,1){1}} \put(8,4){\line(0,1){2}}
\put(8,9){\line(0,1){1}} \put(8,11){\line(0,1){1}}
\put(8,13){\line(0,1){1}} \put(8,17){\line(0,1){2}}
\put(8,22){\line(0,1){1}}

\put(9,1){\line(0,1){1}} \put(9,3){\line(0,1){1}}
\put(9,5){\line(0,1){1}} \put(9,9){\line(0,1){2}}
\put(9,14){\line(0,1){1}} \put(9,16){\line(0,1){1}}
\put(9,18){\line(0,1){1}} \put(9,22){\line(0,1){1}}

\put(10,2){\line(0,1){1}} \put(10,6){\line(0,1){1}}
\put(10,8){\line(0,1){1}} \put(10,10){\line(0,1){1}}
\put(10,14){\line(0,1){2}} \put(10,19){\line(0,1){1}}
\put(10,21){\line(0,1){1}}

\put(11,2){\line(0,1){1}} \put(11,6){\line(0,1){2}}
\put(11,11){\line(0,1){1}} \put(11,13){\line(0,1){1}}
\put(11,15){\line(0,1){1}} \put(11,19){\line(0,1){2}}

\put(12,3){\line(0,1){1}} \put(12,5){\line(0,1){1}}
\put(12,7){\line(0,1){1}} \put(12,11){\line(0,1){2}}
\put(12,16){\line(0,1){1}} \put(12,18){\line(0,1){1}}
\put(12,20){\line(0,1){1}}

\put(13,3){\line(0,1){2}} \put(13,8){\line(0,1){1}}
\put(13,10){\line(0,1){1}} \put(13,12){\line(0,1){1}}
\put(13,16){\line(0,1){2}} \put(13,21){\line(0,1){1}}

\put(14,2){\line(0,1){1}} \put(14,4){\line(0,1){1}}
\put(14,8){\line(0,1){2}} \put(14,13){\line(0,1){1}}
\put(14,15){\line(0,1){1}} \put(14,17){\line(0,1){1}}
\put(14,21){\line(0,1){1}}

\put(15,1){\line(0,1){1}} \put(15,5){\line(0,1){1}}
\put(15,7){\line(0,1){1}} \put(15,9){\line(0,1){1}}
\put(15,13){\line(0,1){2}} \put(15,18){\line(0,1){1}}
\put(15,20){\line(0,1){1}} \put(15,22){\line(0,1){1}}

\put(16,1){\line(0,1){1}} \put(16,5){\line(0,1){2}}
\put(16,10){\line(0,1){1}} \put(16,12){\line(0,1){1}}
\put(16,14){\line(0,1){1}} \put(16,18){\line(0,1){2}}
\put(16,23){\line(0,1){1}}

\put(17,2){\line(0,1){1}} \put(17,4){\line(0,1){1}}
\put(17,6){\line(0,1){1}} \put(17,10){\line(0,1){2}}
\put(17,15){\line(0,1){1}} \put(17,17){\line(0,1){1}}
\put(17,19){\line(0,1){1}} \put(17,23){\line(0,1){1}}

\put(18,3){\line(0,1){1}} \put(18,7){\line(0,1){1}}
\put(18,9){\line(0,1){1}} \put(18,11){\line(0,1){1}}
\put(18,15){\line(0,1){2}} \put(18,20){\line(0,1){1}}
\put(18,22){\line(0,1){1}}

\put(19,3){\line(0,1){1}} \put(19,7){\line(0,1){2}}
\put(19,12){\line(0,1){1}} \put(19,14){\line(0,1){1}}
\put(19,16){\line(0,1){1}} \put(19,20){\line(0,1){2}}

\put(20,4){\line(0,1){1}} \put(20,6){\line(0,1){1}}
\put(20,8){\line(0,1){1}} \put(20,12){\line(0,1){2}}
\put(20,17){\line(0,1){1}} \put(20,19){\line(0,1){1}}

\put(21,5){\line(0,1){1}} \put(21,9){\line(0,1){1}}
\put(21,11){\line(0,1){1}} \put(21,13){\line(0,1){1}}
\put(21,17){\line(0,1){2}}

\put(22,5){\line(0,1){1}} \put(22,9){\line(0,1){2}}
\put(22,14){\line(0,1){1}} \put(22,16){\line(0,1){1}}

\put(23,6){\line(0,1){1}} \put(23,8){\line(0,1){1}}
\put(23,15){\line(0,1){1}}

\put(24,7){\line(0,1){1}}

\put(7,0){\line(1,0){1}}

\put(6,1){\line(1,0){1}} \put(8,1){\line(1,0){1}}
\put(15,1){\line(1,0){1}}

\put(5,2){\line(1,0){1}} \put(9,2){\line(1,0){2}}
\put(14,2){\line(1,0){1}} \put(16,2){\line(1,0){1}}

\put(5,3){\line(1,0){1}} \put(9,3){\line(1,0){1}}
\put(11,3){\line(1,0){1}} \put(13,3){\line(1,0){1}}
\put(17,3){\line(1,0){2}}

\put(4,4){\line(1,0){1}} \put(6,4){\line(1,0){1}}
\put(8,4){\line(1,0){1}} \put(12,4){\line(1,0){2}}
\put(17,4){\line(1,0){1}} \put(19,4){\line(1,0){1}}

\put(3,5){\line(1,0){1}} \put(7,5){\line(1,0){2}}
\put(12,5){\line(1,0){1}} \put(14,5){\line(1,0){1}}
\put(16,5){\line(1,0){1}} \put(20,5){\line(1,0){2}}

\put(3,6){\line(1,0){1}} \put(7,6){\line(1,0){1}}
\put(9,6){\line(1,0){1}} \put(11,6){\line(1,0){1}}
\put(15,6){\line(1,0){2}} \put(20,6){\line(1,0){1}}
\put(22,6){\line(1,0){1}}

\put(2,7){\line(1,0){1}} \put(4,7){\line(1,0){1}}
\put(6,7){\line(1,0){1}} \put(10,7){\line(1,0){2}}
\put(15,7){\line(1,0){1}} \put(17,7){\line(1,0){1}}
\put(19,7){\line(1,0){1}} \put(23,7){\line(1,0){1}}

\put(1,8){\line(1,0){1}} \put(5,8){\line(1,0){2}}
\put(10,8){\line(1,0){1}} \put(12,8){\line(1,0){1}}
\put(14,8){\line(1,0){1}} \put(18,8){\line(1,0){2}}
\put(23,8){\line(1,0){1}}

\put(1,9){\line(1,0){1}} \put(5,9){\line(1,0){1}}
\put(7,9){\line(1,0){1}} \put(9,9){\line(1,0){1}}
\put(13,9){\line(1,0){2}} \put(18,9){\line(1,0){1}}
\put(20,9){\line(1,0){1}} \put(22,9){\line(1,0){1}}

\put(2,10){\line(1,0){1}} \put(4,10){\line(1,0){1}}
\put(8,10){\line(1,0){2}} \put(13,10){\line(1,0){1}}
\put(15,10){\line(1,0){1}} \put(17,10){\line(1,0){1}}
\put(21,10){\line(1,0){1}}

\put(3,11){\line(1,0){2}} \put(8,11){\line(1,0){1}}
\put(10,11){\line(1,0){1}} \put(12,11){\line(1,0){1}}
\put(16,11){\line(1,0){2}} \put(21,11){\line(1,0){1}}

\put(3,12){\line(1,0){1}} \put(5,12){\line(1,0){1}}
\put(7,12){\line(1,0){1}} \put(11,12){\line(1,0){2}}
\put(16,12){\line(1,0){1}} \put(18,12){\line(1,0){1}}
\put(20,12){\line(1,0){1}}

\put(2,13){\line(1,0){1}} \put(6,13){\line(1,0){2}}
\put(11,13){\line(1,0){1}} \put(13,13){\line(1,0){1}}
\put(15,13){\line(1,0){1}} \put(19,13){\line(1,0){2}}

\put(2,14){\line(1,0){1}} \put(6,14){\line(1,0){1}}
\put(8,14){\line(1,0){1}} \put(10,14){\line(1,0){1}}
\put(14,14){\line(1,0){2}} \put(19,14){\line(1,0){1}}
\put(21,14){\line(1,0){1}}

\put(1,15){\line(1,0){1}} \put(3,15){\line(1,0){1}}
\put(5,15){\line(1,0){1}} \put(9,15){\line(1,0){2}}
\put(14,15){\line(1,0){1}} \put(16,15){\line(1,0){1}}
\put(18,15){\line(1,0){1}} \put(22,15){\line(1,0){1}}

\put(0,16){\line(1,0){1}} \put(4,16){\line(1,0){2}}
\put(9,16){\line(1,0){1}} \put(11,16){\line(1,0){1}}
\put(13,16){\line(1,0){1}} \put(17,16){\line(1,0){2}}
\put(22,16){\line(1,0){1}}

\put(0,17){\line(1,0){1}} \put(4,17){\line(1,0){1}}
\put(6,17){\line(1,0){1}} \put(8,17){\line(1,0){1}}
\put(12,17){\line(1,0){2}} \put(17,17){\line(1,0){1}}
\put(19,17){\line(1,0){1}} \put(21,17){\line(1,0){1}}

\put(1,18){\line(1,0){1}} \put(3,18){\line(1,0){1}}
\put(7,18){\line(1,0){2}} \put(12,18){\line(1,0){1}}
\put(14,18){\line(1,0){1}} \put(16,18){\line(1,0){1}}
\put(20,18){\line(1,0){1}}

\put(2,19){\line(1,0){2}} \put(7,19){\line(1,0){1}}
\put(9,19){\line(1,0){1}} \put(11,19){\line(1,0){1}}
\put(15,19){\line(1,0){2}} \put(20,19){\line(1,0){1}}

\put(4,20){\line(1,0){1}} \put(6,20){\line(1,0){1}}
\put(10,20){\line(1,0){2}} \put(15,20){\line(1,0){1}}
\put(17,20){\line(1,0){1}} \put(19,20){\line(1,0){1}}

\put(5,21){\line(1,0){2}} \put(10,21){\line(1,0){1}}
\put(12,21){\line(1,0){1}} \put(14,21){\line(1,0){1}}
\put(18,21){\line(1,0){1}}

\put(7,22){\line(1,0){1}} \put(9,22){\line(1,0){1}}
\put(13,22){\line(1,0){2}} \put(18,22){\line(1,0){1}}

\put(8,23){\line(1,0){1}} \put(15,23){\line(1,0){1}}
\put(17,23){\line(1,0){1}}

\put(16,24){\line(1,0){1}}

\scriptsize{  \put(4.18,19.23){39} \put(5.18,19.23){42}
\put(4.18,18.23){37} \put(6.18,19.23){45} \put(5.18,18.23){40}
\put(4.18,17.23){35} \put(6.18,18.23){43} \put(5.18,17.23){38}
\put(7.18,18.23){46} \put(6.18,17.23){41} \put(5.18,16.23){36}

\put(7.18,19.23){48} \put(8.18,19.23){51} \put(9.18,19.23){54}
\put(8.18,18.23){49} \put(4.18,16.23){33}

\put(7.18,17.23){44} \put(6.18,16.23){39} \put(5.18,15.23){34}
\put(7.18,16.23){42} \put(6.18,15.23){37} \put(8.18,16.23){45}
\put(7.18,15.23){40} \put(6.18,14.23){35} \put(8.18,15.23){43}
\put(7.18,14.23){38} \put(9.18,15.23){46} \put(8.18,14.23){41}
\put(7.18,13.23){36}

\put(10.18,19.23){57} \put(9.18,18.23){52} \put(8.18,17.23){47}
\put(10.18,18.23){55} \put(9.18,17.23){50} \put(11.18,18.23){58}
\put(10.18,17.23){53} \put(9.18,16.23){48} \put(11.18,17.23){56}
\put(10.18,16.23){51} \put(12.18,17.23){59} \put(11.18,16.23){54}
\put(10.18,15.23){49}

\put(11.18,19.23){60} \put(12.18,19.23){63} \put(13.18,19.23){66}
\put(12.18,18.23){61} \put(14.18,19.23){69} \put(13.18,18.23){64}
\put(15.18,19.23){72} \put(14.18,18.23){67} \put(13.18,17.23){62}

\put(4.18,15.23){31} \put(4.18,14.23){29} \put(5.18,14.23){32}
\put(4.18,13.23){27} \put(5.18,13.23){30} \put(4.18,12.23){25}
\put(6.18,13.23){33} \put(5.18,12.23){28} \put(4.18,11.23){23}

\put(9.18,14.23){44} \put(8.18,13.23){39} \put(7.18,12.23){34}
\put(9.18,13.23){42} \put(8.18,12.23){37} \put(10.18,13.23){45}
\put(9.18,12.23){40} \put(8.18,11.23){35} \put(10.18,12.23){43}
\put(9.18,11.23){38} \put(11.18,12.23){46} \put(10.18,11.23){41}
\put(9.18,10.23){36}

\put(6.18,12.23){31} \put(5.18,11.23){26} \put(4.18,10.23){21}
\put(6.18,11.23){29} \put(5.18,10.23){24} \put(7.18,11.23){32}
\put(6.18,10.23){27} \put(5.18,9.23){22} \put(7.18,10.23){30}
\put(6.18,9.23){25} \put(8.18,10.23){33} \put(7.18,9.23){28}
\put(6.18,8.23){23}

\put(4.18,9.23){19} \put(4.18,8.23){17} \put(4.18,7.23){15}
\put(5.18,8.23){20}

\put(12.18,16.23){57} \put(11.18,15.23){52} \put(10.18,14.23){47}
\put(12.18,15.23){55} \put(11.18,14.23){50} \put(13.18,15.23){58}
\put(12.18,14.23){53} \put(11.18,13.23){48} \put(13.18,14.23){56}
\put(12.18,13.23){51} \put(14.18,14.23){59} \put(13.18,13.23){54}
\put(12.18,12.23){49}

\put(15.18,18.23){70} \put(14.18,17.23){65} \put(13.18,16.23){60}
\put(15.18,17.23){68} \put(14.18,16.23){63} \put(16.18,17.23){71}
\put(15.18,16.23){66} \put(14.18,15.23){61} \put(16.18,16.23){69}
\put(15.18,15.23){64} \put(17.18,16.23){72} \put(16.18,15.23){67}
\put(15.18,14.23){62}

\put(16.18,19.23){75}

\put(17.18,19.23){78} \put(16.18,18.23){73} \put(18.18,19.23){81}
\put(17.18,18.23){76} \put(19.18,19.23){84} \put(18.18,18.23){79}
\put(17.18,17.23){74} \put(19.18,18.23){82} \put(18.18,17.23){77}
\put(19.18,17.23){80} \put(18.18,16.23){75}

\put(11.18,11.23){44} \put(10.18,10.23){39} \put(9.18,9.23){34}
\put(11.18,10.23){42} \put(10.18,9.23){37} \put(12.18,10.23){45}
\put(11.18,9.23){40} \put(10.18,8.23){35} \put(12.18,9.23){43}
\put(11.18,8.23){38} \put(13.18,9.23){46} \put(12.18,8.23){41}
\put(11.18,7.23){36}

\put(8.18,9.23){31} \put(7.18,8.23){26} \put(6.18,7.23){21}
\put(8.18,8.23){29} \put(7.18,7.23){24} \put(9.18,8.23){32}
\put(8.18,7.23){27} \put(7.18,6.23){22} \put(9.18,7.23){30}
\put(8.18,6.23){25} \put(10.18,7.23){33} \put(9.18,6.23){28}
\put(8.18,5.23){23}

\put(5.18,7.23){18} \put(4.18,6.23){13} \put(5.18,6.23){16}
\put(4.18,5.23){11} \put(6.18,6.23){19} \put(5.18,5.23){14}
\put(4.38,4.23){9} \put(6.18,5.23){17} \put(5.18,4.23){12}
\put(7.18,5.23){20} \put(6.18,4.23){15}

\put(14.18,13.23){57} \put(13.18,12.23){52} \put(12.18,11.23){47}
\put(14.18,12.23){55} \put(13.18,11.23){50} \put(15.18,12.23){58}
\put(14.18,11.23){53} \put(13.18,10.23){48} \put(15.18,11.23){56}
\put(14.18,10.23){51} \put(16.18,11.23){59} \put(15.18,10.23){54}
\put(14.18,9.23){49}

\put(17.18,15.23){70} \put(16.18,14.23){65} \put(15.18,13.23){60}
\put(17.18,14.23){68} \put(16.18,13.23){63} \put(18.18,14.23){71}
\put(17.18,13.23){66} \put(16.18,12.23){61} \put(18.18,13.23){69}
\put(17.18,12.23){64} \put(19.18,13.23){72} \put(18.18,12.23){67}
\put(17.18,11.23){62}

\put(19.18,16.23){78} \put(18.18,15.23){73} \put(19.18,15.23){76}
\put(19.18,14.23){74}

\put(13.18,8.23){44} \put(12.18,7.23){39} \put(11.18,6.23){34}
\put(13.18,7.23){42} \put(12.18,6.23){37}  \put(14.18,7.23){45}
\put(13.18,6.23){40} \put(12.18,5.23){35} \put(14.18,6.23){43}
\put(13.18,5.23){38}  \put(15.18,6.23){46} \put(14.18,5.23){41}
\put(13.18,4.23){36}

\put(10.18,6.23){31} \put(9.18,5.23){26} \put(8.18,4.23){21}
\put(10.18,5.23){29} \put(9.18,4.23){24} \put(11.18,5.23){32}
\put(10.18,4.23){27} \put(11.18,4.23){30} \put(12.18,4.23){33}

\put(7.18,4.23){18}

\put(16.18,10.23){57} \put(15.18,9.23){52} \put(14.18,8.23){47}
\put(16.18,9.23){55} \put(15.18,8.23){50}  \put(17.18,9.23){58}
\put(16.18,8.23){53} \put(15.18,7.23){48} \put(17.18,8.23){56}
\put(16.18,7.23){51}  \put(18.18,8.23){59} \put(17.18,7.23){54}
\put(16.18,6.23){49}

\put(19.18,12.23){70} \put(18.18,11.23){65} \put(17.18,10.23){60}
\put(19.18,11.23){68} \put(18.18,10.23){63} \put(19.18,10.23){55}
\put(18.18,9.23){61} \put(19.18,9.23){64} \put(19.18,8.23){62}

\put(15.18,5.23){44} \put(14.18,4.23){39} \put(15.18,4.23){42}
\put(16.18,4.23){45}

\put(18.18,7.23){57} \put(17.18,6.23){52} \put(16.18,5.23){47}
\put(18.18,6.23){55} \put(17.18,5.23){50}  \put(19.18,6.23){58}
\put(18.18,5.23){53} \put(17.18,4.23){48} \put(19.18,5.23){56}
\put(18.18,4.23){51}  \put(19.18,4.23){54}

\put(19.18,7.23){60}}
\end{picture}}
} \caption{(a) The coloring $\Psi_1$, and (b) the coloring
$\Psi_2$,  with $R=2$. The left upper corner coordinate is
$(-6,9)$.} \label{tiling}
\end{center}
\end{figure*}
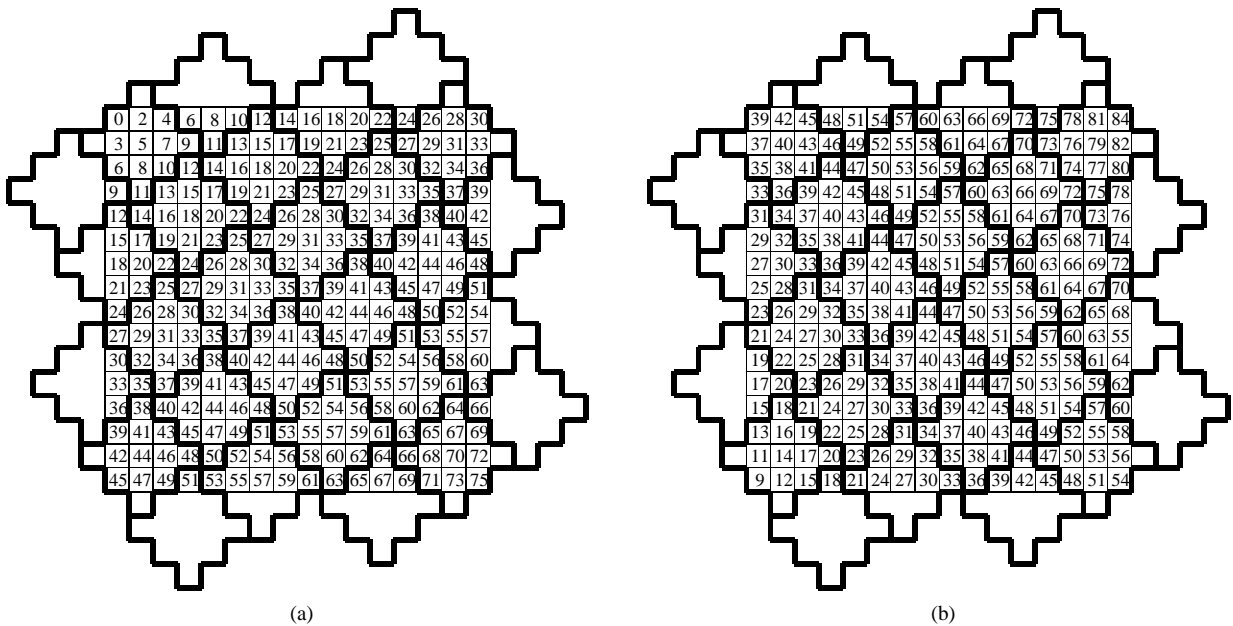

\subsection{Multidimensional codes}
\label{sec:multi_Lee}

A similar idea can be used for $D$-dimensional code of size $n_1
\times n_2 \times \cdots \times n_D$ correcting Lee sphere error
with radius one. We use $D$ different colorings of the array. For
a position $(i_1 , i_2 , \ldots , i_D )$, $0 \leq i_\ell \leq
n-1$, $1 \leq \ell \leq D$, the $s$-th coloring, $1 \leq s \leq
D$, assigns the color $\Sigma_{\ell=1}^D \ell \cdot i_{s+\ell-1}$,
where indices are residues modulo $D$ between 1 and $D$ (note that
each color in each coloring forms a perfect code~\cite{GoWe} when
we consider the coloring in $\Z^D$ and not just in the array).
Again, for each coloring, all the color numbers located in a Lee
sphere of the $D$-dimensional array are $2D+1$ consecutive
integers. Bit $k$ of the $s$-th component code is the binary sum
of all the bits colored with the integer $k$, by the $s$-th
coloring, in the $D$-dimensional codeword. It is easy to prove
that the coloring matrix is invertible. But, property (p.3) does
not hold in all dimensions. Therefore, if property (p.3) does not
hold for the $s$-th coloring we need to take a
$(2D+1)$-burst-correcting code instead of $(2D+1)$-burst-locator
code for the $s$-th dimension. The consequence will be a code with
larger redundancy. Clearly, there are many appropriate colorings
for each dimension. Hence, we can try to replace each coloring by
one for which property (p.3) holds. Of course we have to make sure
that the coloring matrix will be invertible. The excess redundancy
in this case is quadratic in $D$, compared to exponential in $D$
if we use the transformation $T$ and the code which corrects
$D$-dimensional box-error.

\section{Bursts with Limited Weight}
\label{sec:limitweight}

We now turn for a new kind of errors which are important in our
final goal of correcting an arbitrary cluster-error. Assume we
need to correct errors in a cluster of size $b$, where the number
of erroneous positions is at most $t$. We wish to find
one-dimensional and multidimensional codes which correct such
errors. These codes have an obvious application as we can expect
that in area that suffers from an event which caused errors, not
all positions were affected. Another important observation is that
any arbitrary cluster-error of size $b$ is located inside a Lee
sphere with radius $\lfloor \frac{b}{2} \rfloor$. Hence, this
cluster-error can be corrected if we can correct a Lee sphere
error with radius $\lfloor \frac{b}{2} \rfloor$, where the number
of erroneous positions is at most $b$.

\subsection{One-dimensional codes}
\label{sec:limited_one}

Throughout this subsection all codes are binary and $b$ is an odd
integer.

Let $\cC_1$ be a $t$-error-correcting code of length $b$, and
redundancy $r$. Let $H_1$ be its parity-check matrix of size
$r\times b$,
$$H_1=[h_0^1,h_1^1,\ldots,h_{b-1}^1].$$
Let $\cC_2$ be a $b$-burst-locator code, of length $n=2^m-1$, and
redundancy $m$. Its parity-check matrix, $H_2$, is of size
$m\times (2^m-1)$,
$$H_2=[h_0^2,h_1^2,\ldots,h_{n-1}^2].$$

Based on these two matrices, we construct a new parity-check
matrix $\cH$,
$$\cH=\left[ \begin{array}{ccccccc}
h_0^1 & h_1^1 & \cdots & h_{b-1}^1 & h_0^1 & \cdots &
h_{(n-1)(\text{mod} \ b)}^1 \\
h_0^2 & h_1^2 & \cdots & h_{b-1}^2 & h_b^2 & \cdots & h_{n-1}^2
\end{array}\right].$$
The $j$-th column, $0\leq j \leq n-1$, of the matrix $\cH$ will be
defined as the concatenation of the $j (\text{mod}\ b)$-th column
of the matrix $H_1$ and the $j$-th column of the matrix $H_2$,

\begin{lemma}
\label{lem:limit_burst} $\cH$ is a parity-check matrix for a code
$\cC$ of length $2^m-1$, correcting every burst of length $b$ with
at most $t$ erroneous positions.
\end{lemma}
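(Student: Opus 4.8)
The plan is to show that $\cH$ corrects the stated error class by verifying that any two distinct error patterns in the class produce distinct syndromes; standard syndrome decoding then corrects every such error. So I would take two error vectors $e_1,e_2$, each a burst of length $b$ carrying at most $t$ nonzero entries, assume $\cH e_1^T = \cH e_2^T$, and prove $e_1=e_2$. The whole argument splits along $\cH$: the top $r$ rows (built from $H_1$) will recover the \emph{value pattern} of the burst indexed by residue modulo $b$, and the bottom $m$ rows (built from $H_2$) will then pin down \emph{where} the burst starts. This mirrors the two-code philosophy of Section~\ref{sec:multi}: an error-locating/shape code together with a burst-locator code.

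First I would analyze the top block. Writing $e_1$ as a burst occupying the $b$ consecutive positions starting at some column $i_1$, the residues modulo $b$ of these positions run through $0,1,\dots,b-1$ exactly once, so each column of $H_1$ is hit at most once by the folding $j\mapsto j\bmod b$. Hence the top part of the syndrome collapses to $H_1 w_1^T$, where $w_1$ is the length-$b$ vector whose $\ell$-th coordinate is the value of $e_1$ at the unique burst position congruent to $\ell \pmod b$; likewise one gets $w_2$ for $e_2$. Both $w_1,w_2$ have weight at most $t$, so the equality of the top syndromes $H_1 w_1^T = H_1 w_2^T$ forces $w_1=w_2$, because $\cC_1$ is $t$-error-correcting and weight-$\le t$ vectors therefore have distinct syndromes under $H_1$. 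The key bookkeeping step, which I expect to be the main obstacle, is to recognize what $w_1=w_2$ means: for a burst of the base pattern $e=(e_0,\dots,e_{b-1})$ started at column $i$, the value at offset $k$ is $e_{(i+k)\bmod b}$, so the value-by-residue vector is exactly $e$, independent of $i$. Thus a common value-by-residue vector $w_1=w_2=:e$ says precisely that $e_1$ and $e_2$ are two members of the burst-locator family of the same base burst $e$, placed at columns $i_1$ and $i_2$ (with the internal cyclic shift coupled to the column modulo $b$, exactly as in Lemma~\ref{lem:locator}).

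With that reduction in hand, the bottom block closes the argument. If $i_1=i_2$, a common window together with a common value-by-residue vector gives $e_1=e_2$ outright. Otherwise $i_1\neq i_2$, and over GF(2) the vector $e_1-e_2=e_1+e_2$ is the difference of two distinct cyclic shifts of the single burst $e$. By the burst-locator property of $\cC_2$ proved in Lemma~\ref{lem:locator} — no codeword of $\cC_2$ equals the difference of two such cyclic shifts — we have $H_2(e_1-e_2)^T \neq 0$, which contradicts $\cH e_1^T = \cH e_2^T$. Hence $e_1=e_2$ in every case, the syndrome map is injective on the error class, and $\cH$ is a parity-check matrix of a code correcting every burst of length $b$ with at most $t$ erroneous positions.
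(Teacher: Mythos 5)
Your proposal is correct and follows essentially the same route as the paper: the folded top block forces the two value-by-residue vectors (the paper's $z^1,z^2$) to coincide via the $t$-error-correcting property of $\cC_1$, which shows the two bursts are cyclic shifts of one base pattern, and then Lemma~\ref{lem:locator} applied to the $H_2$ block rules out two distinct starting columns. The only cosmetic difference is that the paper carries out the residue bookkeeping explicitly with the offsets $\ell^1,\ell^2$ rather than stating the value-by-residue observation abstractly.
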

\begin{proof}
It is sufficient to prove that there are no two bursts of length
$b$ and weight at most $t$ that have the same syndrome. Assume the
contrary, that there are two words, of weight at most $t$, with
bursts of length $b$, $y^1 = (y_0^1 , \ldots , y_{n-1}^1)$ and
$y^2 = (y_0^2 , \ldots , y_{n-1}^2)$, which have the same
syndrome, i.e. $\cH(y^1)^T = \cH(y^2)^T$. This implies that

\begin{align}
\label{eq:part1} \sum_{i=0}^{n-1}y_i^1h_{[i]_b}^1=
\sum_{i=0}^{n-1}y_i^2h_{[i]_b}^1~,
\end{align}
\begin{align}
\label{eq:part2} \sum_{i=0}^{n-1}y_i^1h_i^2=
\sum_{i=0}^{n-1}y_i^2h_i^2~,
\end{align}
where $[j]_\ell$ is the unique residue of $j (\text{mod}~\ell)$
between 0 and $\ell-1$. Let $i_0^1$, $i_0^2$ be the first nonzero
bit in $y^1$, $y^2$, respectively. Therefore, (\ref{eq:part1}) can
be written as
\begin{align*}
\sum_{i=i_0^1}^{i_0^1+b-1}y_i^1h_{[i]_b}^1=
\sum_{i=i_0^2}^{i_0^2+b-1}y_i^2h_{[i]_b}^1.
\end{align*}
If we denote $\ell^1=b\left\lfloor\frac{i_0^1}{b}\right\rfloor,
\ell^2=b\left\lfloor\frac{i_0^2}{b}\right\rfloor$ then the last
equation can be written in the following way:
\begin{eqnarray}
& & \sum_{i=0}^{[i_0^1]_b-1}y_{\ell^1+b+i}^1h_{i}^1 +
\sum_{i=[i_0^1]_b}^{b-1}y_{\ell^1+i}^1h_{i}^1=
\nonumber\\
& & \sum_{i=0}^{[i_0^2]_b-1}y_{\ell^2+b+i}^2h_{i}^1+
\sum_{i=[i_0^2]_b}^{b-1}y_{\ell^2+i}^2h_{i}^1.\nonumber
\end{eqnarray}
This equation implies that for the words of length $b$
\begin{eqnarray}
z^1 & = &
\left(y_{\ell^1+b}^1,\ldots,y_{\ell^1+b+[i_0^1]_b-1}^1,y_{\ell^1+[i_0^1]_b}^1,
\ldots,y_{\ell^1+b-1}^1\right), \nonumber \\
z^2 & = &
\left(y_{\ell^2+b}^2,\ldots,y_{\ell^2+b+[i_0^2]_b-1}^2,y_{\ell^2+[i_0^2]_b}^2,
\ldots,y_{\ell^2+b-1}^2\right), \nonumber
\end{eqnarray}
we have $H_1 (z^1)^T = H_1 (z^2)^T$. The weight of these words is
at most $t$ and since $H_1$ is a parity-check matrix of a
$t$-error-correcting code we have that $z^1=z^2$. Therefore, the
two different words $y^1$ and $y^2$ contain the same burst of
length $b$ up to a cyclic permutation.
By Lemma~\ref{lem:locator} we have that $H_2 (y^1)^T \neq H_2
(y^2)^T$, contradicting~(\ref{eq:part2}).

Thus, $\cH$ is a parity-check matrix for a code of length $2^m-1$,
correcting every burst of length $b$ with at most $t$ erroneous
positions.
\end{proof}
\noindent {\bf Remark:} It is important to note that $\cH$ is
acting similarly to the constructions of previous sections. The
first part which is a concatenation of several copies of $H_1$ is
a parity-check matrix of a $b$-burst-correcting code with at most
$t$ erroneous positions. It finds the burst pattern up to a cyclic
shift. The second part of $\cH$, $H_2$, is a parity-check matrix
of a $b$-burst-locator which can find the location of a burst of
length $b$ given up to a cyclic shift. Lemma~\ref{lem:limit_burst}
gives a formal proof for these facts in terms of the generated
syndromes.

To summarize the parameters of the construction we need the
parameters of $t$-error-correcting codes of length $b$. We can use
BCH codes~\cite{McSl,Rot} for this purpose. If $\ell$ is the least
integer such that $b \leq 2^\ell-1$ then there exists a
$t$-error-correcting BCH code of length $2^\ell-1$ and redundancy
at most $t\ell$. By shortening we can obtain a
$t$-error-correcting code of length $b$ and redundancy $t \lceil
\text{log}_2 b \rceil$. Now, we can summarize our construction in
this section.

\begin{theorem}
\label{thm:limited_burst} The code $\cC$ has length $n=2^m-1$ and
it corrects every burst of length $b$ with at most $t$ erroneous
positions. If $b$ is an odd integer then the redundancy of the
code is $m+t \lceil \text{log}_2 b \rceil \leq \lceil \text{log}_2
n \rceil +t \lceil \text{log}_2 b \rceil$, and if $b$ is an odd
integer then the redundancy of the code is $m+t \lceil
\text{log}_2 (b+1) \rceil \leq \lceil \text{log}_2 n \rceil +t
\lceil \text{log}_2 (b+1) \rceil$.
\end{theorem}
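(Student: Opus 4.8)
The plan is to read this theorem as the bookkeeping summary of the construction built around the parity-check matrix $\cH$, so that essentially all of the correctness has already been established and only the accounting of length and redundancy remains. First I would record that the code $\cC$ defined by $\cH$ has length $n=2^m-1$ by construction, since each column of $\cH$ is a concatenation indexed by $0\le j\le n-1$, so $\cH$ has exactly $n$ columns (inherited from the burst-locator matrix $H_2$). The correction property for odd $b$ is then exactly Lemma~\ref{lem:limit_burst}: it shows that two distinct words of weight at most $t$ whose supports lie in bursts of length $b$ cannot share a syndrome under $\cH$, which is precisely what it means for $\cC$ to correct every burst of length $b$ with at most $t$ erroneous positions.

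Second, for even $b$ I would reduce to the odd case rather than reprove anything. A burst of length $b$ carrying at most $t$ erroneous positions is also a burst of length $b+1$ carrying at most $t$ erroneous positions, and $b+1$ is odd. Hence I would run the identical construction with $b$ replaced by $b+1$, taking $\cC_2$ to be a $(b+1)$-burst-locator code of length $2^m-1$ (which exists because $b+1$ is odd, by Lemma~\ref{lem:b-polynomial_gf_2} together with Lemma~\ref{lem:locator}) and $\cC_1$ a $t$-error-correcting code of length $b+1$. Applying Lemma~\ref{lem:limit_burst} with parameter $b+1$ gives the correction property verbatim for the even case.

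Third, I would obtain the redundancy by counting the rows of $\cH$. Since $\cH$ stacks the rows of $H_1$ above the rows of $H_2$, its number of check rows is the redundancy of $\cC_1$ plus the redundancy $m$ of $\cC_2$. For $\cC_1$ I would invoke the bound recorded just before the theorem: shortening a $t$-error-correcting BCH code of length $2^\ell-1$, with $\ell$ least such that $b\le 2^\ell-1$, gives a $t$-error-correcting code of length $b$ of redundancy at most $t\ell$, where for odd $b$ one has $\ell=\lceil\log_2 b\rceil$; the even case contributes $t\lceil\log_2(b+1)\rceil$ through the substitution $b\mapsto b+1$. Finally, because $n=2^m-1$ one has $m=\lceil\log_2 n\rceil$, which turns the totals $m+t\lceil\log_2 b\rceil$ and $m+t\lceil\log_2(b+1)\rceil$ into the two displayed bounds (with the second clause applying to even $b$).

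The only genuinely non-mechanical points, and where I would be most careful, are the following. First, I must confirm that the shortened BCH code of length $b$ really attains redundancy $t\lceil\log_2 b\rceil$; in particular, for odd $b$ the least $\ell$ with $b\le 2^\ell-1$ coincides with $\lceil\log_2 b\rceil$, which uses that an odd $b>1$ is never a power of two. Second, I must verify that the even-$b$ reduction does not inflate the erroneous-position count beyond $t$, so that the same $t$-error-correcting component $\cC_1$ still suffices after enlarging the window from $b$ to $b+1$. Since both the correctness and the syndrome argument are already supplied by Lemma~\ref{lem:limit_burst}, I expect no real obstacle beyond organizational care in matching the odd and even cases to the correct component lengths.
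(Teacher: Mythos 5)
Your proposal is correct and follows the same route the paper intends: Lemma~\ref{lem:limit_burst} supplies the correction property, the shortened BCH code gives the $t\lceil\log_2 b\rceil$ contribution of $H_1$, the $m$ rows of $H_2$ give the rest, and the even case (the theorem's second clause, whose ``odd'' is evidently a typo for ``even'') is handled by substituting $b+1$ throughout. Your two flagged checkpoints --- that $\lceil\log_2 b\rceil$ equals the least $\ell$ with $b\le 2^\ell-1$ only because an odd $b>1$ is not a power of two, and that enlarging the window to $b+1$ preserves the weight bound $t$ --- are exactly the points the paper leaves implicit, and both hold.
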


\subsection{Multidimensional codes}

Now, we want to design a two-dimensional code of size $n_1 \times
n_2$ capable of correcting a $(b_1 \times b_2 )$-cluster with
weight at most $t$, where $b_1 b_2$ is an odd integer. We use a
construction similar to the one used in previous sections. In this
construction we have to use two binary component codes. The
vertical one is a $(b_1 b_2 )$-burst-correcting code, of length
$n_1 b_2$ in which the weight of the burst is at most $t$. Such a
code was constructed in the previous subsection. The horizontal
component code is a $(b_1 b_2 )$-burst-locator code of length $n_2
b_1$. They are used in the same manner as they are used in
previous sections to correct a burst of size $b_1 \times b_2$.
Since the vertical code can find a burst only if the weight of it
is at most $t$, it follows that the two-dimensional code can
handle bursts of size $b_1 \times b_2$ only if their weight is at
most $t$. If $b_1b_2$ is an odd integer than by
Theorem~\ref{thm:limited_burst} we can take a vertical code of
length $n_1 b_2$, $2^m > n_1 b_2 \geq 2^{m-1}$, with at most $m+t
\lceil \text{log}_2 (b_1b_2) \rceil$ redundancy bits. The
horizontal code has length $2^{r-b_1b_2+1}-1$, $2^{r-b_1b_2+1}
> n_2 b_1 \geq 2^{r-b_1b_2}$, with $r-b_1b_2+1$ redundancy bits.
Therefore we have.

\begin{theorem}
\label{thm:2D_limited} The redundancy of the $n_1 \times n_2$ code
$\cC$ which is capable to correct a $(b_1 \times b_2)$-cluster,
$b_1b_2$ an odd integer, with weight less or equal $t$, is at most
$\lceil \text{log}_2 (n_1n_2) \rceil +(t+1) \lceil \text{log}_2
(b_1 b_2) \rceil +3$.
\end{theorem}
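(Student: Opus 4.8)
The plan is to confirm that the two-component construction described immediately above corrects every $(b_1\times b_2)$-cluster of weight at most $t$, and then to assemble the redundancy count from the parameters of the two component codes together with the length constraints stated just before the theorem.

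First I would dispatch correctness by reusing the decoding logic of Sections~\ref{sec:multi} and~\ref{sec:coloring}. Folding the $n_1\times n_2$ array column-wise modulo $b_2$ turns a $(b_1\times b_2)$-cluster into a length-$(b_1b_2)$ burst in a binary word of length $n_1 b_2$; because the $b_2$ columns of the cluster realize all residues modulo $b_2$ exactly once, the folding is injective on the cluster, so the burst still has weight at most $t$. Hence the vertical $(b_1b_2)$-burst-correcting code, which exists with redundancy $m+t\lceil\log_2(b_1b_2)\rceil$ by Theorem~\ref{thm:limited_burst}, recovers the burst pattern up to a cyclic shift; the horizontal $(b_1b_2)$-burst-locator code of length $n_2 b_1$ then pins down the starting position by Lemma~\ref{lem:locator}. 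As in the earlier constructions, a single extra parity bit over the spread-out horizontal redundancy positions resolves whether one of those bits was itself in error, and the two component outputs reconstruct the cluster.

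The substance of the proof is the redundancy tally. I would sum the vertical redundancy $m+t\lceil\log_2(b_1b_2)\rceil$, the horizontal redundancy $r-b_1b_2+1$, and the one parity bit. Writing $m_2=r-b_1b_2+1$, the length constraints $2^{m-1}\le n_1 b_2$ and $2^{m_2-1}\le n_2 b_1$ give, after taking logarithms, $m\le\log_2 n_1+\log_2 b_2+1$ and $m_2\le\log_2 n_2+\log_2 b_1+1$, hence $m+m_2\le\log_2(n_1n_2)+\log_2(b_1b_2)+2$.

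Finally, using $\log_2(n_1n_2)\le\lceil\log_2(n_1n_2)\rceil$ and $\log_2(b_1b_2)\le\lceil\log_2(b_1b_2)\rceil$, the total redundancy obeys
\[
m+m_2+t\lceil\log_2(b_1b_2)\rceil+1\le\lceil\log_2(n_1n_2)\rceil+(t+1)\lceil\log_2(b_1b_2)\rceil+3,
\]
which is exactly the claimed bound. I do not anticipate a genuine obstacle: correctness is inherited almost verbatim from the previous sections (the only new ingredient is that the vertical code is the limited-weight burst code of Theorem~\ref{thm:limited_burst} rather than an ordinary burst code), so the one point requiring care is the ceiling bookkeeping, namely keeping the two ``$+1$''s produced by the inequalities $2^{m-1}\le n_1b_2$ and $2^{m_2-1}\le n_2b_1$ and the ``$+1$'' from the parity bit distinct from the roundings of the logarithms, so that they add up to precisely the constant $3$.
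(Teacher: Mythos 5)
Your proposal is correct and takes essentially the same route as the paper: the paper describes the identical construction (a binary limited-weight $(b_1b_2)$-burst-correcting vertical code of length $n_1b_2$ with $2^m>n_1b_2\geq 2^{m-1}$, a binary $(b_1b_2)$-burst-locator horizontal code of length $n_2b_1$, plus one parity bit) and leaves the redundancy arithmetic implicit with a ``Therefore we have.'' Your tally $m+m_2+t\lceil\log_2(b_1b_2)\rceil+1$ combined with the two length constraints is exactly the intended computation, and your accounting of the three ``$+1$''s that produce the constant $3$ is the right reading of the bound.
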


The generalization for multidimensional codes is straight forward.
If the size of the cluster is an even integer then we will use an
appropriate coloring.

\section{Correction of Arbitrary Bursts}
\label{sec:arbitrary}

Finally, we want to design a code which corrects an arbitrary
$D$-dimensional cluster-error of size $b$. If $b$ is odd then the
cluster is located inside a Lee sphere with radius
$\frac{b-1}{2}$. If $b$ is even then either we consider it as a
cluster-error of size $b+1$ or slightly modify the constructions
for a small improvement on the efficiency. Modification based on
coloring can be also obtained if $b=3$ and $D \geq 3$.

We now show how the codes of Section~\ref{sec:limitweight} help to
correct an arbitrary cluster-error of size $b$. We start again
with two-dimensional codes. A cluster-error of size $b$ is located
inside a $b \times b$ square. Therefore, we can use a code of size
$n_1 \times n_2$ which corrects a $(b \times b)$-cluster with
weight $b$. The generalization for $D$-dimensional code is
straight forward.

An improvement in the excess redundancy is obtained if we consider
a code which corrects a smaller shape, with limited weight, in
which the $b$-cluster is located. For simplicity we will consider
only the case where $b=2R+1$ and the small shape is a Lee sphere
with radius $R$. We will use the colorings $\Psi_1$ and $\Psi_2$
given in Section~\ref{sec:Lee} and two component codes, the first
one is a $b^*$-burst-correcting code, $b^*=2R^2+2R+1$, in which
the weight of the burst is at most $b$, and the second one is
$b^*$-burst-locator code. Now, we apply the coloring method of
Section~\ref{sec:coloring}. The redundancy computations are
similar to the ones in subsection~\ref{subsec:tiling} and in
Theorem~\ref{thm:2D_limited} and we have the following result.

\begin{theorem}
\label{thm:upper_excess} the redundancy of an $n_1 \times n_2$
code capable to correct a cluster of size $b$ is at most $\lceil
\text{log}_2 (n_1n_2) \rceil +(b+1) \lceil 2 \text{log}_2 b \rceil
+3$.
\end{theorem}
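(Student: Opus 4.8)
The plan is to realize the size-$b$ cluster as a \emph{low-weight} burst in each of the two coloring codewords, so that the expensive full burst-correcting code used in subsection~\ref{subsec:tiling} can be replaced by the much cheaper limited-weight code of Section~\ref{sec:limitweight}. Write $b = 2R+1$; then the cluster lies inside a Lee sphere of radius $R$ and, having volume $b$, contributes at most $t = b$ erroneous positions. The Lee sphere itself has volume $b^* = 2R^2+2R+1$, and a direct computation gives the identity $b^* = (b^2+1)/2$, which is the source of the factor $2\log_2 b$ appearing below.

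First I would set up the geometry exactly as in subsection~\ref{subsec:tiling}, using the two colorings $\Psi_1(i_1,i_2) = (R+1)i_1 + Ri_2$ and $\Psi_2(i_1,i_2) = -Ri_1 + (R+1)i_2$. By Lemmas~\ref{lem:consecutive}, \ref{lem:invert}, and~\ref{lem:multiple} these satisfy (p.1), (p.2), and (p.3) respectively, and each coloring sends the radius-$R$ Lee sphere to $b^*$ consecutive colors. The only change is the choice of the first component code: instead of a full $b^*$-burst-correcting code I would take the code of Lemma~\ref{lem:limit_burst}, a binary code correcting a burst of length $b^*$ carrying at most $t=b$ erroneous positions, of length $2^{m_1}-1$ with $2^{m_1}$ just exceeding the number $(R+1)n_1 + Rn_2$ of colors of $\Psi_1$. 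Since $b^*$ is odd, Theorem~\ref{thm:limited_burst} gives this code redundancy $m_1 + b\lceil\log_2 b^*\rceil$. The second component code is the same $b^*$-burst-locator code as in subsection~\ref{subsec:tiling}, of length $2^{m_2}-1$ with $2^{m_2}$ just exceeding $(R+1)n_2 + Rn_1$ and redundancy $m_2$.

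For correctness I would run the decoding exactly as in the coloring method behind Theorem~\ref{thm:coloring}: the actual error is a weight-$\le b$ burst of length $b^*$ in the first codeword, which by Lemma~\ref{lem:limit_burst} is recovered up to a cyclic shift, and the burst-locator code then pins down its position; property (p.3) guarantees the two descriptions are cyclic shifts of one burst, and invertibility of the coloring matrix (p.2) turns the pair of component-code locations into the unique array position. Adding one overall parity bit for the displaced redundancy bits, as in the earlier constructions, completes the code.

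The remaining work, and the only delicate part, is the redundancy bookkeeping, which proceeds as in subsection~\ref{subsec:tiling} and Theorem~\ref{thm:2D_limited}. Summing the two component redundancies and the parity bit gives $m_1 + m_2 + b\lceil\log_2 b^*\rceil + 1$, with $m_1 + m_2 \le \log_2\big(((R+1)n_1+Rn_2)((R+1)n_2+Rn_1)\big) + 2$. The key identity is that this product equals $b^*\,n_1n_2 + R(R+1)(n_1^2+n_2^2)$, which for $n_1=n_2=n$ collapses to $(2R+1)^2 n^2 = b^2\,n_1n_2$, so that $m_1+m_2 \le \lceil\log_2(n_1n_2)\rceil + 2\log_2 b + 2$. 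Finally I would absorb the two $\log_2 b$-type contributions together: since $b^* = (b^2+1)/2 \le b^2$ we have $\lceil\log_2 b^*\rceil \le \lceil 2\log_2 b\rceil$, and $2\log_2 b \le \lceil 2\log_2 b\rceil$, so the total is at most $\lceil\log_2(n_1n_2)\rceil + (b+1)\lceil 2\log_2 b\rceil + 3$. The main obstacle is precisely this accounting --- matching the limited-weight penalty $b\lceil\log_2 b^*\rceil$ and the coloring-range term $2\log_2 b$ into the single factor $(b+1)\lceil 2\log_2 b\rceil$ while keeping the additive constant down to $3$; this is where the symmetric choice $n_1=n_2$ (equivalently, rhombus-shaped codewords) is implicitly used, exactly as in subsection~\ref{subsec:tiling}.
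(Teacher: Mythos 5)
Your proposal follows exactly the paper's route: apply the Lee-sphere colorings $\Psi_1,\Psi_2$ of subsection~\ref{subsec:tiling}, replace the first component code by the limited-weight $b^*$-burst-correcting code of Section~\ref{sec:limitweight} with $t=b$, keep the $b^*$-burst-locator as the second component, and combine the redundancy counts of subsection~\ref{subsec:tiling} and Theorem~\ref{thm:2D_limited}. The paper only sketches this computation, and your bookkeeping (including the identity $b^*=(b^2+1)/2$ and the implicit reliance on $n_1=n_2$ already present in the paper's own tiling bound) correctly fills in the omitted arithmetic to reach $\lceil\log_2(n_1n_2)\rceil+(b+1)\lceil 2\log_2 b\rceil+3$.
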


Generalization for $D$-dimensional code is done by using the
transformation $T$ of Section~\ref{sec:Lee}. Hence, there is some
loose of efficiency, but the performance is still better than the
performance of a code which corrects a $D$-dimensional cluster
error whose shape is a $D$-dimensional box-error with limited
weight $b$. The redundancy computation is similar to the ones in
previous sections.

The next question of interest is a lower bound on the excess
redundancy of a code which corrects an arbitrary cluster of size
$b$. A lower bound on the excess redundancy is $\text{log}_2
N_D(b)$, where $N_D(b)$ is the number of distinct patterns
considered as $D$-dimensional clusters of size $b$. Finding bounds
on $N_D(b)$ is an interesting geometrical combinatorial problems
of itself. A related question is to find number of distinct
clusters with size $b$, with no "holes", and exactly $b$ erroneous
positions. This problem is the same as finding the number of
$b$-polyominos. For $D=2$ the known lower bound on their number is
$3.981037^b$~\cite{BMRR} and the known upper bound is
$4.649551^b$~\cite{KlRi}. Therefore, we have

\begin{theorem}
\label{thm:lower_excess} The excess redundancy of a
two-dimensional code, which is capable to correct an arbitrary
cluster of size $b$, is at least $b \cdot \text{log}_2 3.981037~$.
\end{theorem}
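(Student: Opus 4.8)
The plan is to derive the bound in two moves: first turn the informal ``excess redundancy $\geq\log_2 N_2(b)$'' recorded in the paragraph before the statement into a counting inequality, and then lower-bound $N_2(b)$ by the number of planar polyominoes. For the first move I would view an $n_1\times n_2$ codeword as a binary word of length $N=n_1n_2$ and let $E$ be the set of all error words supported on a single cluster of size $b$ (where $N_2(b)$ counts the admissible cluster shapes up to translation). A code that corrects every such cluster must give the words of $E\cup\{0\}$ pairwise distinct syndromes, hence $2^{\,r}\geq\abs{E}+1$. Fixing a canonical anchor cell for each of the $N_2(b)$ shapes and sliding it across the array produces, for each shape, at least $(n_1-b+1)(n_2-b+1)$ distinct error words (every size-$b$ cluster fits in a $b\times b$ box), and different (anchor, shape) pairs give different words; thus $\abs{E}\geq N_2(b)\,(n_1-b+1)(n_2-b+1)$. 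Taking base-$2$ logarithms and subtracting $\lceil\log_2 N\rceil$ leaves, to leading order in the array size, the bound $\log_2 N_2(b)$ on the excess redundancy.

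The second move is to count shapes. Every fixed $b$-polyomino is a connected, hole-free set of exactly $b$ cells, hence one of the shapes a cluster of size $b$ may assume, and two distinct fixed polyominoes are inequivalent under translation; therefore $N_2(b)$ is at least the number $A_b$ of fixed $b$-polyominoes. Here I would quote the cited enumeration bound: by~\cite{BMRR}, $A_b\geq 3.981037^b$. Chaining the inequalities gives
\begin{align*}
\text{excess redundancy}\ \geq\ \log_2 N_2(b)\ \geq\ \log_2 A_b\ \geq\ b\cdot\log_2 3.981037,
\end{align*}
which is the assertion of the theorem.

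I expect the genuine weight of the result to sit entirely in the external combinatorial input $A_b\geq 3.981037^b$, a deep polyomino-enumeration estimate that I would cite rather than reprove. Within the present argument only two routine points need care: that sliding a canonical shape over the array really yields distinct error words (so the factor $(n_1-b+1)(n_2-b+1)$ is a valid lower bound and not an overcount), and that hole-free $b$-cell connected shapes are genuinely among the clusters the code must correct, so that the inclusion $N_2(b)\geq A_b$ holds. I would also flag that the polyomino estimate is sharp only asymptotically (since $A_b^{1/b}$ approaches the growth constant from below, the inequality $A_b\geq 3.981037^b$ fails for small $b$), and that the ceiling in the definition of excess redundancy contributes a bounded additive loss; the theorem is therefore best understood as a leading-order bound for large $b$, for which both effects are negligible against the linear term $b\cdot\log_2 3.981037$.
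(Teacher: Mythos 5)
Your argument is essentially the paper's: Theorem~\ref{thm:lower_excess} is stated there as an immediate consequence of the preceding paragraph, namely that the excess redundancy is at least $\log_2 N_2(b)$, that $N_2(b)$ is at least the number of fixed $b$-polyominoes, and the cited lower bound $3.981037^b$ of~\cite{BMRR}. The packing argument you supply for the first inequality, and your caveats that this inequality holds only up to an additive term vanishing in the array size and that $3.981037$ bounds the polyomino growth constant (so the per-$b$ inequality can fail for small $b$), are correct refinements of details the paper leaves implicit.
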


Theorem~\ref{thm:lower_excess} is a small improvement of the
trivial lower bound (which is $b$) on the excess redundancy. But,
the gap between the orders of the lower bound $O(b)$ and the upper
bound $O(b \cdot \text{log}_2b)$ is still large.

\section{Representation with Parity-Check Matrices}
\label{sec:parity-check}

All codes which were discussed in the previous sections are using
auxiliary linear codes for the computation of the redundancy bits
in the codewords and to conduct the proper decoding. It is not
difficult to see that all the multidimensional codes are linear,
by noting that the bit by bit addition of two codewords is another
codeword. In this section we will explain how to present similar
codes with parity-check matrices. This will be done by considering
all the component codes as binary codes.\\
{\bf Remark:} When the component code is a linear burst-correcting
code of length $n$ over GF($2^b$) we can consider it as a binary
code of size $n \times b$, as we actually use it. We note that the
bit by bit addition of two codewords of size $n \times b$ is also
a codeword and hence the code is a binary linear code.

In these new codes, which will be constructed, we won't need the
redundancy bit of the third subset and hence the overall
redundancy will be reduced by one. The idea is to use the
technique of subsection~\ref{sec:limited_one}. The parity-check
matrix $H_1$, in subsection~\ref{sec:limited_one}, was used to
find the pattern of the error, and the parity-check matrix $H_2$
was used to find the location of a burst given its pattern up to a
cyclic shift.

This technique can be simply generalized for two-dimensional and
multidimensional codes. We will describe it only for
two-dimensional codes. Each of our constructions for
two-dimensional codes uses two components codes $\cC_1$ and
$\cC_2$. Assume that $\cC_1$ and $\cC_2$ are binary codes with
$r_1 \times n_1$ parity-check matrix $H_1$ and $r_2 \times n_2$
parity-check matrix $H_2$, respectively. We construct a
two-dimensional parity-check matrix $\cH$ for our two-dimensional
code as follows. $\cH$ is a two-dimensional matrix whose shape is
the shape of the two-dimensional codeword. In each position
$\Upsilon$ in this two-dimensional shape, $\cH$ has a column
vector of length $r_1+r_2$ which is a concatenation of a column
from $H_1$ and a column from $H_2$. The column from $H_1$ is
$\xi_1$ if $\xi_1$ is the color given to position $\Upsilon$ by
the first coloring (as mentioned before all our constructions can
be represented by the coloring method). Similarly, the column from
$H_2$ is $\xi_2$ if $\xi_2$ is the color given to position
$\Upsilon$ by the second coloring (to avoid confusion, in the
coloring method all positions are assigned with a color, by each
coloring, including the redundancy bits). Now, we can use a proof
similar to the proof of Lemma~\ref{lem:limit_burst} to show that
$\cH$ is a parity-check matrix for the required code. A
generalization for multidimensional codes is straightforward.

Finally, note that the same method can be also applied to the
three constructions presented in~\cite{BBZS}. Hence, we can supply
a parity-check matrix for each code constructed by these three
constructions.

\section{Conclusion and Open Problems}
\label{sec:conclusion}

As we wrote in the abstract, the main results of the paper are
summarized as follows:

\begin{enumerate}
\item A construction of small redundancy multidimensional codes
capable to correct a box-error. These codes and the box-error have
considerably more flexible parameters from previously known
constructions.

\item A novel method based on $D$ colorings of the $D$-dimensional
space for constructing $D$-dimensional codes correcting a
$D$-dimensional cluster-error of various shapes.

\item A transformation of the $D$-dimensional space into another
$D$-dimensional space in a way that a $D$-dimensional Lee sphere
is transformed into a shape located in a $D$-dimensional box of a
relatively small size. This transformation enables us to use the
previous constructions to correct a $D$-dimensional error whose
shape is a $D$-dimensional Lee sphere.

\item Applying the coloring method to correct more efficiently a
two-dimensional error whose shape is a Lee sphere.

\item A construction of one-dimensional and multidimensional codes
capable to correct a burst-error of length $b$ in which the number
of erroneous positions is $t$.

\item Applying the construction for correction of a Lee sphere
error and the construction for correction of a cluster-error with
small number of erroneous positions, to correct a $D$-dimensional
arbitrary cluster-error.
\end{enumerate}

All the codes we have constructed are binary. We didn't discuss
cluster-correcting codes over GF($q$), but most of our results can
be generalized straightforward for codes over GF($q$). We have
omitted some tedious proofs. The interested reader is referred
to~\cite{Yaa07} to see some of these proofs.

Clearly, our constructions do not cover all possible parameters.
Moreover, the redundancy of our codes is close to optimal, but not
optimal, so there is lot of ground for possible improvements with
possibly new construction methods. In fact, the main disadvantage
of our methods is that, for large $b$, the lengths of the known
$b$-burst-correcting codes and $b$-burst-locator codes are very
large.

Another question we didn't discuss in this paper is constructions
for cluster-correcting codes which correct a small cluster. This
question was considered in~\cite{ScEt}. The construction of
$D$-dimensional cluster-correcting code which corrects a
$D$-dimensional Lee sphere error with radius one is important in
this connection. As we mentioned in subsection~\ref{sec:multi_Lee}
we don't know the maximum number of $(2D+1)$-burst-locator codes,
among the $D$ component codes, that we can use.

The next question is how to implement the coloring method for
correction of multidimensional Lee sphere errors with dimension
greater than two and radius greater than one?

Finally, we still don't know and even don't have any indication
what should be the excess redundancy of an optimum code which
corrects an arbitrary multidimensional cluster-error. Even the
two-dimensional case is far from being resolved. The gap between
the lower and upper bounds of Theorem~\ref{thm:lower_excess} and
Theorem~\ref{thm:upper_excess}, respectively, is quite large and
we believe that both bounds can be improved.

\section*{Acknowledgment}

The authors wish to thank Khaled Abdel-Ghaffar for
providing~\cite{Abd86}. They also thank Khaled Abdel-Ghaffar and
Jack Wolf for helpful discussions.


%
%
%
%
%
%

\end{document}